\title{The Permanent, Graph Gadgets and counting solutions for certain types of planar formulas}
\author{Christian Schridde}
\address{Federal Office for Information Security\\D-53179 Bonn, Germany}
\email{ChristianSchridde@googlemail.de}
\newcommand{\sat}{\mathsf{SAT}}
\newcommand{\dreisat}{\mathsf{3SAT}}
\newcommand{\nestedsat}{\mathsf{NESTED}\text{-}\mathsf{SAT}}
\newcommand{\acyclicsat}{\mathsf{aCYCLIC}\text{-}\mathsf{SAT}}
\newcommand{\forestdreisat}{\mathsf{FOREST}\text{-}\mathsf{3SAT}}
\newcommand{\pdreisat}{\mathsf{PLANAR}\text{-}\mathsf{3SAT}}
\newcommand{\pnpksat}{\mathsf{PN}\text{-}\mathsf{PLANAR}\text{-}\mathsf{kSAT}}
\newcommand{\pnpdreisat}{\mathsf{PN}\text{-}\mathsf{PLANAR}\text{-}\mathsf{3SAT}}
\newcommand{\classBPP}{\bold{BPP}}
\newcommand{\usat}{\mathsf{Unique}\mathrm{-}\mathsf{SAT}}
\newcommand{\uksat}{\mathsf{Unique}\mathrm{-}\mathsf{kSAT}}
\newcommand{\udreisat}{\mathsf{Unique}\mathrm{-}\mathsf{3SAT}}
\newcommand{\W}{\operatorname{W}}
\newcommand{\classP}{\mathsf{P}}
\newcommand{\classNP}{\mathsf{NP}}
\newcommand{\classRP}{\mathsf{RP}}
\newcommand{\detm}{\text{det}}
\newcommand{\perm}{\text{perm}}
\newcommand{\define}{\stackrel{\text{def}}{=}}
\newcommand{\mA}{\bold{A}}
\newcommand{\mM}{\bold{M}}
\newcommand{\mB}{\bold{B}}
\newcommand{\grG}{\mathcal{G}}
\newcommand{\dji}{\mathsf{DJi}}
\newcommand{\tp}{\mathsf{t}}
\newcommand{\gad}{\mathfrak{G}}
\renewcommand*\env@matrix[1][c]{\hskip -\arraycolsep
  \let\@ifnextchar\new@ifnextchar
  \array{*\c@MaxMatrixCols #1}}
\newtheorem{theorem}{Theorem}[section]
\newtheorem{lemma}[theorem]{Lemma}
\newtheorem{corollary}[theorem]{Corollary}
\newtheorem{conjecture}[theorem]{Conjecture}
\theoremstyle{definition}
\newtheorem{defin}[theorem]{Definition}
\theoremstyle{proposition}
\newtheorem{proposition}[theorem]{Proposition}
\begin{document}

\maketitle

\begin{abstract}
	In this paper, we build on the idea of Valiant \cite{Val79a} and Ben-Dor/Halevi \cite{Ben93}, that is, to count the number of satisfying solutions of a boolean formula via computing the permanent of a specially constructed matrix. We show that the Desnanot-Jacobi identity ($\dji$) prevents Valiant's original approach to achieve a parsimonious reduction to the permanent over a field of characteristic two. As the next step, since the computation of the permanent is $\#\classP$-complete, we make use of the equality of the permanent and the number of perfect matchings in an unweighted graph's bipartite double cover. Whenever this bipartite double cover (BDC) is planar, the number of perfect matchings can be counted in polynomial time using Kasteleyn's algorithm \cite{Kas67}. To enforce planarity of the BDC, we replace Valiant's original gadgets with new gadgets and describe what properties these gadgets must have. We show that the property of \textit{circular planarity} plays a crucial role to find the correct gadgets for a counting problem. To circumvent the $\dji$-barrier, we switch over to fields $\mathbb{Z}/p\mathbb{Z}$, for a prime $p > 2$. 

With this approach we are able to count the number of solutions for $\forestdreisat$ formulas in randomized polynomial time.  Finally, we present a conjecture that states which kind of generalized gadgets can not be found, since otherwise one could prove $\classRP = \classNP$.  The conjecture establishes a relationship between the determinants of the minors of a graph $\grG$'s adjacency matrix and the \textit{circular planar} structure of $\grG$'s BDC regarding a given set of nodes.
\end{abstract}

\section{Introduction}

Given a $(n\times n)$-matrix $\mA = a_{ij}$, $1\leq i,j \leq n$, its determinant is defined as 
\begin{equation}
	\detm(\mA) = \sum_{\sigma \in S_n} \text{sgn}(\sigma)\prod^n_{i=1}a_{i,\sigma_i}
\end{equation}
and the permanent is defined as
\begin{equation}
	\perm(\mA) = \sum_{\sigma \in S_n}\prod^n_{i=1}a_{i,\sigma_i}
\end{equation}
The difference in notation is minimal. The determinant considers the sign of the permutations $\sigma$ whereof the permanent does not. 
This little difference is responsible for the huge gap between the necessary resources that are needed in order to compute the actual value of these functions. The determinant can be evaluated in time that is polynomial in the dimension $n$, whereof the permanent is only computable in time that is exponential in $n$, e.g., using the $\mathcal{O}(n2^n)$ algorithm of Ryser. In the year 1979 Leslie Valiant \cite{Val79a,Val79b} showed that the permanent is even $\#\classP$-complete and he shows the same for any computation of $\perm(\mA)\pmod{p}$, $p > 2$. In his proof, Valiant created a graph $\mathcal{G}$ from a given boolean formula $\Phi$ such that the number of $\Phi$'s satisfying assignments can be obtained from the permanent of $\mathcal{G}$'s adjacency matrix by using Proposition \ref{prop1}.

\begin{defin}[Cyclic-cover]
 A \textit{cyclic-cover} of a weighted directed graph $\grG = (V,E)$ is a subset $R \subseteq E$ that forms a collection of node-disjoint directed cycles that cover all the nodes of $\grG$.$\blacktriangleleft$
\end{defin}

\begin{defin}[Weight]
 The weight of a cyclic cover $R$, denoted by $\W(R)$, is the product of the weights of the edges in $R$.$\blacktriangleleft$
\end{defin}

\begin{proposition}\label{prop1}
 Given a graph $\grG$ and its adjacent matrix $\mA$. Let $\mathcal{R}$ be the set of all cyclic covers of $\grG$, then $\perm(\mA) = \sum_{R \in \mathcal{R}} \W(R)$.
\end{proposition}

The essential ingredients for Valiant's proof are \textit{graph gadgets}. These gadgets ensure that the weights of the cyclic covers depend on whether they are part of a satisfying assignment or not. In a subsequent work, Ben-Dor and Halevi \cite{Ben93} presented another, simpler proof that comes along with another type of graph gadget. 

\textbf{Contribution.} In this paper, we extent the idea of Valiant and Ben-Dor/Halevi by means of defining new gadgets. In Valiant's approach, his gadgets are designed to make the permanent of the graph's adjacency matrix contain the number of satisfying solutions of a boolean formula $\Phi$. Since the permanent is equal to the determinant in fields of characteristic two, we first show, that gadgets that would allow to count the number of satisfying solutions (at least of formulas from $\usat$) can not be found because of the Desnanot-Jacobi identity.

Since despite the case modulo $2$, the permanent can not be evaluated in general, we then turn our attention to the relationship of the permanent and the number of perfect matchings in a graph's bipartite double cover. We search for gadgets that have the potential to determine satisfiability over $\mathbb{Z}/p\mathbb{Z}$, for $p > 2$, whenever the bipartite double cover is planar. We constructed a new XOR-gadget\footnote{That is the name of one of the gadgets Valiant uses in his proof.} that has two mandatory properties:
\begin{itemize}
 \item It is an unweighted graph, i.e, its adjacency matrix is a $(0/1)$-matrix. This is in contrast to the XOR-gadget of Valiant (and the CLAUSE-gadget of Ben-Dor and Halevi) which contains values from $\{-1,0,1,2,3\}$. Having a $(0/1)$-graph is mandatory to apply Proposition \ref{propPermPerf}.
 \item It fulfills a weakened set of permanent rules\footnote{Valiant introduces a set of permanent rules that an XOR-gadget must fulfil in order to make his proof work.} over a small finite group  rather than the integers.
\end{itemize}
In [\cite{Val79a}, Lemma 3.3] Valiant showed how to transform a matrix $\mA$ with positive integers into another matrix $\mA'$ that only has values from $\{0,1\}$ and $\perm(\mA) = \perm(\mA')$. Since his XOR-gadget does contain edges with weight $-1$, he could not apply the transformation. However, Ben-Dor and Halevi \cite{Ben93} showed the following reduction chain:
\begin{equation*}
 \text{IntPerm} \propto \text{NoNegPerm} \propto \text{2PowersPerm} \propto \text{0/1-Perm}
\end{equation*}
whereof the permanent value is preserved along the chain. The drawback of their approach is, that this reduction chain increases the number of nodes heavily, which reduces the chances to reach planarity, which is mandatory to apply the Kasteleyn's counting algorithm. They did not consider to look at the planarity of their resulting graph nor they tried to further work with it.

Although so far our new XOR-gadget already gives the theoretical potential to determine satisfiability, it turns out that most of the time Kasteleyn's algorithm can not be applied\footnote{At least, we draw this conclusion from our implementation, that shows that for large formulas the bipartite double cover is almost always non-planar.}. The reason is, that the BDC is almost always non-planar. Consequently, we tried to find methods that increases the chance that the BDC will be planar.
Even though the bipartite double cover operation changes planarity in both ways, it turns out that planar graphs with low connectivity do more often have a planar bipartite covers. Following this fact, we introduce a method to make $\grG$ planar by the help of two further new XOR-gadgets and Lichtenstein's reduction \cite{Lic82} to pn-planar\footnote{A certain class of boolean formulas that is also NP-complete.} formulas. 

Finally, one notices that the fraction of planar BDCs is still negligible. To circumvent the problem of getting a planar bipartite double cover only accidentally, in the last sections of the paper we turn towards generalization. We describe what gadgets have to be used therewith the bipartite double cover will be \textit{always} planar if the input is a pn-planar formula. We then show which subclasses of these formulas can be counted with that approach, since appropriate gadgets exists. And, even more important, what prevents the generalized approach from counting solutions of arbitrary pn-planar formulas.

At the end, we formulate a conjecture which states what kind of gadgets could not be found since otherwise one could prove $\classRP = \classNP$. The conjecture establishes a relationship between the determinants of the minors of a graph $\grG$'s adjacency matrix and the \textit{circular planar} structure of $\grG$'s bipartite double cover regarding a certain set of nodes.

\subsection{Preliminaries and Definitions}

\begin{defin}[Decision problem: $\sat$] Given a boolean formula, determine its satisfiability. $\blacktriangleleft$ \end{defin}

\begin{defin}[Decision problem: $\usat$] Given a boolean formula that has \textit{at most} one satisfying assignment, determine its satisfiability. $\blacktriangleleft$ \end{defin}

\begin{defin}[Decision problem: k$\sat$] Given a boolean formula that is in CNF with at most $k$ literals per clause (kCNF), determine its satisfiability.  $\blacktriangleleft$ \end{defin}

\begin{defin}[Decision problem: $\uksat$] Given a boolean formula in kCNF that has \textit{at most} one satisfying assignment, determine its satisfiability. $\blacktriangleleft$ \end{defin}

\begin{defin}[Discrete counting problem: $\#_p$ k$\sat$] Given a boolean formula from k$\sat$, compute the number of satisfying solutions modulo $p$.  $\blacktriangleleft$ \end{defin}

For $p=2$, the $\#_p$ k$\sat$ problem is equivalent to its parity problem $\otimes$k$\sat$.

\begin{theorem}[Valiant-Vazarani \cite{Val85}]\label{theorem1}
 If there is a polynomial-time algorithm solving the decision problem $\usat$ than $\classNP = \classRP$.
\end{theorem}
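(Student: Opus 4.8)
The plan is to prove the implication directly: assuming a polynomial-time decision procedure $A$ for $\usat$, I would build a randomized polynomial-time algorithm for $\sat$ with one-sided error, which places $\sat$ — and hence, by $\classNP$-completeness, all of $\classNP$ — inside $\classRP$. Since $\classRP \subseteq \classNP$ holds unconditionally and $\classRP$ is closed under polynomial-time many-one reductions, this yields $\classNP = \classRP$. The entire argument therefore reduces to a single randomized reduction from general satisfiability to the promise problem $\usat$, in which a formula with an unknown number of solutions is transformed into one that, with non-negligible probability, has at most one solution while preserving (un)satisfiability.

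First I would fix the isolation mechanism. Given a formula $\Phi$ over $n$ variables $x_1, \dots, x_n$ with solution set $S \subseteq \{0,1\}^n$, I sample an integer $k$ uniformly from $\{0, 1, \dots, n+1\}$, a random matrix $M \in \mathbb{F}_2^{k \times n}$ and a random vector $c \in \mathbb{F}_2^k$, and form $\Phi' := \Phi \wedge (Mx = c)$, where the $k$ affine parity constraints are simply further boolean constraints (and can be expanded into CNF by the standard auxiliary-variable encoding, which preserves the solution count, if a CNF form is desired). The map $x \mapsto Mx + c$ is a pairwise-independent, uniform hash family onto $\{0,1\}^k$, so restricting $S$ to those $x$ with $Mx = c$ behaves like distributing the solutions into $2^k$ buckets with the second-moment regularity on which the whole argument rests.

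The technical heart — and the step I expect to be the main obstacle — is the isolation lemma: for a suitable $k$ the restricted formula $\Phi'$ has exactly one solution with constant probability. Writing $X = |\{x \in S : Mx = c\}|$ and $\mu = E[X] = |S|\cdot 2^{-k}$, pairwise independence gives $E\!\left[\binom{X}{2}\right] = \binom{|S|}{2} 2^{-2k} \le \mu^2/2$, and the Bonferroni inequalities yield $\Pr[X = 1] = \Pr[X \ge 1] - \Pr[X \ge 2] \ge \mu - 2E\!\left[\binom{X}{2}\right] \ge \mu(1 - \mu)$. Whenever $S \ne \emptyset$ some integer $k \in \{0,\dots,n+1\}$ forces $\mu \in [1/4, 1/2]$, and then $\Pr[X = 1] \ge 3/16$. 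Since this correct $k$ is guessed with probability $1/(n+2)$, the unconditional probability that $\Phi'$ has a unique solution is $\Omega(1/n)$ whenever $\Phi$ is satisfiable.

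Finally I would assemble the $\classRP$ algorithm and verify the one-sided error. If $\Phi$ is unsatisfiable then $S = \emptyset$, so $\Phi'$ is unsatisfiable for every choice of the randomness; it trivially meets the $\usat$ promise and $A$ answers ``no'', producing no false positives. If $\Phi$ is satisfiable then with probability $\Omega(1/n)$ the formula $\Phi'$ has a single solution, meets the promise, and $A$ answers ``yes''. Here one must dispose of the subtlety that when $\Phi'$ has several solutions the promise is violated and $A$'s output is unconstrained; this is harmless, because a spurious ``yes'' can occur only on a satisfiable $\Phi$ — where ``yes'' is in fact correct — and never on an unsatisfiable one. Repeating the reduction $\Theta(n)$ times with independent randomness and answering ``yes'' iff some run does amplifies the acceptance probability on satisfiable instances to a constant, completing the $\classRP$ algorithm for $\sat$.
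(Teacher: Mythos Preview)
Your proposal is correct and follows exactly the approach the paper indicates: the paper does not give its own proof of this cited theorem but only records the key randomized transformation $f$ with $\Pr[f(\Phi)\text{ has a unique assignment}]\ge 1/(8n)$ when $\Phi$ is satisfiable and $\Pr[f(\Phi)\text{ is satisfiable}]=0$ otherwise, which is precisely the affine-hashing isolation lemma you spell out in detail. Your $\Omega(1/n)$ bound, one-sided error analysis, and amplification match the standard Valiant--Vazirani argument the paper is invoking.
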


The class $\usat$ handles boolean formulas that have at most a single satisfying assignment. Surprisingly, this does not help \textit{in general} to decide their satisfiability. In the proof of Theorem \ref{theorem1}, Valiant and Vazarani showed, that one can transform an arbitrary boolean formula $\Phi$ with $n$ variables, via an efficient computable function $f$, into another formula such that
\begin{eqnarray*}
 	\Phi \; \text{is satisfiable} & \Rightarrow  & \text{Pr}[ f(\Phi) \; \text{has a unique assignment}] \geq \frac{1}{8n} \\ 
	\Phi \; \text{is not satisfiable} & \Rightarrow & \text{Pr}[ f(\Phi) \; \text{is satisfiable}] = 0 
\end{eqnarray*}

Formulas from $\usat$ can represent problems from number theory or cryptology, e.g., it is possible to express the \textit{factorization problem} as the task to find the solution of a formula that only has one assignment and that involves the binary representation of the two factors.

\begin{defin}[Incidence graph of a formula]
 Let $\Phi$ be a CNF formula with variables $v_1,\ldots,v_n$ and clauses $c_1,\ldots,c_m$. Then the incidence graph $\mathcal{G}_\Phi$ is defined as
 \begin{align*}
	\mathcal{G}_\Phi & = (V,E) \\
	V 		 & = \{v_1,\ldots,v_n,c_1,\ldots,c_m\} \\
	E		 & = \{(v_i,c_j)| v_i\;\text{or}\;\neg v_i\;\text{occurs in}\;c_j\}
 \end{align*}
\end{defin}

\begin{defin}[$\pdreisat$]\label{defPLANAR3SAT}
 Given a 3CNF with a planar incidence graph (called ``a planar 3CNF formula''), then $\pdreisat$ is the decision problem to decide its satisfiability. $\blacktriangleleft$
\end{defin}

\begin{defin}[$\mathsf{POSITIVE}$-$\mathsf{NEGATIVE}$-$\pdreisat$]\label{defPNPLANAR3SAT}
 Let $\Phi$ be a 3CNF formula with variables $v_1,\ldots,v_n$ and clauses $c_1,\ldots,c_m$. The graph $\mathcal{G}_\Phi$ is defined as
  \begin{align*}
	\mathcal{G}_\Phi & = (V,E) \\
	V 		 & = \{v_1,\ldots,v_n,c_1,\ldots,c_m\} \\
	E		 & = E_1 \cup E_2\\
	E_1		 & = \{(v_i,v_{i+1})| 1 \leq i \leq n-1\} \cup (v_n,v_1)\\
	E_2		 & = \{(v_i,c_j)| v_i\;\text{or}\;\neg v_i\;\text{occur in}\;c_j\}
 \end{align*}
 If $\mathcal{G}_\Phi$ is planar, then $\Phi$ is a planar 3CNF formula. The edge set $E_1$ forms a cycle, i.e., it divides the plane into two faces,
 the inner and the outer face. The restriction in $\mathsf{POSITIVE}$-$\mathsf{NEGATIVE}$-$\pdreisat$ or short $\pnpdreisat$ is that, if $c_i$ and $c_j$ contain the same variable 
 but in a negated form, then $c_i$ and $c_j$ are in different faces of the graph. $\pnpdreisat$ is the decision problem to decide if $\Phi$ satisfiable. $\blacktriangleleft$
\end{defin}

The graph from Definition \ref{defPLANAR3SAT} is bipartite since it is a mapping between the two vertex sets $\{v_1,...,v_n\}$ and $\{c_1,...,c_m\}$, whereof the graph from Definition \ref{defPLANAR3SAT} is not bipartite directly, but can be converted into a bipartite graph by removing the edges from $E_1$. In his paper, Lichtenstein describes a reduction algorithm, that transforms an arbitrary formula $\Phi$ to another formula $\Phi'$, such that the graph $\mathcal{G}_{\Phi'}$ is (pn)-planar. This transformation may change the number of satisfying solutions of $\Phi$, but not $\Phi$'s satisfiability. 

\begin{defin}
 $\pnpksat$ uses Definition \ref{defPNPLANAR3SAT} but with a kCNF as the input.$\blacktriangleleft$
\end{defin}


\begin{defin}[$\forestdreisat$]\label{defForSat}
 Given a 3CNF formula, who incidence graph is a forest. The $\forestdreisat$ is the decision problem to decide if $\Phi$ satisfiable.$\blacktriangleleft$
\end{defin}

Surprisingly, there are special classes of satisfiability/counting problems that could be solved in polynomial time. In 2010, Marko Samer and Stefan Szeider \cite{Sam10b} already showed that $\#\forestdreisat$ can be computed in polynomial time using a approach that has nothing to do with graph gadgets. Dan Roth \cite{Rot96} describes $\acyclicsat$, which is a special case of monotone 2CNF formulas and showed, that its counting version  ($\#\mathsf{aCYCLIC}$-$\mathsf{2MON}$-$\mathsf{SAT}$) is polynomial time solvable. $\nestedsat$ is handled by Knuth \cite{Knu90} and Kratochvíl and Krivánek \cite{Kra93} and it is shown that $\nestedsat$ is actual in $\classP$. So far it is not known if the counting version is also efficiently solvable. 

Further class of counting algorithms build on planar graphs. The reason is, that in planar graphs some problems can be solved more efficiently than in non-planar graphs. One such problem is to count the number of perfect matchings. Fisher, Kasteleyn and Temperley \cite{Kas67} discovered an algorithm that finds the Pfaffian orientation of a planar graph in polynomial time, which can be used to efficiently compute the number of perfect matchings. As already mentioned in the beginning, it is well known that the permanent of a $(0/1)$-matrix $\mA$ is equal to the number of perfect matchings in the graph that belongs to $\mA$'s biadjacency matrix. Based on this relationship, Valiant invented the hole theory of \textit{Holographic Algorithms} \cite{Val06,Val08} which is extended, e.g., by works of Cai and Lu \cite{Cai07,Cai10}. These holographic algorithms lead to polynomial time computable algorithms whereof former only exponential ones where known, e.g., $\mathsf{\#X}$-$\mathsf{MATCHINGS}$,$\mathsf{\#PL}$-$\mathsf{3}$-$\mathsf{NAE}$-$\mathsf{SAT}$, $\mathsf{\#PL}$-$\mathsf{EVEN}$-$\mathsf{LIN}$-$\mathsf{2}$. In that course a surprising result is that $\#_2\mathsf{Pl}$-$\mathsf{RTW}$-$\mathsf{MON}$-$\mathsf{3CNF}$\footnote{This stands for \textit{Planar-ReadTwice-Monotone-3CNF}, whereof read twice means that each variable occurs at most two times and monotone means that each variable only occurs positive.} has been proved to be $\otimes \classP$-complete, but $\#_7\mathsf{Pl}$-$\mathsf{RTW}$-$\mathsf{MON}$-$\mathsf{3CNF}$ to be polynomial time computable. Later it is shown by Cai et al \cite{Cai07} that this is actually a special case of $\#_{2^k-1}\mathsf{Pl}$-$\mathsf{RTW}$-$\mathsf{MON}$-$\mathsf{3CNF}$, which are all polynomial time computable. The idea of holographic algorithms can be seen as a further development of Valiant's proof idea to reduce counting satisfying assignments to the permanent. He replaced the graph gadgets by a more general type called \textit{matchgate} but also showed in his paper \cite{Val06} that there do not exits elementary matchgrid algorithms for $3$CNF formulas.

\textbf{Notations:} With each graph $\grG$, we associated two matrices; first its adjacency matrix $\mA_\grG$ and the bipartite adjacency matrix $\mB_\grG \define \begin{pmatrix} 0 & \mA_\grG\\ \mA^\tp_\grG & 0 \end{pmatrix}$, which is always undirected and loop-free. $\mA^\tp$ is the transposition of the matrix $\mA$. Note that $\mB_\grG^\tp = \mB_\grG$. The graph of $\mB_\grG$ is denoted by $\hat{\grG}$ and is the \textit{bipartite double cover} graph of $\grG$.  Building the bipartite double cover of $\grG$ can also be written as the tensor graph product $\hat{\grG} = \grG \times K_2$ or as the Kronecker product of the two matrices $\mB_\grG = \mA_\grG \otimes \mA_{K_2}$.

With $\bold{0}^{n \times m}$ we denote a zero matrix with dimensions $n$ and $m$. The notation $\mA^{i_1,...,i_k}_{o_1,...,o_k}$ means the sub-matrix of $\mA$ that is created by removing the columns $i_1,...,i_k$ and the rows $o_1,...,o_k$. For a boolean formula $\Phi$, the number of satisfying assignments is denoted as $\#\Phi$. The term $\Phi_{x_i=1}$, means the formula $\Phi$ with $x_i$ substituted by $1$. The graph that is constructed from a boolean formula according to the idea described by Valiant, is called a $\textit{Valiant graph}$. 

As usual, we use $\mathsf{1}_{<\text{statement}>}$ to denote $1$ if the statement is true, and $0$ otherwise. E.g., $\mathsf{1}_{\Phi\text{ is sat}}$ is equal to $1$ if the formula $\Phi$ is satisfiable and $0$ otherwise. We write, e.g. $X = 3\cdot\mathbb{Z}$, if $X$ is a multiple of $3$.

\begin{defin}[Parsimonious reduction]
 $f$ is reducible to $g$ if there exists a polynomial time computable function $\rho: \{0,1\}^* \rightarrow \{0,1\}^*$ such that for every $x \in \{0,1\}^*$, $f(x) = g(\rho(x))$.$\blacktriangleleft$
\end{defin}

\begin{defin}[Many-one reduction]
  $f$ is reducible to $g$ if there exists a polynomial time computable function $\rho: \{0,1\}^* \rightarrow \{0,1\}^*$ and $\tau: \mathbb{N} \rightarrow \mathbb{N}$ such that for every $x \in \{0,1\}^*$, $f(x) = \tau(g(\rho(x)))$.$\blacktriangleleft$
\end{defin}

\begin{proposition}[Relationship: Permanent and PerfectMatchings]\label{propPermPerf}
 Let $\grG$ be an arbitrary unweighted graph and $\mA_\grG$ its $(0/1)$-adjacency matrix, then the permanent of $\mA_\grG$ is equal to the number of perfect matchings in the graph $\hat{\grG}$ (i.e., the bipartite double-cover).
\end{proposition}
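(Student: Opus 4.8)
The plan is to reduce the statement to the classical fact that the number of perfect matchings in a bipartite graph equals the permanent of its biadjacency matrix, and then to observe that the bipartite double cover $\hat{\grG}$ is exactly the bipartite graph whose biadjacency matrix is $\mA_\grG$. First I would fix notation: write $n$ for the number of nodes of $\grG$ and label them $1,\ldots,n$. By the definition of the bipartite double cover, the node set of $\hat{\grG}$ splits into two copies $U=\{u_1,\ldots,u_n\}$ and $W=\{w_1,\ldots,w_n\}$, with an undirected edge $u_iw_j$ present precisely when $(\mA_\grG)_{ij}=1$. This is read off directly from the block form $\mB_\grG=\begin{pmatrix}0 & \mA_\grG\\ \mA^\tp_\grG & 0\end{pmatrix}$: the upper-right block $\mA_\grG$ records exactly the edges between $U$ and $W$, and the symmetry $\mB_\grG^\tp=\mB_\grG$ guarantees that $\hat{\grG}$ is a genuine undirected bipartite graph. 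I would stress that this works even when $\grG$ is directed and $\mA_\grG$ is not symmetric, since the off-diagonal block $\mA_\grG$ alone determines the $U$–$W$ adjacencies.

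The core step is a bijection argument. Since $|U|=|W|=n$, any perfect matching of $\hat{\grG}$ must pair each $u_i$ with a distinct $w_j$ and saturate both sides simultaneously, hence is described by a permutation $\sigma\in S_n$ via $u_i\mapsto w_{\sigma(i)}$. Such a $\sigma$ corresponds to an admissible matching (one using only existing edges) if and only if $u_iw_{\sigma(i)}$ is an edge for every $i$, that is, $(\mA_\grG)_{i,\sigma(i)}=1$ for all $i$. Therefore the perfect matchings of $\hat{\grG}$ are in one-to-one correspondence with those $\sigma$ for which $\prod_{i=1}^n (\mA_\grG)_{i,\sigma(i)}=1$.

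It remains to connect this count with the permanent, and here the hypothesis that $\mA_\grG$ is a $(0/1)$-matrix is essential: each factor $(\mA_\grG)_{i,\sigma(i)}$ lies in $\{0,1\}$, so the product $\prod_{i=1}^n (\mA_\grG)_{i,\sigma(i)}$ equals $1$ exactly for the admissible permutations and $0$ otherwise. Summing over all $\sigma\in S_n$ therefore counts precisely the admissible permutations, and by the defining formula for the permanent given in the introduction this sum is exactly $\perm(\mA_\grG)$. Combining this with the bijection of the previous step yields that the number of perfect matchings in $\hat{\grG}$ equals $\perm(\mA_\grG)$, as claimed.

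I do not expect a serious obstacle, as every step is elementary; the only points requiring care are making the correspondence between matchings and permutations unambiguous (that the two sides of $\hat{\grG}$ have equal cardinality and that a perfect matching saturates both at once) and emphasizing that the passage from ``number of admissible $\sigma$'' to the sum $\sum_\sigma\prod_i(\mA_\grG)_{i,\sigma(i)}$ relies precisely on the $(0/1)$ assumption rather than on any edge weighting. If $\grG$ carried genuine weights, the same computation would return a weighted sum and the equality with a plain matching count would fail, which is exactly why Proposition \ref{propPermPerf} is restricted to unweighted graphs.
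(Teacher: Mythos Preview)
Your argument is correct and is the standard bijection proof of this classical fact. The paper itself does not supply a proof of Proposition~\ref{propPermPerf}; it is simply stated as a well-known relationship, so your write-up actually supplies more detail than the paper does.
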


Proposition \ref{propPermPerf} states a well known relationship between computing the permanent and counting the number of perfect matchings. The later is in general also a $\classNP$-hard problem, but can be solved efficiently if the graph is planar. 

\begin{theorem}[Fisher-Kasteleyn-Temperley (FKT) algorithm]\label{theoremKastelyen}
 If $\mA$ is a $(0/1)$-adjacency matrix of a \textit{planar graph} $\grG$, then the number of perfect matchings in $\grG$ can be computed in polynomial time.
\end{theorem}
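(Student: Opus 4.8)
The plan is to reduce counting the perfect matchings of $\grG$ to the evaluation of a single determinant, exploiting the classical identity relating the Pfaffian of a skew-symmetric matrix to its determinant. First I would fix an orientation of the edges of $\grG$ and form the skew-symmetric matrix $\mM$ with $m_{ij} = +1$ if the edge $\{i,j\}$ is present and oriented $i \to j$, $m_{ij} = -1$ if oriented $j \to i$, and $m_{ij} = 0$ otherwise. The Pfaffian $\mathrm{Pf}(\mM)$ expands as a signed sum over the perfect matchings of $\grG$, each contributing $\pm 1$, and it satisfies $\mathrm{Pf}(\mM)^2 = \detm(\mM)$. Since the determinant of an $(n \times n)$ matrix is computable in time polynomial in $n$ (as already noted in the introduction), the entire difficulty is concentrated in forcing every perfect matching to contribute with the \emph{same} sign, so that $|\mathrm{Pf}(\mM)|$ equals the number of perfect matchings exactly, with no cancellation.

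Second, I would characterize the orientations for which no cancellation occurs, the so-called \emph{Pfaffian orientations}: two perfect matchings $P_1$ and $P_2$ contribute with the same sign precisely when every cycle of their symmetric difference $P_1 \,\triangle\, P_2$ (an even cycle alternating between $P_1$ and $P_2$) has an \emph{odd} number of edges pointing in each direction around the cycle. The task thus becomes to construct such an orientation for a planar $\grG$ in polynomial time. Following Kasteleyn, I would aim for the stronger, purely local property that in a fixed planar embedding every bounded face has an odd number of edges directed clockwise around its boundary; call this a \emph{clockwise-odd} orientation.

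Third, I would build a clockwise-odd orientation constructively. Compute a planar embedding (polynomial time) and take a spanning tree $T$ of $\grG$, orienting its edges arbitrarily. The non-tree edges correspond to the bounded faces, so I would add them one at a time, processing faces from the outer region inward along the dual tree; at each step exactly one edge on the face under consideration is still undetermined, and I orient it so as to make that face clockwise-odd. This procedure runs in polynomial time and terminates with every bounded face clockwise-odd.

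The main obstacle, and the step that genuinely uses planarity, is proving that a clockwise-odd orientation is in fact a Pfaffian orientation. For this I would establish Kasteleyn's parity lemma: in a clockwise-odd orientation, every cycle $C$ of the embedding has a number of clockwise-oriented edges congruent to $1$ plus the number of vertices strictly enclosed by $C$, modulo $2$. I would prove this by induction on the number of bounded faces enclosed by $C$, merging two faces across a shared edge and tracking how both the clockwise count and the enclosed-vertex count change (an inclusion–exclusion argument that leans on Euler's formula and the Jordan curve theorem). It then remains to observe that an alternating cycle $C$ arising from $P_1 \,\triangle\, P_2$ is a \emph{nice} cycle: since $P_1$ restricted to the vertices outside $V(C)$ is a perfect matching of $\grG - V(C)$, and no matching edge can cross $C$, the vertices strictly inside $C$ are matched among themselves and hence even in number. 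By the parity lemma $C$ therefore has an odd number of clockwise edges, which is exactly the Pfaffian condition. I expect the careful parity bookkeeping in this lemma, together with the clean use of the planar embedding, to be the delicate part of the argument.
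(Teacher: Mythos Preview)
Your sketch is correct and follows the classical Kasteleyn argument. However, the paper does not actually prove this theorem: it is stated as a known result from the literature (citing Kasteleyn's original work) and is accompanied only by a three-sentence description, saying that one uses the planar embedding to orient the edges of all faces in a certain way, that the resulting orientation is called a \emph{Pfaffian orientation}, and that this orientation permits counting perfect matchings efficiently. No further detail is given.

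In other words, your proposal and the paper are aligned in spirit---both point to Pfaffian orientations obtained from a planar embedding---but your write-up supplies the actual content (the skew-symmetric matrix and $\mathrm{Pf}(\mM)^2=\detm(\mM)$, the clockwise-odd construction via a spanning tree and its dual, and Kasteleyn's parity lemma showing that nice cycles are oddly oriented), whereas the paper treats the whole theorem as a black box imported from \cite{Kas67}.
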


The FKT algorithm uses the planarity of the graph to obtain an embedding in the plane. From this embedding, the algorithm orients the edges of all faces in a certain way. The orientation of all edges is called a \textit{Pfaffian orientation}. This orientations permits to count the number of perfect matchings efficiently.

\begin{proposition}\label{propCons}
 The permanent of a $(0/1)$-matrix $\mA$ with associated graph $\grG$, can be computed in polynomial time if $\hat{\grG}$ is planar.
\end{proposition}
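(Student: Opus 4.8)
The plan is to obtain this statement as an immediate corollary of Proposition \ref{propPermPerf} and Theorem \ref{theoremKastelyen}, simply by chaining the two reductions. First I would check that $\hat{\grG}$ satisfies the hypotheses of the FKT algorithm. Since $\mA_\grG$ is a $(0/1)$-matrix by assumption, the block matrix $\mB_\grG = \begin{pmatrix} 0 & \mA_\grG\\ \mA^\tp_\grG & 0 \end{pmatrix}$ has all entries in $\{0,1\}$ as well, so $\hat{\grG}$ is an unweighted, loop-free graph with a $(0/1)$-adjacency matrix. Combined with the hypothesis that $\hat{\grG}$ is planar, this shows that both preconditions of Theorem \ref{theoremKastelyen} — planarity and the $(0/1)$-property — are met by $\hat{\grG}$.

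Next I would apply Proposition \ref{propPermPerf}, which yields the exact identity that $\perm(\mA_\grG)$ equals the number of perfect matchings in $\hat{\grG}$. This step converts the problem of evaluating the permanent into the problem of counting perfect matchings in the one specific graph $\hat{\grG}$, and it is here that the $(0/1)$ requirement is essential (the identity of Proposition \ref{propPermPerf} fails for weighted graphs). I would then invoke Theorem \ref{theoremKastelyen} on $\hat{\grG}$ to count those perfect matchings in polynomial time, and conclude that $\perm(\mA_\grG)$ is computable in polynomial time.

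One point I would spell out explicitly, rather than leave implicit, is that the FKT algorithm consumes a concrete planar embedding of $\hat{\grG}$ as input — it needs the embedding in order to fix a Pfaffian orientation — whereas the hypothesis only promises that such an embedding \emph{exists}. This gap is closed by noting that a planar embedding of $\hat{\grG}$ can be produced in polynomial (indeed linear) time by a standard planarity-testing-and-embedding routine, so the embedding step does not change the overall complexity. Composing this preprocessing with the two polynomial-time reductions above gives the result.

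I do not expect a genuine obstacle here, as the statement is a direct corollary; the only real work is the bookkeeping of verifying that the $(0/1)$-structure is inherited by the bipartite double cover and that an embedding is computed before Theorem \ref{theoremKastelyen} is applied. The substantive content lives entirely in the two results being composed, both of which are assumed.
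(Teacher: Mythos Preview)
Your proposal is correct and follows exactly the paper's approach: the paper itself does not give a proof environment for this proposition and simply remarks that it is ``just a consequence from Proposition~\ref{propPermPerf} and Theorem~\ref{theoremKastelyen}.'' Your write-up merely makes explicit the bookkeeping (inheritance of the $(0/1)$-property by $\mB_\grG$ and the polynomial-time computability of a planar embedding) that the paper leaves implicit.
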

Proposition \ref{propCons} is just a consequence from Proposition \ref{propPermPerf} and Theorem \ref{theoremKastelyen}. Finally, the planarity of a graph is defined as:

\begin{defin}[Planar graphs]
 A graph $\mathcal{G} = (V,E)$ is called \textit{planar} if it can be embedding in the plane without two edges cross each other. $\blacktriangleleft$
\end{defin}


\begin{defin}[Circular planar graphs]
 A graph $\mathcal{G} = (V,E)$ is called \textit{circular planar} according to a boundary $B \subseteq V$, if it is planar and can be embedding on a disc $D$ in the plane with all nodes from $B$
lie on the circle that bounds $D$ and all other nodes are in the interior of $D$. The \textit{order} of a circular planar graph is the order of the boundary nodes in clockwise direction. $\blacktriangleleft$
\end{defin}

\textbf{Remark: The permanent of minors.} As also used by Valiant and Ben-Dor/Halevi: if the permanent of a $(0/1)$-matrix counts the number of cyclic covers, then the permanent of $A^i_o$ is equal to the number of those ways from node $i$ to node $o$, where all nodes not part of the way can be covered by a cyclic cover.

\section{Proofs that the permanent is $\#\classP$-complete} 

Since our work heavily depends on the idea of Valiant, we shortly describe the idea of the proof and the used gadgets. Beside the two named proofs of Valiant and Ben-Dor/Halevi, recently, Scott Aaronson \cite{Aar11} provided another proof based on a linear-optic setup and quantum reasoning that shows also that the permanent is $\#\classP$-complete. 

Whenever a graph is drawn in the rest of the paper, whereof an edge has no shown weight, the weight is equal to $1$. Whenever an adjacency matrix is given to a graph, no edge is labeled since the weights could be seen from the matrix. 

\subsection{Valiant's proof.} Valiant's idea was, given a boolean 3CNF formula $\Phi$ with $n$ variables and $m$ clauses, to construct a graph $\grG$, such that the permanent of $\mA_\grG$ involves $\#\Phi$. In particular, in his proof the permanent is equal to $4^{3m}\#\Phi$. In order to construct the graph, he uses small subgraphs, called gadgets, that have special combinatorial properties. These properties cause that each valid assignment of $\Phi$ is equivalent to a set of cyclic covers that sum up to a total weight of $4^{3m}$. And each assignment that does not satisfy $\Phi$ is equal to a set of cyclic covers that sum up to a total weight of zero. Since the permanent is the sum over all weights of all cyclic covers, it is $\perm(\mA) = 4^{3m}\#\Phi$. For the rest, we refer to the rewritten proof of Valiant that can be found in Barak and Aroras book \cite{San09}.

There are in total three gadgets that Valiant uses to construct his graph $G$.

\subsubsection{The CLAUSE-gadget.} For each of the $m$ clauses in $\Phi$, there is a small subgraph named CLAUSE-gadget. 

\begin{figure}[h!]
\centering
\begin{minipage}[h]{50mm}
	\centering
	\begin{tikzpicture}
		[scale=.8]
		\node[circle, fill, draw, inner sep = 2pt] (n1) at (0,1){}; \node[] at (-0.2,0.8) {\tiny{$1$}};
			\node[rectangle, inner sep = 2pt] (xor12) at (1,0) {...};
		\node[circle, fill, draw, inner sep = 2pt] (n2) at (2,1){}; \node[] at (2.2,0.8) {\tiny{$2$}};
			\node[rectangle, rotate=-68, inner sep = 2pt] (xor23) at (2.5,2.4) {...};
		\node[circle, fill, draw, inner sep = 2pt] (n3) at (1,3){}; \node[] at (1,3.3) {\tiny{$3$}};
			\node[rectangle, rotate=68,  inner sep = 2pt] (xor31) at (-0.5,2.4) {...};
		\node[circle, fill, draw, inner sep = 2pt] (n4) at (1,1.8){}; \node[] at (1.2,2) {\tiny{$4$}};	
	
		\draw[-,color=red] (n3) edge[bend right=40] (xor31); \draw[-latex,color=red] (xor31) edge[bend right=40] (n1);
		\draw[-,color=red] (n2) edge[bend right=40] (xor23); \draw[-latex,color=red] (xor23) edge[bend right=40] (n3); 
		\draw[-,color=red] (n1) edge[bend right=40] (xor12); \draw[-latex,color=red] (xor12) edge[bend right=40] (n2);
	
		\draw[-latex, dashed] (n4) edge[bend right=15] (n3);  
		\draw[-latex, dashed] (n3) edge[bend right=15] (n4);
	
		\draw[-latex, dashed] (n4) edge[bend right=15] (n1);  
		\draw[-latex, dashed] (n1) edge[bend right=15] (n4); 
	
		\draw[-latex, dashed] (n2) edge[bend right=15] (n4); 
		\draw[-latex, dashed] (n4) edge[bend right=15] (n2);
	
		\draw[-latex, dashed] (n1) edge[bend right=15] (n2);
		\draw[-latex, dashed] (n2) edge[bend right=15] (n1);
	\end{tikzpicture}
\end{minipage}
\begin{minipage}[h]{50mm}
	\centering
	$\begin{pmatrix} 
		0 & 1 & 0 & 1\\ 
		1 & 0 & 0 & 1\\ 
		0 & 0 & 0 & 1\\ 
		1 & 1 & 1 & 0
	\end{pmatrix}$
\end{minipage}
\caption{$\mathfrak{G}_1$: The CLAUSE-gadget exists for each clause of the formula. The red lines are not directly part of the gadget and thus are not part of the adjacency matrix.}
\label{clauseGadget}
\end{figure}
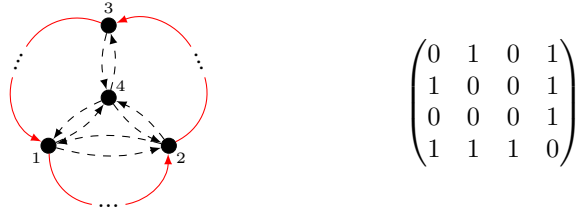

The CLAUSE-gadget has three \textit{external} edges, which are the red lines in Figure \ref{clauseGadget}. These are the edges that connect the clause-gadget to the rest of the graph, to be exact, to the XOR-gadgets. Each of this external edges represents a literal in that clause. Whenever an external edge is part of a cyclic cover, the corresponding literal in that clause is set to $\textsc{FALSE}$. If the edge is not part of the cyclic cover, the literal is set to $\textsc{TRUE}$. Taking all three external edges will leave the middle vertex (node $4$ in Figure \ref{clauseGadget}) unconnected and thus there exists no cyclic cover at all with this partial assignment.

\subsubsection{The VARIABLE-gadget.}
\begin{figure}
\begin{minipage}[h]{\textwidth}
	\centering
	\begin{tikzpicture}
		\tikzstyle{every loop} = [-latex]
		\node[circle, fill, draw, inner sep = 2pt] (LEFT) at (-1,1){}; \node[] at (-1.2,1.2) {\tiny{$1$}};
		\node[circle, fill, draw, inner sep = 2pt] (RIGHT) at (9.5,1){}; \node[] at (9.7,1.2) {\tiny{$n$}};
		\draw[-latex] (RIGHT) to (LEFT); 
		
		\node[circle, fill, draw, inner sep = 2pt] (T1) at (0.5,2){}; \node[] at (0.3,2.2) {\tiny{$2$}};
			\node[rectangle, inner sep = 2pt] (CON1) at (1,2.5) {...};
		\node[circle, fill, draw, inner sep = 2pt] (T2) at (1.5,2){}; \node[] at (1.7,2.2) {\tiny{$3$}};
		\node[circle, fill, draw, inner sep = 2pt] (T3) at (3,2){}; \node[] at (2.8,2.2) {\tiny{$4$}};
			\node[rectangle, inner sep = 2pt] (CON2) at (3.5,2.5) {...};
		\node[circle, fill, draw, inner sep = 2pt] (T4) at (4,2){}; \node[] at (4.2,2.2) {\tiny{$5$}};

		\node[circle, inner sep = 2pt] (DOTS1) at (5.5,2) {...};

		\node[circle, fill, draw, inner sep = 2pt] (T5) at (7,2){}; \node[] at (6.8,2.2) {\tiny{$m$}};
			\node[rectangle, inner sep = 2pt] (CON3) at (7.5,2.5) {...};
		\node[circle, fill, draw, inner sep = 2pt] (T6) at (8,2){}; \node[] at (8.4,2.2) {\tiny{$m+1$}};
		
		\draw[-] (T1) to [out=250, in=290, loop] (T1);\draw[-] (T2) to [out=250, in=290, loop] (T2);
		\draw[-] (T3) to [out=250, in=290, loop] (T3);\draw[-] (T4) to [out=250, in=290, loop] (T4);
		\draw[-] (T5) to [out=250, in=290, loop] (T5);\draw[-] (T6) to [out=250, in=290, loop] (T6);
		\draw[-latex] (T2) to (T3); 

		\draw[-latex] (LEFT) edge[bend left=20] (T1); 
		\draw[-latex] (T6) edge[bend left=20] (RIGHT); 

		\draw[-, color=red] (T1) edge[bend left=20] (CON1); \draw[-latex, color=red] (CON1) edge[bend left=20] (T2);
		\draw[-, color=red] (T3) edge[bend left=20] (CON2); \draw[-latex, color=red] (CON2) edge[bend left=20] (T4);
		\draw[-, color=red] (T5) edge[bend left=20] (CON3); \draw[-latex, color=red] (CON3) edge[bend left=20] (T6);

		\node[circle, fill, draw, inner sep = 2pt] (F1) at (0.5,0){}; \node[] at (0.1,-0.2) {\tiny{$m+2$}};
			\node[rectangle, inner sep = 2pt] (CON4) at (1,-0.5) {...};
		\node[circle, fill, draw, inner sep = 2pt] (F2) at (1.5,0){}; \node[] at (1.9,-0.2) {\tiny{$m+3$}};
			\node[rectangle, inner sep = 2pt] (CON5) at (3.5,-0.5) {...};
		\node[circle, fill, draw, inner sep = 2pt] (F3) at (3,0){}; \node[] at (2.6,-0.2) {\tiny{$m+4$}};
		\node[circle, fill, draw, inner sep = 2pt] (F4) at (4,0){}; \node[] at (4.4,-0.2) {\tiny{$m+5$}};

		\node[circle, inner sep = 2pt] (DOTS2) at (5.5,0) {...};

		\node[circle, fill, draw, inner sep = 2pt] (F5) at (7,0){}; \node[] at (6.6,-0.2) {\tiny{$n-2$}};
			\node[rectangle, inner sep = 2pt] (CON6) at (7.5,-0.5) {...};
		\node[circle, fill, draw, inner sep = 2pt] (F6) at (8,0){}; \node[] at (8.4,-0.2) {\tiny{$n-1$}};

		\draw[-] (F1) to [out=110, in=70, loop] (F1);\draw[-] (F2) to [out=110, in=70, loop] (F2);
		\draw[-] (F3) to [out=110, in=70, loop] (F3);\draw[-] (F4) to [out=110, in=70, loop] (F4);
		\draw[-] (F5) to [out=110, in=70, loop] (F5);\draw[-] (F6) to [out=110, in=70, loop] (F6);
		\draw[-latex] (F2) to (F3); 

		\draw[-latex] (LEFT) edge[bend right=20] (F1); 
		\draw[-latex] (F6) edge[bend right=20] (RIGHT); 

		\draw[-, color=red] (F1) edge[bend right=20] (CON4); \draw[-latex, color=red] (CON4) edge[bend right=20] (F2);
		\draw[-, color=red] (F3) edge[bend right=20] (CON5); \draw[-latex, color=red] (CON5) edge[bend right=20] (F4);
		\draw[-, color=red] (F5) edge[bend right=20] (CON6); \draw[-latex, color=red] (CON6) edge[bend right=20] (F6);
	\end{tikzpicture}
\end{minipage}
\caption{$\mathfrak{G}_2$: The VARIABLE-gadget exists for each distinct variable of the formula. The red lines are not directly part of the gadget.}
\label{variableGadget}
\end{figure}
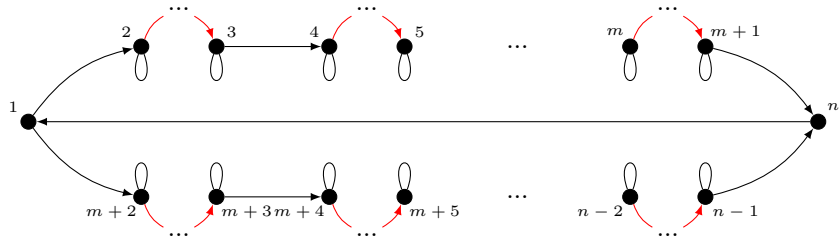

The variable-gadget is constructed for each variable $v_i$ in $\Phi$. For each occurrence of $v_i$ or $\neg v_i$ in $\Phi$, the VARIABLE-gadget has two vertices, whereof the true or the false assignments are on the two different cycles of the gadget. Whenever one of this cycles is used in a cyclic cover the other cyclic is prevented and the nodes must use their self-loop to form a valid cyclic cover. This guarantees, that one literal is either true or false but not both. The shown red lines connect the VARIABLE-gadget to the XOR-gadgets. We write $\mathfrak{G}_1(x_i)$ to address the VARIABLE-gadget for variable $x_i$.

\subsubsection{The XOR-gadget.}
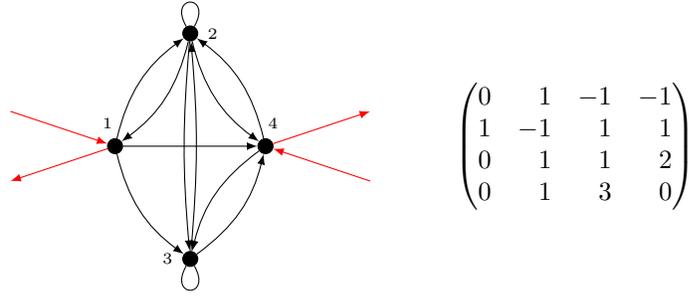
\begin{figure}
\centering
\begin{minipage}[h]{60mm}
	\begin{tikzpicture}
		\tikzstyle{every loop} = [-latex]
		[scale=.8]

		\node[circle,  inner sep = 2pt, thick] (I1) at (0,1) {};
		\node[circle,  inner sep = 2pt, thick] (O2) at (5,1) {};
		\node[circle,  inner sep = 2pt, thick] (I2) at (5,0) {};
		\node[circle,  inner sep = 2pt, thick] (O1) at (0,0) {};
	
		\node[circle, fill, draw, inner sep = 2pt] (n1) at (1.5,0.5){}; \node[] at (1.4,0.8) {\tiny{$1$}};
		\node[circle, fill, draw, inner sep = 2pt] (n3) at (3.5,0.5){}; \node[] at (3.6,0.8) {\tiny{$4$}};
	
		\node[circle, fill, draw, inner sep = 2pt] (n4) at (2.5,2){}; \node[] at (2.8,2) {\tiny{$2$}};
		\draw[-] (n4) to [out=120, in=60, loop] (n4);
		\node[circle, fill, draw, inner sep = 2pt] (n2) at (2.5,-1){}; \node[] at (2.2,-1) {\tiny{$3$}}; 
		\draw[-] (n2) to [out=240, in=300, loop] (n2);
			
		\draw[-latex,color=red] (I1) edge[bend right=0] (n1); \draw[-latex,color=red] (n3) edge[bend right=0] (O2); 
		\draw[-latex,color=red] (n1) edge[bend right=0] (O1); \draw[-latex,color=red] (I2) edge[bend right=0] (n3);
	
		\draw[-latex] (n1) edge[bend left=20] (n4); \draw[-latex] (n4) edge[bend left=20] (n1);  
		\draw[-latex] (n3) edge[bend right=20] (n4); \draw[-latex] (n4) edge[bend right=20] (n3); 
	
		\draw[-latex] (n1) to (n3);
	
		\draw[-latex] (n4) edge[bend right=5] (n2); \draw[-latex] (n2) edge[bend right=5] (n4); 
		\draw[-latex] (n1) edge[bend right=20] (n2);
	
		\draw[-latex] (n2) edge[bend right=20] (n3);
		\draw[-latex] (n3) edge[bend right=20] (n2);
	\end{tikzpicture}
\end{minipage}
\begin{minipage}[h]{50mm}
	$\begin{pmatrix}[r]
		0 & 1 &-1 & -1\\ 
		1 &-1 & 1 & 1\\ 
		0 & 1 & 1 & 2\\ 
		0 & 1 & 3 & 0
	\end{pmatrix}$
\end{minipage}
\caption{The XOR-gadget connects a CLAUSE-gadget with a VARIABLE-gadget. The red lines are not directly part of the gadget.}
\label{XORGadget}
\end{figure}

The XOR-gadget is the most important gadget. It is the only part of the whole final graph, that has edges with weights other than $0$ or $1$. The XOR-gadget connects on the one side to a CLAUSE-gadget and on the other side to a VARIABLE-gadget. As its name suggests, it allows only one side to be taken by a cyclic cover, but not both or not none. The reasoning is that either the VARIABLE-gadget gets the connection, which means that the literal is set to true, or the CLAUSE-gadget gets the connection, which means that that literal in the clause is set to false.

Setting $(i_1,o_1) = (1,4)$ and $(i_2,o_2) = (4,1)$ as the valid input/output ways, Valiant showed, that the adjacency matrix $\mA$ of an XOR-gadget has to fulfill the requirements:
\begin{equation}\label{reqXORgadget}
\begin{matrix}[l]
		\textsc{(A)}\;0 = \perm(\mA) 			& \;\;\; &\textsc{(B)}\;0 = \perm(\mA^{i_1,i_2}_{o_1,o_2})\\ 
		\textsc{(C)}\;c = \perm(\mA^{i_1}_{o_1}) 	& \;\;\; &\textsc{(D)}\;c = \perm(\mA^{i_2}_{o_2})\\
		\textsc{(E)}\;0 = \perm(\mA^{i_1}_{o_2}) 	& \;\;\; &\textsc{(F)}\;0 = \perm(\mA^{i_2}_{o_1}) 
\end{matrix}
\end{equation}
In the matrix he chose for his paper, the constant $c$ is equal to $4$. If we rephrase the requirements (A)-(F) in words, we get: \textit{The cyclic cover can either enter the gadget using node $1$ and leave at node $4$ or the other way round. But taking both ways, none, or taking the wrong exit yields a weight of $0$.}

\textbf{Remark:} For those who are familiar with Valiant's holographic algorithms, one perhaps notices, that these rules are related to the \textit{signature} of a matchgate. The \textit{signature} of a matchgate is enumeration of the weights of the matchgate perfect matchings, that arise if all possible combinations of input and output edges are removed. Here it is the permanent value of the gadget that arise if output and input nodes are removed. 

All three shown gadgets appear as independent gadgets within the graph. All edges that connect between these gadget (the red edges in the figures) have weight $1$, thus do not contribute to the weight of any cyclic-cover.

\subsection{Ben-Dor $\&$ Halevis proof.} In this work, only one gadget is used, which the authors also call CLAUSE-gadget. It represents a clause in the formula, which justifies the name, but actually  has more similarity with the properties of Valiant's XOR-gadget. 

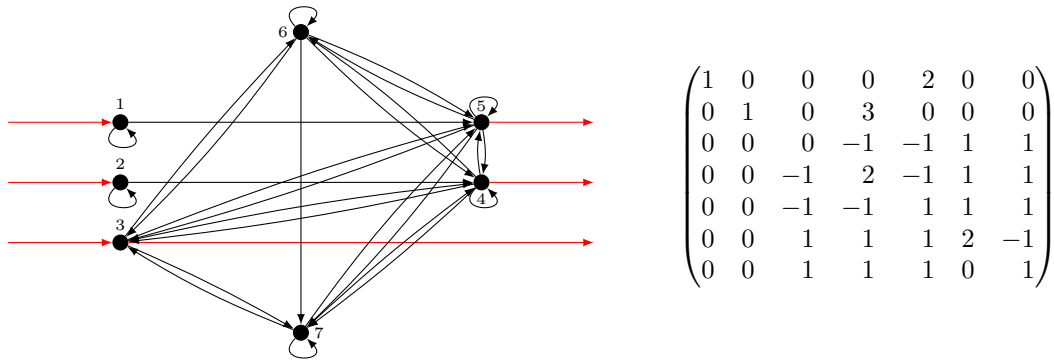
\begin{figure}[h!]
	\begin{minipage}{90mm}
		\begin{tikzpicture}
			[scale=.4]
			\tikzstyle{every loop} = [-latex]
			\node[circle, inner sep = 2pt] (I1) at (0,5)  {}; \node[circle, inner sep = 2pt] (O1) at (20,5)  {};  
			\node[circle, inner sep = 2pt] (I2) at (0,3)  {}; \node[circle, inner sep = 2pt] (O2) at (20,3)  {}; 
			\node[circle, inner sep = 2pt] (I3) at (0,1)  {}; \node[circle, inner sep = 2pt] (O3) at (20,1)  {}; 
			
			\node[circle, fill, draw, inner sep = 2pt] (n1) at (4,5){}; \node[] at (4,5.6) {\tiny{$1$}};
			\node[circle, fill, draw, inner sep = 2pt] (n2) at (4,3){}; \node[] at (4,3.6) {\tiny{$2$}};
			\node[circle, fill, draw, inner sep = 2pt] (n3) at (4,1){}; \node[] at (4,1.6) {\tiny{$3$}};
				
			\node[circle, fill, draw, inner sep = 2pt] (n6) at (10,8){}; \node[] at (9.4,8) {\tiny{$6$}};
			\node[circle, fill, draw, inner sep = 2pt] (n7) at (10,-2){}; \node[] at (10.6,-2) {\tiny{$7$}};
			
			\node[circle, fill, draw, inner sep = 2pt] (n4) at (16,3){}; \node[] at (16,2.45) {\tiny{$4$}};
			\node[circle, fill, draw, inner sep = 2pt] (n5) at (16,5){}; \node[] at (16,5.55) {\tiny{$5$}};
			
			\draw[-latex,color=red] (I1) to (n1); \draw[-latex,color=red] (n5) to (O1);
			\draw[-latex,color=red] (I2) to (n2); \draw[-latex,color=red] (n4) to (O2);
			\draw[-latex,color=red] (I3) to (n3); \draw[-latex,color=red] (n3) to (O3);
			
			\draw[-latex] (n1) edge node[auto] {} (n5);
	
			\draw[-latex] (n2) edge node[auto] {} (n4);
			
			\draw[-latex] (n3) edge[bend left=3] node[auto] {} (n4);
			\draw[-latex] (n3) edge[bend left=3] node[auto] {} (n5);
			\draw[-latex] (n3) edge[bend left=3] node[auto] {} (n6);
			\draw[-latex] (n3) edge[bend left=3] node[auto] {} (n7);
	
			\draw[-latex] (n4) edge[bend left=3] node[auto] {} (n3);
			\draw[-latex] (n4) edge[bend left=10] node[auto] {} (n5);
			\draw[-latex] (n4) edge[bend right=5] node[auto] {} (n6);
			\draw[-latex] (n4) edge[bend left=3] node[auto] {} (n7);
	
			\draw[-latex] (n5) edge[bend left=3] node[auto] {} (n3);
			\draw[-latex] (n5) edge[bend left=10] node[auto] {} (n4);
			\draw[-latex] (n5) edge[bend left=5] node[auto] {} (n6);
			\draw[-latex] (n5) edge[bend left=5] node[auto] {} (n7);
	
			\draw[-latex] (n6) edge[bend left=5] node[auto] {} (n3);
			\draw[-latex] (n6) edge[bend right=5] node[auto] {} (n4);
			\draw[-latex] (n6) edge[bend left=5] node[auto] {} (n5);
			\draw[-latex] (n6) edge node[auto] {} (n7);
	
			\draw[-latex] (n7) edge[bend left=5] node[auto] {} (n3);
			\draw[-latex] (n7) edge[bend left=3] node[auto] {} (n4);
			\draw[-latex] (n7) edge[bend left=5] node[auto] {} (n5);
	
			
			\draw[-latex] (n1) to [in=315,out=225, loop]  node[auto] {} (n1);
			\draw[-latex] (n2) to [in=315,out=225, loop]  node[auto] {} (n2);
			\draw[-latex] (n4) to [in=315,out=225, loop]  node[auto] {} (n4);
			\draw[-latex] (n5) to [in=45,out=135, loop]  node[auto] {} (n5);
			\draw[-latex] (n6) to [in=45,out=135, loop]  node[auto] {} (n6);
			\draw[-latex] (n7) to [in=315,out=225, loop]  node[auto] {} (n7);	
		\end{tikzpicture}
	\end{minipage}
	\begin{minipage}[h]{40mm}
		$\begin{pmatrix}[r]
			1 & 0 & 0 & 0 & 2 & 0 & 0\\ 
			0 & 1 & 0 & 3 & 0 & 0 & 0\\ 
			0 & 0 & 0 &-1 &-1 & 1 & 1\\ 
			0 & 0 &-1 & 2 &-1 & 1 & 1\\ 
			0 & 0 &-1 &-1 & 1 & 1 & 1\\ 
			0 & 0 & 1 & 1 & 1 & 2 &-1\\ 
			0 & 0 & 1 & 1 & 1 & 0 & 1
		\end{pmatrix}$
	\end{minipage}
\caption{The CLAUSE-gadget of Ben-Dor and Halevi. The red lines are not directly part of the gadget.}
\label{clausegadget2}
\end{figure}
They stated the following rules in order to make their proof work.
\begin{equation}\label{reqCLAUSEgadget}
\begin{matrix}[l]
	\textsc{(A)}\;0 = \perm(\mA) 			& \;\;\; &\textsc{(B)}\;0 = \perm(\mA^{i_1}_{o_2})		& \;\;\; &	\textsc{(C)}\;0 = \perm(\mA^{i_1}_{o_3})\\	
	\textsc{(D)}\;0 = \perm(\mA^{i_2}_{o_1})	& \;\;\; &\textsc{(E)}\;0 = \perm(\mA^{i_2}_{o_3})		& \;\;\; &	\textsc{(F)}\;0 = \perm(\mA^{i_3}_{o_1})\\		
	\textsc{(G)}\;0 = \perm(\mA^{i_3}_{o_2})	& \;\;\; &\textsc{(H)}\;0 = \perm(\mA^{i_1,i_2}_{o_1,o_3})	& \;\;\; &	\textsc{(G)}\;0 = \perm(\mA^{i_1,i_2}_{o_2,o_3})\\
	\textsc{(I)}\;0 = \perm(\mA^{i_1,i_3}_{o_1,o_2})& \;\;\; &\textsc{(J)}\;0 = \perm(\mA^{i_1,i_3}_{o_2,o_3})	& \;\;\; &	\textsc{(K)}\;0 = \perm(\mA^{i_2,i_3}_{o_1,o_3})\\
	\textsc{(L)}\;0 = \perm(\mA^{i_2,i_3}_{o_1,o_2})& \;\;\; &\textsc{(M)}\;c = \perm(\mA^{i_1,i_2,i_3}_{o_1,o_2,o_3})& \;\;\; &	\textsc{(N)}\;c = \perm(\mA^{i_1,i_2}_{o_1,o_2})\\
	\textsc{(O)}\;c = \perm(\mA^{i_1,i_3}_{o_1,o_3})& \;\;\; &\textsc{(P)}\;c = \perm(\mA^{i_2,i_3}_{o_2,o_3})	& \;\;\; &	\textsc{(Q)}\;c = \perm(\mA^{i_1}_{o_1})\\
	\textsc{(R)}\;c = \perm(\mA^{i_2}_{o_2})& \;\;\; &\textsc{(S)}\;c = \perm(\mA^{i_3}_{o_3})	& \;\;\; &	\\
\end{matrix}
\end{equation}
In their paper they chose $(i_1,i_2,i_3)=(1,2,3)$ and $(o_1,o_2,o_3)=(3,4,5)$. The rules can be subsumed shortly by: \textit{Each CLAUSE-gadget must be used and no input-output switching is allowed (i.e., if one enters at $i_j$ one must leave at $o_j$), since this yields a permanent of zero.}

\subsection{A preventing Identity.}
In the computational complexity lecture notes of Oded Goldreich one can find the following lines regarding Valiant's proof
\begin{quote}\textit{
 ``If there a transformation from $\Phi$ to $\grG$, such that $\perm(\mA) = c^m\#\Phi$, then assuming $\classNP \varsubsetneq \classBPP$, the constant $c$ must be even.''
}
\end{quote}
This statement refers to the constant $c$ in the rules for the XOR-gadget (see Rules \ref{reqXORgadget}) and which is $c=4$ in Valiant's proof. If the constant would be $c=1$ one has a parsimonious reduction to the permanent, but this would contradict the assumption $\classNP \varsubsetneq \classBPP$. In the next lines, we show that this would not only contradict that well believed complexity assumption, but also a known determinant identity. The tool we use is that we can evaluate the permanent modulo $2$ since the determinant and the permanent are equal in $\mathbb{Z}/2\mathbb{Z}$. 

One could observe that there is no need to restrict the requirements of the XOR-gadget to such narrow rules like the ones described in Rules \ref{reqXORgadget}. If one is only interested in deciding whether the formula has a satisfying assignment, one bit of information is enough and thus computing modulo $2$ is sufficient. This weakening can be translated to the following new rules for the XOR-gadget:
\begin{equation}\label{reqXORgadget2}
\begin{matrix}[l]
	\textsc{(A)}\;0 \equiv \perm(\mA) \equiv \detm(\mA) \pmod{2} 		& \;\;\; &\textsc{(B)}\;0 \equiv \perm(\mA^{i_1,i_2}_{o_1,o_2}) \equiv \detm(\mA^{i_1,i_2}_{o_1,o_2}) \pmod{2}\\ 
	\textsc{(C)}\;1 \equiv \perm(\mA^{i_1}_{o_1}) \equiv \detm(\mA^{i_1}_{o_1}) \pmod{2} 	& \;\;\; &\textsc{(D)}\;1 \equiv \perm(\mA^{i_2}_{o_2}) \equiv \detm(\mA^{i_2}_{o_2}) \pmod{2} \\
	\textsc{(E)}\;0 \equiv \perm(\mA^{i_1}_{o_2}) \equiv \detm(\mA^{i_1}_{o_2}) \pmod{2} 	& \;\;\; &\textsc{(F)}\;0 \equiv \perm(\mA^{i_2}_{o_1}) \equiv \detm(\mA^{i_2}_{o_1}) \pmod{2}
\end{matrix}
\end{equation}
Suppose, we are given a boolean formula from $\udreisat$. All cyclic covers that do not yield a satisfying assignment sum up to an even weight and thus vanish modulo $2$. If there is a satisfying assignment, then this satisfying assignment yields an odd sum, which could be detected by evaluating the determinant modulo $2$. 

However, regarding the determinant there exists an identity that prevents someone from finding such a gadget.

\begin{theorem}[Desnanot-Jacobi identity ($\dji$)]
 Let $\mM = m_{i,j}$, $1 \leq i,j \leq n$ be a square matrix 
\begin{equation}
 \det(\mM)\det(\mM^{i_1,i_2}_{o_1,o_2}) = \det(\mM^{i_1}_{o_1})\det(\mM^{i_2}_{o_2}) - \det(\mM^{i_2}_{o_1})\det(\mM^{i_1}_{o_2})
\end{equation}
\end{theorem}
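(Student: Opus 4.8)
The plan is to deduce the identity from the fundamental relation $\mM \cdot \mathrm{adj}(\mM) = \det(\mM)\,I_n$ between a matrix and its adjugate, realized concretely through a Schur-complement computation; Jacobi's theorem on the minors of the adjugate (at $k=2$) is an alternative route to the same four cofactors. First I would reduce to a canonical position: by simultaneously permuting the rows and the columns of $\mM$ I may assume that the deleted columns are the first and the last, $i_1 = 1$, $i_2 = n$, and likewise that the deleted rows are $o_1 = 1$, $o_2 = n$. Each such permutation multiplies every determinant appearing in the statement by the same overall sign $\pm 1$, so the general identity follows from this normalized one, provided the sign is tracked consistently on both sides.

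For the normalized statement let $N := \mM^{i_1,i_2}_{o_1,o_2}$ denote the central $(n-2)\times(n-2)$ block obtained by deleting rows and columns $1$ and $n$, and first treat the generic case $\det(N)\neq 0$. Reblocking $\mM$ so that $N$ occupies the top-left corner, I would write $\tilde{\mM} = \left(\begin{smallmatrix} N & C \\ R & D \end{smallmatrix}\right)$, where $C$ is the $(n-2)\times 2$ matrix of the two deleted columns restricted to the kept rows, $R$ the $2\times(n-2)$ matrix of the two deleted rows restricted to the kept columns, and $D$ the $2\times 2$ corner of the four crossing entries. The Schur-complement formula then gives $\det(\tilde{\mM}) = \det(N)\,\det(D - R N^{-1} C)$.

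The key observation is that each entry $D_{jk} - R_j N^{-1} C_k$ of the $2\times 2$ Schur complement is, up to the factor $\det(N)$, exactly one of the single-deletion minors $\det(\mM^{i}_{o})$: it equals the determinant of $N$ bordered by the $j$-th deleted row and the $k$-th deleted column, divided by $\det(N)$, and that bordered determinant is $\pm\det(\mM^{i}_{o})$ for the appropriate single indices. Expanding the $2\times 2$ determinant $\det(D - R N^{-1} C)$ and multiplying through by $\det(N)$ therefore converts the relations $\det(\tilde{\mM}) = \pm\det(\mM)$ and $\det(N) = \det(\mM^{i_1,i_2}_{o_1,o_2})$ directly into the claimed bilinear relation among $\det(\mM^{i_1}_{o_1})$, $\det(\mM^{i_2}_{o_2})$, $\det(\mM^{i_2}_{o_1})$, and $\det(\mM^{i_1}_{o_2})$. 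To discharge the hypothesis $\det(N)\neq 0$ I would note that both sides of the identity are polynomials in the entries of $\mM$ agreeing on the dense set where $\det(N)\neq 0$, hence agreeing identically.

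The main obstacle I anticipate is purely the sign bookkeeping: identifying each bordered determinant with the correct $\pm\det(\mM^{i}_{o})$, and verifying that the single overall sign produced by the normalizing permutation together with the reblocking cancels consistently against the minus sign on the right-hand side, so that the four cofactors assemble into the stated difference and not its negative. The density step and the Schur-complement algebra are routine; everything delicate is concentrated in these signs. (Should one prefer to bypass the density argument, invoking Jacobi's complementary-minor theorem for $\mathrm{adj}(\mM)$ at $k=2$ yields the same four cofactors as a $2\times2$ determinant, but it replaces the density argument with a proof of Jacobi's theorem while leaving the sign analysis essentially unchanged.)
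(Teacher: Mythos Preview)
Your proposal is a correct and standard route to the Desnanot--Jacobi identity via the Schur complement, with the density argument handling the degenerate case $\det(N)=0$. The paper, however, does not prove this theorem at all: it is stated as a classical result and simply referenced to Dodgson's condensation algorithm \cite{Dog66}. So there is nothing to compare against; your sketch supplies what the paper omits.

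One remark on the sign bookkeeping you flag: the identity as stated in the paper is only correct up to a global sign $(-1)^{i_1+i_2+o_1+o_2}$ (or an equivalent expression depending on conventions) when the deleted rows and columns are in arbitrary positions. Your normalization to $i_1=o_1=1$, $i_2=o_2=n$ is exactly where this sign becomes $+1$, so the computation goes through cleanly there, but when you undo the permutation you should expect the general statement to acquire this sign. For the paper's purposes this is harmless, since every application is taken modulo~$2$; but if you intend a proof of the identity as literally written for arbitrary $i_1,i_2,o_1,o_2$, be aware that the displayed equation is not quite the correct general form over~$\mathbb{Z}$.
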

One finds the Desnanot-Jacobi identity mostly in the literature in connection with Dodgson condensation algorithm \cite{Dog66}. If one applies the $\dji$ to the weakened Rules \ref{reqXORgadget2} for the XOR-gadget, one gets
\begin{equation*}
 \detm(\mA)\detm(\mA^{i_1,i_2}_{o_1,o_2}) = \detm(\mA^{i_1}_{o_1})\detm(\mA^{i_2}_{o_2}) - \detm(\mA^{i_1}_{o_2})\detm(\mA^{i_2}_{o_1}) \Leftrightarrow 0\cdot 0 \equiv 1\cdot 1 - 0\cdot 0\pmod{2}
\end{equation*}
which is false. Similar, if one applies this to the rules of of Ben-Dor and Halevis CLAUSE-gadget, one gets
\begin{equation*}
 \detm(\mA)\detm(\mA^{i_1,i_2}_{o_1,o_2}) = \detm(\mA^{i_1}_{o_1})\detm(\mA^{i_2}_{o_2}) - \detm(\mA^{i_1}_{o_2})\detm(\mA^{i_2}_{o_1}) \Leftrightarrow 0\cdot 1 \equiv 1\cdot 1 - 0\cdot 0\pmod{2}
\end{equation*}
which is also false. Additionally, if one lays down the rules for the EQUALITY-gadget of Blaeser et al. \cite{Bla07} one gets
\begin{equation*}
 \detm(\mA)\detm(\mA^{i_1,i_2}_{o_1,o_2}) = \detm(\mA^{i_1}_{o_1})\detm(\mA^{i_2}_{o_2}) - \detm(\mA^{i_1}_{o_2})\detm(\mA^{i_2}_{o_1}) \Leftrightarrow 1\cdot 1 \equiv 0\cdot 0 - 0\cdot 0\pmod{2}
\end{equation*}
which again is false\footnote{The EQUALITY-gadget could actually easily be used to replace the XOR-gadget}. So, the ability to decide if a formula from $\udreisat$ has a satisfying assignment by trying to find a suitable gadget can not succeed.  

\begin{proposition}
The constant $c$ in Valiants proof can not be odd due to the Desnanot-Jacobi identity.
\end{proposition}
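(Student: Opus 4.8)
The plan is to exploit the single fact that over $\mathbb{Z}/2\mathbb{Z}$ the sign of every permutation is trivial, so that $\perm(\mathbf{M}) \equiv \detm(\mathbf{M}) \pmod{2}$ for every integer matrix $\mathbf{M}$. Valiant's XOR-gadget is an integer matrix $\mA$ satisfying Rules \ref{reqXORgadget} with the valid input/output choice $(i_1,o_1)$ and $(i_2,o_2)$, so I would first reduce those six permanent equations modulo $2$ and replace each permanent by the corresponding determinant. This yields, all modulo $2$, the residues $\detm(\mA) \equiv 0$, $\detm(\mA^{i_1,i_2}_{o_1,o_2}) \equiv 0$, $\detm(\mA^{i_1}_{o_1}) \equiv \detm(\mA^{i_2}_{o_2}) \equiv c$, and $\detm(\mA^{i_1}_{o_2}) \equiv \detm(\mA^{i_2}_{o_1}) \equiv 0$.

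The second step is to apply the $\dji$ to the square matrix $\mA$ with exactly the deleted column pair $i_1,i_2$ and row pair $o_1,o_2$ that appear in the rules:
\begin{equation*}
 \detm(\mA)\,\detm(\mA^{i_1,i_2}_{o_1,o_2}) = \detm(\mA^{i_1}_{o_1})\,\detm(\mA^{i_2}_{o_2}) - \detm(\mA^{i_2}_{o_1})\,\detm(\mA^{i_1}_{o_2}).
\end{equation*}
Substituting the residues from the first step reduces the left-hand side and the subtracted term to $0$, leaving $0 \equiv c\cdot c = c^2 \pmod{2}$. Since $c^2 \equiv c \pmod{2}$ for every integer $c$, this forces $c \equiv 0 \pmod{2}$, that is, $c$ is even; an odd $c$ would make the identity read $0 \equiv 1 \pmod{2}$, a contradiction. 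This proves the proposition.

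I do not expect a genuine obstacle, since the argument is a one-line consequence of the identity once the mod-$2$ reduction is in place; the main work is conceptual rather than computational. The only point deserving care is the bookkeeping: the $\dji$ fixes which two minors are multiplied and which two are subtracted, so one must verify that the roles of $i_1,i_2,o_1,o_2$ in the identity are aligned with their roles in Rules \ref{reqXORgadget}, ensuring that it is precisely the ``diagonal'' minors $\mA^{i_1}_{o_1},\mA^{i_2}_{o_2}$ that carry the value $c$ while the ``off-diagonal'' minors $\mA^{i_1}_{o_2},\mA^{i_2}_{o_1}$ vanish. The identical computation applied to the parsimonious case $c=1$, or to the EQUALITY-gadget, reproduces the same contradiction $0\equiv 1 \pmod 2$, which is exactly why no odd-$c$ (in particular no parsimonious) reduction to the permanent of this gadget type can exist, matching the complexity-theoretic expectation $\classNP \varsubsetneq \classBPP$ recalled above.
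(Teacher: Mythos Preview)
Your proposal is correct and follows essentially the same approach as the paper: reduce Rules \ref{reqXORgadget} modulo $2$, replace permanents by determinants, and plug into the $\dji$ to obtain the contradiction $0\equiv c^2\equiv c\pmod 2$ when $c$ is odd. The paper presents exactly this computation in the paragraph immediately preceding the proposition (writing it as $0\cdot 0\equiv 1\cdot 1-0\cdot 0\pmod 2$), so there is nothing to add.
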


\textbf{A $\dji$-compatible gadget.} One could ask the question: What kind of useful gadget is possible, that is compatible with the $\dji$? The answer is: A PLANARITY-gadget. A PLANARITY-gadget substitutes each
edge crossing with a planar subgraph and can be described with the rules
\begin{equation}\label{reqOpenQuestion}
\begin{matrix}[l]
	\textsc{(A)}\;1 \equiv \perm(\mA) \equiv \detm(\mA) \pmod{2} 		& \;\;\; &\textsc{(B)}\;1 \equiv \perm(\mA^{i_1,i_2}_{o_1,o_2}) \equiv \detm(\mA^{i_1,i_2}_{o_1,o_2}) \pmod{2}\\ 
	\textsc{(C)}\;1 \equiv \perm(\mA^{i_1}_{o_2}) \equiv \detm(\mA^{i_1}_{o_2}) \pmod{2} 	& \;\;\; &\textsc{(D)}\;1 \equiv \perm(\mA^{i_2}_{o_1}) \equiv \detm(\mA^{i_2}_{o_1}) \pmod{2}\\
	\textsc{(E)}\;0 \equiv \perm(\mA^{i_1}_{o_1}) \equiv \detm(\mA^{i_1}_{o_1}) \pmod{2} 	& \;\;\; &\textsc{(F)}\;0 \equiv \perm(\mA^{i_2}_{o_2}) \equiv \detm(\mA^{i_2}_{o_2}) \pmod{2}
\end{matrix}
\end{equation}
, which translates into
\begin{equation*}
 \detm(\mA)\detm(\mA^{i_1,i_2}_{o_1,o_2}) = \detm(\mA^{i_1}_{o_1})\detm(\mA^{i_2}_{o_2}) - \detm(\mA^{i_1}_{o_2})\detm(\mA^{i_2}_{o_1}) \Leftrightarrow 1\cdot 1 \equiv 0\cdot 0 - 1\cdot 1\pmod{2}
\end{equation*}
and is true. 

In the following we introduce the following notation regarding the rules of a gadget.
\begin{defin}[Signature] For each gadget $\gad$ with the input-output pairs $(i_1,o_1)$ and $(i_2,o_2)$ we define the signature $\mathcal{S}_\gad$ as
 \begin{equation*}
  \mathcal{S}_\gad \define \left(r_1, r_2, r_3, r_4, r_5, r_6 \right)_p
 \end{equation*}
  whereof the $r_i$'s are from
 \begin{equation*}
	\begin{matrix}[l]
		\textsc{(A)}\;\perm(\mA) \equiv r_1 \pmod{p} & \;\;\; &\textsc{(B)}\;\perm(\mA^{i_1,i_2}_{o_1,o_2}) \equiv r_2 \pmod{p}\\
		\textsc{(C)}\;\perm(\mA^{i_1}_{o_1}) \equiv r_3 \pmod{p} & \;\;\; &\textsc{(D)}\;\perm(\mA^{i_2}_{o_2}) \equiv r_4 \pmod{p}\\
		\textsc{(E)}\;\perm(\mA^{i_1}_{o_2}) \equiv r_4 \pmod{p} & \;\;\; &\textsc{(F)}\;\perm(\mA^{i_2}_{o_1}) \equiv r_5 \pmod{p}
	\end{matrix}
 \end{equation*}
 When the signature array has no subscript $p$ then we refer to the integer values rather than the modulo $p$ reduced value of the permanent. $\blacktriangleleft$
\end{defin}

\section{A planar Valiant graph.}

Our progress depends on the observation that the $\dji$ is a determinant identity and a similar identity for the permanent is not known. The CLAUSE-gadget as well as the VARIABLE-gadget are uncritical, since they already are unweighted graphs. But the XOR-gadget has edges with weights different from $0$ and $1$, which does not allow to apply Proposition \ref{propPermPerf}. For a potential
new XOR-gadget we searched for $(0/1)$-matrices that has to following properties over $\mathbb{Z}/p\mathbb{Z}$:
\begin{itemize}
 \item whenever an disallowed path is taken, i.e., one that can not lead to a satisfying assignment, the permanent vanishes in $\mathbb{Z}/p\mathbb{Z}$.
 \item whenever an allowed path is taken, i.e., one that can lead to a satisfying assignment the permanent does not vanish in $\mathbb{Z}/p\mathbb{Z}$.
\end{itemize}
The weakening takes place in the second point. We do not force a single constant $c$, but only a value that is different from zero. That means the signature of a potential XOR-gadget is of the form
\begin{equation} 
 \mathcal{S} = (0,0,\neq 0,\neq 0,0,0)_p
\end{equation}

The new XOR-gadget $\mathfrak{G}_3$ we found, for $p=3$, can be seen in Figure \ref{newXORgadget}. 

\begin{figure}
\centering
\begin{minipage}[h]{80mm}
	\begin{tikzpicture}
 		[scale=.8]
		\tikzstyle{every loop} = [-latex]
		\node[circle,  inner sep = 2pt, thick] (I1) at (2,0) {};
		\node[circle,  inner sep = 2pt, thick] (I2) at (2,3) {};
		\node[circle,  inner sep = 2pt, thick] (O1) at (9,3) {};
		\node[circle,  inner sep = 2pt, thick] (O2) at (9,0) {};

		\node[circle, fill, draw, inner sep = 2pt] (n1) at (4,0){}; \node[] at (4,-0.3) {\tiny{$1$}};
		\node[circle, fill, draw, inner sep = 2pt] (n2) at (4,3){}; \node[] at (4,3.3) {\tiny{$2$}};
		\node[circle, fill, draw, inner sep = 2pt] (n3) at (7,3){}; \node[] at (7.2,2.8) {\tiny{$3$}};
		\node[circle, fill, draw, inner sep = 2pt] (n4) at (7,0){}; \node[] at (7.2,0.2) {\tiny{$4$}};
		\node[circle, fill, draw, inner sep = 2pt] (n5) at (5.5,0.5){}; \node[] at (5.8,0.5) {\tiny{$5$}};
		\node[circle, fill, draw, inner sep = 2pt] (n6) at (5.5,2.5){}; \node[] at (5.8,2.4) {\tiny{$6$}};
		
		\draw[-latex] (n1) to (n3);
		\draw[-latex] (n1) to (n6);
		
		\draw[-latex] (n2) to (n1);
		\draw[-latex] (n2) to (n4);
		\draw[-latex] (n2) edge[bend left=10] (n6);

		\draw[-latex] (n3) to (n4);
		\draw[-latex] (n3) to [in=60,out=120, loop] (n3);
		\draw[-latex] (n3) to (n6);
		
		\draw[-latex] (n4) to [in=240,out=300, loop] (n4);
		\draw[-latex] (n4) edge[bend left=10] (n6);

		\draw[-latex] (n5) to (n2);
		\draw[-latex] (n5) to (n3);
		\draw[-latex] (n5) to [in=240,out=300, loop] (n5);

		\draw[-latex] (n6) edge[bend left=10] (n2);
		\draw[-latex] (n6) edge[bend left=10] (n4);
		\draw[-latex] (n6) to (n5);
		\draw[-latex] (n6) to [in=60,out=120, loop] (n6);

		\draw[-latex, color=red] (I1) edge[bend left=0] (n1);		
		\draw[-latex, color=red] (I2) edge[bend left=0] (n2);
		\draw[-latex, color=red] (n3) edge[bend left=0] (O1);
		\draw[-latex, color=red] (n4) edge[bend left=0] (O2);
	\end{tikzpicture}
\end{minipage}
\begin{minipage}[h]{50mm}
$\begin{pmatrix}[r]
 	0 & 0 & 1 & 0 & 0 & 1\\
	1 & 0 & 0 & 1 & 0 & 1\\
	0 & 0 & 1 & 1 & 0 & 1\\
	0 & 0 & 0 & 1 & 0 & 1\\
	0 & 1 & 1 & 0 & 1 & 0\\
	0 & 1 & 0 & 1 & 1 & 1
\end{pmatrix}$
\end{minipage}
\caption{$\mathfrak{G}_{\text{3}}$: The new XOR-gadget. The red lines are not directly part of the gadget.}
\label{newXORgadget}
\end{figure}
It has $1$,$2$ as input nodes and $3$,$4$ as its output nodes. The \textit{correct}\footnote{That means, the path that could lead to a satisfying assignment} path is from $1$ to $3$ and $2$ to $4$. Entering in $1$ and leaving in $4$ (or $2$ and $3$) yields, according to its signature, a permanent of $0$ in $\mathbb{Z}/3\mathbb{Z}$. Its signature is
\begin{equation}\label{reqnewXORgadget}
 \mathcal{S}_{\gad_3} = (6,3,4,4,6,3) = (0,0,1,1,0,0)_3
\end{equation}

We now proof the first theorem, which shows that already this new gadget allows to efficiently decide the satisfiability of a formula from $\udreisat$ if $\hat{\grG}$ turns out to be planar. 

\begin{theorem}\label{theorem2a}
 Let $\Phi$ be boolean formula from $\udreisat$ with $n$ variables and $m$ clauses and let $\grG$ be the graph that is constructed from $\Phi$ according to Valiant's proof, using the gadgets $\mathfrak{G}_1$ (see Figure \ref{clauseGadget}), $\mathfrak{G}_2$ (see Figure \ref{variableGadget}), $\mathfrak{G}_3$ (see Figure \ref{newXORgadget}). If $\hat{\grG}$ is planar, then there exists an algorithm $\mathcal{A}$ that decides $\Phi$'s satisfiability in polynomial time $\mathcal{O}(m^3)$.
\end{theorem}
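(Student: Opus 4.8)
The plan is to reduce the decision to a single congruence, $\perm(\mA_\grG)\equiv K\cdot\#\Phi \pmod{3}$ with $K\not\equiv 0$, and then to evaluate this residue in polynomial time through the perfect-matching correspondence. I would start from Proposition \ref{prop1}, which writes $\perm(\mA_\grG)$ as the sum of the weights of all cyclic covers of $\grG$. Every cyclic cover can be described gadget by gadget: at each gadget the cover enters and leaves through some set of the external (weight-$1$) edges and closes off the remaining internal nodes, and summing over the internal completions turns the contribution of that gadget into the permanent of the corresponding minor of its adjacency matrix, i.e. into one of the entries of its signature. Grouping covers by the truth assignment $\alpha$ they induce on the VARIABLE-gadgets $\mathfrak{G}_2$ then expresses $\perm(\mA_\grG)$ as a sum, over assignments and over global routings, of products of signature entries.

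The mechanism that makes the $(0/1)$-gadget work is that $\mathfrak{G}_3$ reproduces the XOR-signature $(0,0,1,1,0,0)_3$ of Equation (\ref{reqnewXORgadget}) only \emph{modulo $3$}. A routing that crosses an XOR-gadget on a wrong input-output pair, on both pairs, or on neither pair has minor-permanent $6$ or $3$, all $\equiv 0 \pmod{3}$, while each of the two admissible crossings contributes the unit $r_3\equiv r_4\equiv 1$. Hence, after reduction mod $3$, every global routing that is inconsistent with a genuine truth assignment is annihilated, and the surviving routings contribute a factor $1$ from every XOR-gadget. What remains to be controlled are the CLAUSE- and VARIABLE-factors of a surviving routing, and this is the step I expect to be the main obstacle.

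I would settle that step by a finite computation with the matrices of Figures \ref{clauseGadget} and \ref{variableGadget}, reduced mod $3$. On the one hand, if a clause is falsified then all three external literal-edges of its $\mathfrak{G}_1$ are forced, the central node cannot be covered, and the gadget contributes a literal $0$; so falsifying assignments drop out entirely. On the other hand, for a satisfied clause (at most two external edges used) and for either the true-cycle or the false-cycle of a VARIABLE-gadget (the remaining occurrence-nodes closed by their self-loops), the relevant minor-permanents are nonzero mod $3$ --- for instance $\perm(\mA_{\mathfrak{G}_1})=1$ when every literal of the clause is true. Since $\mathbb{Z}/3\mathbb{Z}$ is a field, the product of these units over all gadgets is again a unit. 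This is the mod-$3$ analogue of Valiant's exact identity $\perm(\mA_\grG)=4^{3m}\#\Phi$ (note $4\equiv 1 \pmod{3}$); and because $\Phi\in\udreisat$ forces $\#\Phi\in\{0,1\}$, I do not even need all satisfying assignments to share the same constant: the total is $0 \pmod{3}$ exactly when $\Phi$ is unsatisfiable, and a single nonzero residue exactly when $\Phi$ has its unique satisfying assignment. The uniqueness hypothesis is used precisely here, to exclude the only failure mode of a mod-$3$ test, namely a satisfiable formula with $\#\Phi$ divisible by $3$.

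It then remains to compute $\perm(\mA_\grG)\bmod 3$ efficiently. Since $\mathfrak{G}_1,\mathfrak{G}_2,\mathfrak{G}_3$ are all unweighted, $\mA_\grG$ is a $(0/1)$-matrix, so Proposition \ref{propPermPerf} equates $\perm(\mA_\grG)$ with the number of perfect matchings of the bipartite double cover $\hat\grG$; by hypothesis $\hat\grG$ is planar, so Proposition \ref{propCons} and the FKT algorithm (Theorem \ref{theoremKastelyen}) compute this number in polynomial time, after which one reduction mod $3$ returns the answer. For the bound I would note that $\grG$ has $\mathcal{O}(m)$ vertices --- $m$ constant-size CLAUSE-gadgets, VARIABLE-gadgets of total size $\mathcal{O}(m)$ bounded by the $3m$ literal occurrences, and $3m$ constant-size XOR-gadgets, with $n=\mathcal{O}(m)$ --- so $\hat\grG$ has $\mathcal{O}(m)$ vertices and the determinant/Pfaffian evaluation underlying FKT costs $\mathcal{O}(m^3)$, which is the claimed running time of $\mathcal{A}$.
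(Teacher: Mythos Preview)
Your proposal is correct and follows essentially the same line as the paper's proof: both argue that replacing Valiant's XOR-gadget by $\mathfrak{G}_3$ turns every ``bad'' contribution into a multiple of $3$ and every ``good'' contribution into $4\equiv 1\pmod 3$ per XOR-gadget, so that $\perm(\mA_\grG)\bmod 3$ decides satisfiability, and then both invoke Propositions~\ref{propPermPerf} and~\ref{propCons} together with the FKT algorithm on the planar $\hat\grG$ and a linear-in-$m$ vertex count to obtain the $\mathcal{O}(m^3)$ bound. Your write-up is in fact somewhat more explicit than the paper's about the gadget-by-gadget decomposition and about where exactly the uniqueness hypothesis is consumed (ruling out $\#\Phi\equiv 0\pmod 3$ for a satisfiable $\Phi$); the paper simply asserts the correspondence with Valiant's $4^{3m}$ and computes $(1+3\mathbb{Z})^{3m}\equiv 1$.
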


\begin{proof}
	For the proof, we follow the arguments of Valiant in a close way, but use the notation of an $F$-completion from Ben-Dor and Halevi (see \cite{Ben93}, Definition 8). We do not rewrite the whole proof, but only argue along the essential step, that is, where the sum over all cyclic-covers is computed. We describe an algorithm that does:
	\begin{alltt}
	\small{\(\mathcal{A}(\Phi)\):
	\(\grG \leftarrow\) buildValiantGraph(\(\gad\sb{1},\gad\sb{2},\gad\sb{3},\Phi\))
	\(\mathsf{s} \leftarrow \mathsf{PerfMatch}(\hat{\grG})\pmod{3}\)
	return \(\mathsf{1}\sb{\mathsf{s}>0}\)}
	\end{alltt}
	The input of $\mathcal{A}$ is a formula $\Phi$ from $\udreisat$. The only gadget we have substituted so far is the XOR-gadget. In all cases where a cyclic cover in Valiant's proof contributes a weight of $0$, the same cyclic cover in our case contributes a weight that is $0$ or a multiple of $3$. And in all cases where in Valiant's proof the cyclic cover contributes a weight of $4^{3m}$, the same cyclic cover in our case contributes a weight that is congruent to $1$ modulo $3$. Since any cyclic cover must be of one of these types, the final permanent value is the sum over
\begin{align*}
\perm(\mA) & = \sum_{C \in \mathcal{C}^F} W(C) = 3\cdot \mathbb{Z} + \mathsf{1}_{\phi\text{ is sat}}(1+3\cdot \mathbb{Z})^{3m} = 3\cdot \mathbb{Z} + \mathsf{1}_{\phi\text{ is sat}}(1 + 3\cdot \mathbb{Z})\\
	   & = \mathsf{PerfMatch}(\hat{\grG})
\end{align*}
After reducing modulo $3$ we get
\begin{equation}
	\mathsf{PerfMatch}(\hat{\grG}) \pmod{3} =  \mathsf{1}_{\phi\text{ is sat}} = \begin{cases}
	              0 & \Phi \text{ is unsatisfiable}\\
		      1 & \Phi \text{ is satisfiable}\\
	             \end{cases}
\end{equation}

If $\Phi$ consists of $n$ variables $x_1,...,x_n$ and $m$ clauses. $x_i$ together with $\neg x_i$ occurs $n_i$ times, thus $\sum^n_{i=1} n_i \leq 3m$ as well as $n \leq 3m$, since $\Phi$ is a 3CNF.
Hence, the graph $\grG$ consists of 
	\begin{align*}
	m|\mathfrak{G}_1| + \sum^n_{i=1}|\mathfrak{G}_2(x_i)| +  3m|\mathfrak{G}_3| 	& = m\cdot 4 + \sum^n_{i=1}(2+2n_i) + 3m\cdot 6 \\
											& = 2n+2\sum^n_{i=1}n_i + 22m \\
											& < 6m + 6m + 22m \\
											& < 34m 
	\end{align*}
	nodes. The bipartite double cover operations doubles the number of nodes, so we can bound the number of nodes from above via $|\hat{\grG}| \leq 68m$. Finally, Kasteleyn's algorithms runs in time 
	that is cubic in the number of nodes, which yields a total running time of our algorithm of $\mathcal{O}(68^3m^3)$, or $\mathcal{O}(m^3)$ for large $m$.
\end{proof}

The following proposition shows the well known consequence between deciding satisfiability and finding a solution.

\begin{corollary}
 If one could decide the satisfiability of a formula $\Phi$ from $\udreisat$ that has $n$ distinct variables in time $T$, one can find the solution in time $\Omega(n)T$.
\end{corollary}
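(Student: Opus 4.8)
The plan is to exhibit the standard search-to-decision (downward self-reducibility) reduction. Let $\mathcal{D}$ denote the assumed decision procedure for $\udreisat$ running in time $T$ on formulas with $n$ distinct variables. The idea is to fix the variables $x_1,\ldots,x_n$ one at a time, using $\mathcal{D}$ to determine which value each variable must take in the (at most one) satisfying assignment.

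First I would call $\mathcal{D}(\Phi)$; if it reports unsatisfiable, there is nothing to find and the procedure halts. Otherwise I would process the variables in order: having already determined values for $x_1,\ldots,x_{i-1}$, I substitute these into $\Phi$, additionally set $x_i = 1$, and call $\mathcal{D}$ on the resulting 3CNF. If it is satisfiable I record $x_i = 1$, otherwise $x_i = 0$, and move on. After the $n$-th step the recorded bits form the desired assignment.

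The correctness rests on the observation that each restricted formula is again a legitimate $\udreisat$ instance, so the guarantees of $\mathcal{D}$ apply to it. Substituting a constant for a variable in a 3CNF either deletes a satisfied clause or shortens a clause, so the result is again a 3CNF; and since the set of satisfying assignments of a restriction is a subset of those of $\Phi$, the restricted formula inherits the property of having at most one satisfying assignment. Denoting by $a$ the unique satisfying assignment of $\Phi$ (which exists once the first test succeeds), an easy induction shows that after $x_1,\ldots,x_{i-1}$ are fixed to $a_1,\ldots,a_{i-1}$, the formula $\Phi_{x_i=1}$ (carrying the earlier substitutions) is satisfiable if and only if $a_i=1$; hence each step recovers the correct bit.

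For the running time, the procedure issues at most $n+1$ calls to $\mathcal{D}$, each costing $T$, together with $O(n)$ substitution steps whose cost is linear in the size of $\Phi$ and therefore dominated. This gives a total time of $\Omega(n)T$, as claimed. I do not expect a genuine obstacle here; the only point requiring care — and the step I would emphasize — is verifying that every intermediate formula stays within $\udreisat$, since it is precisely this closure under restriction that lets us keep invoking the promised decision procedure.
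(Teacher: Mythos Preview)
Your proposal is correct and follows the same search-to-decision template as the paper: make an initial satisfiability check, then use $n$ further oracle calls to pin down the bits one by one. The only minor difference is that the paper, exploiting uniqueness, queries $\Phi_{x_i=1}$ independently for each $i$ without carrying forward the earlier substitutions, whereas you cumulatively restrict; both variants are valid and yield the same $\Omega(n)T$ bound.
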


\begin{proof}
 To find the solution, if it exists, apply to following algorithm, which makes calls to the oracle $O^{\udreisat}(\Phi)$. That oracle implements the algorithm from Theorem $\ref{theorem2a}$ and returns $0$ or $1$ according to $\Phi$'s satisfiability:\\

\begin{minipage}[h]{100mm}
  1. $v \leftarrow O^{\udreisat}(\Phi)$.\\
  2. assignment $\leftarrow$ [$0$, ..., $0$] \\
  3. if $v = 0$ then return \texttt{'unsatisfiable'}.\\
  4. for $i = 1$ to $n$ do\\
  5. $\;\;$ $v \leftarrow O^{\udreisat}(\Phi_{x_i = 1})$.\\
  6. $\;\;$ if $v = 0$ then assignment[i] = $0$\\
  7. $\;\;$ else assignment[i] = $1$\\
  8. return assignment\\
\end{minipage}

 At the end, in the array \textit{assignment} at position $i$, a value of the $i$-th variable is stored that leads to a satisfying assignment. The loop makes $n$ calls to the oracle, and each loop is evaluated in time $T$.
\end{proof}

\begin{theorem}\label{theorem2b}
 Let $\Phi$ be random boolean formula from $\dreisat$ with $n$ variables and $m$ clauses and let $\grG$ be the graph that is constructed from $\Phi$ according to Valiant's proof, using the gadgets $\mathfrak{G}_1$ (see Figure \ref{clauseGadget}), $\mathfrak{G}_2$ (see Figure \ref{variableGadget}), $\mathfrak{G}_3$ (see Figure \ref{newXORgadget}). If $\hat{\grG}$ is planar, then there exists an algorithm $\mathcal{A}$ that decides $\Phi$'s satisfiability in randomized polynomial time $\mathcal{O}(km^3)$ with an a success probability of 
\begin{equation*}
	\mathsf{Pr}[\mathsf{s} \leftarrow \mathcal{A}(\Phi,k) | \mathsf{1}_{\Phi\;\mathrm{is\;sat}} = \mathsf{s} ] = 1 - \mathsf{s}\frac{1}{3^k} 
\end{equation*}
\end{theorem}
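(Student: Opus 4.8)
The plan is to reduce the general case to the unique case already settled in Theorem \ref{theorem2a} and then to amplify a one-sided error. First I would re-run the cyclic-cover bookkeeping from the proof of Theorem \ref{theorem2a} \emph{without} assuming uniqueness: every cyclic cover that fails to encode a consistent satisfying assignment still contributes a weight in $3\cdot\mathbb{Z}$ (the ``wrong-path'' entries $r_1=r_2=r_5=r_6$ of the signature are $0$), while each of the $\#\Phi$ satisfying assignments contributes a product of XOR-gadget values that is $\equiv 1\pmod 3$ by the signature (\ref{reqnewXORgadget}). Summing over all cyclic covers therefore yields $\perm(\mA_\grG)\equiv\#\Phi\pmod 3$, and by Proposition \ref{propCons} this value equals $\mathsf{PerfMatch}(\hat{\grG})\bmod 3$ and is computable in time $\mathcal{O}(m^3)$ whenever $\hat{\grG}$ is planar.

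The obstacle is that, in contrast to $\udreisat$, a general $\dreisat$ instance may have $\#\Phi\equiv 0\pmod 3$ while still being satisfiable, so a single evaluation can produce a false negative. To break this degeneracy I would interpose a randomizing reduction at the \emph{formula} level, since the edge weights must stay in $\{0,1\}$ for Proposition \ref{propPermPerf} to apply and hence the randomness cannot live in the matrix. Concretely, before building $\grG$ I would augment $\Phi$ by a random set of constraints — an isolation step in the spirit of Theorem \ref{theorem1} — producing $\Phi'$ with two properties: if $\Phi$ is unsatisfiable then $\Phi'$ is unsatisfiable (so $\#\Phi'=0$ and the reduced permanent is $0$ with certainty), and if $\Phi$ is satisfiable then the surviving count satisfies $\#\Phi'\not\equiv 0\pmod 3$ with probability at least $\tfrac23$. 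The first property gives perfect soundness on the unsatisfiable branch, which is exactly the $1-\mathsf{s}\,\tfrac{1}{3^k}$ value when $\mathsf{s}=0$; the second gives per-trial detection probability $\ge\tfrac23$ on the satisfiable branch.

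The algorithm $\mathcal{A}(\Phi,k)$ then runs $k$ independent trials of (randomize $\Phi\mapsto\Phi'$; build $\grG$; return $\mathsf{1}_{\mathsf{PerfMatch}(\hat{\grG})\not\equiv 0\,(3)}$) and outputs $1$ iff some trial reports a nonzero residue. On an unsatisfiable $\Phi$ every trial returns $0$, so $\mathcal{A}$ outputs $0$ with probability $1$. On a satisfiable $\Phi$ the output is wrong only if all $k$ trials miss, which by independence happens with probability at most $\left(\tfrac13\right)^k$; this yields the claimed success probability $1-\mathsf{s}\,\tfrac{1}{3^k}$. Since each trial builds a graph on $\mathcal{O}(m)$ nodes and invokes the FKT algorithm once at cost $\mathcal{O}(m^3)$, the total running time is $\mathcal{O}(k\,m^3)$.

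The hard part is constructing the per-trial randomization so that all three requirements hold at once: perfect soundness on the unsatisfiable side, detection probability $\ge\tfrac23$ on the satisfiable side, and — crucially — preservation of planarity of $\hat{\grG}$ together with the $\mathcal{O}(m)$ size bound, since otherwise FKT cannot be applied in cubic time. I note that the native Valiant--Vazarani isolation of Theorem \ref{theorem1} only guarantees uniqueness with probability $\Omega(1/n)$, not the constant $\tfrac23$ demanded here; closing this gap requires either an isolation tailored to the modulus $3$ (so that a single random affine hash over $\mathbb{Z}/3\mathbb{Z}$ leaves the count $\not\equiv 0$ with constant probability) or folding a constant number of isolation attempts into one trial, while verifying throughout that the added constraint clauses are realisable by the existing $0/1$ gadgets without destroying circular planarity of the double cover.
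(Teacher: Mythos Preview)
Your approach diverges from the paper's in a substantive way. The paper does \emph{not} inject random constraints in the style of Theorem~\ref{theorem1}; it explicitly remarks that one \emph{could} invoke Valiant--Vazirani but dismisses this because it inflates the node count. Instead, the paper exploits the word ``random'' in the hypothesis directly: the algorithm repeatedly substitutes $x_i = 1$ into $\Phi$ and calls the deterministic routine $\mathcal{A}$ of Theorem~\ref{theorem2a} on $\Phi_{x_i=1}$, declaring ``satisfiable'' as soon as any such call returns a nonzero residue. The probabilistic argument is then that, for a \emph{random} formula, the residues $\#(\Phi_{x_i=1}) \bmod 3$ behave as independent uniform draws from $\mathbb{Z}/3\mathbb{Z}$, so each trial misses with probability $1/3$ and $k$ trials miss with probability $3^{-k}$. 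No isolation lemma, no added clauses.

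The gap in your route is the one you flag yourself but do not close: adding random affine constraints alters the incidence structure of the formula, and you give no mechanism guaranteeing that the bipartite double cover of the \emph{augmented} Valiant graph stays planar. The theorem's hypothesis grants planarity of $\hat{\grG}$ only for the original $\Phi$, and FKT is inapplicable without it. You also note correctly that Valiant--Vazirani's native success probability is $\Omega(1/n)$, not the constant $2/3$ your amplification needs; patching this with a mod-$3$ hash or bundled isolation attempts is plausible but unproven here. The paper's substitution trick sidesteps both obstacles at once: setting $x_i=1$ can only delete or shrink clauses, so it plausibly preserves (or simplifies) the graph rather than introducing new structure, and it costs no extra nodes. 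What the paper's route buys is simplicity and compatibility with the planarity hypothesis; what it costs is reliance on the somewhat informal ``random formula'' assumption to justify the independence of residues across trials.
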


\begin{proof}
 The proof is similar to the one of Theorem \ref{theorem2a} so we can overtake the same algorithm $\mathcal{A}$ and its running time. For the proof, one could also use function $f$ from Valiant-Vazarani and convert the formula into a formula from $\udreisat$. But that will increase unnecessary the number of nodes in the resulting Valiant graph.

The difference here is, that for a formula from $\dreisat$ it could happen, that the number of satisfying solutions is a multiple of $3$, which can not be distinguished from the case that the formula is unsatisfiable. Assume the following algorithm $\mathcal{B}$:\\

\begin{minipage}[h]{80mm}
1. for $i = 1$ to $n$ do \\
2. $\;\;\mathsf{s} \leftarrow \mathcal{A}(\Phi_{x_i = 1})$\\
3. $\;\;$if $\mathsf{s} = 1$ then return \texttt{'satisfiable'}\\
4. return \texttt{'unsatisfiable'}\\
\end{minipage}

Case 1: $\Phi$ is unsatisfiable. The algorithm will never return \texttt{'satisfiable'} in Step 3, since no substitution of $x_i=1$ can make the formula satisfiable. So the algorithm always returns the correct result.

Case 2: $\Phi$ is satisfiable. If the algorithm returns \texttt{'satisfiable'}, the formula must be satisfiable and we are done. Whenever the algorithm returns \texttt{'unsatisfiable'}, each solution set of the formula with $x_i = 1$ must again be a multiple of $3$. Since we have a random boolean formula, there is no dependency among the bits of the solutions, the chance that this happens for each $\Phi_{x_i=1}$ is $3^{-1}$. Thus, the chance after $k$-trials not to detect the satisfiability is $3^{-k}$.
\end{proof}

\section{Enhancements to reach planarity of $\hat{\grG}$.} 

To enhance the previous approach, we try to make the Valiant graph already planar. The reason is that for large graphs with a low connectivity it turns out that already planar graphs do more often have planar bipartite double covers than non-planar graphs.\footnote{Unfortunately, we can base this statement only on a few calculation we did.}
In order to achieve this, we restrict our input to $\textit{pn-planar}$ formulas. 
\begin{figure}[h!]
	\centering
	\begin{minipage}[h]{60mm}
		\begin{tikzpicture}
			[scale=.8]

			\node[circle, draw, inner sep = 2pt] (n1) at (1,0) {1};
			\node[circle, draw, inner sep = 2pt] (n2) at (2,0) {2};
			\node[circle, draw, inner sep = 2pt] (n3) at (3,0) {3};
			\node[circle, draw, inner sep = 2pt] (n4) at (4,0) {4};
			\node[circle, draw, inner sep = 2pt] (n5) at (5,0) {5};
			\node[circle, draw, inner sep = 2pt] (n6) at (6,0) {6};
	
			\node[circle, draw, inner sep = 2pt] (c1) at (1,2) {c1};
			\node[circle, draw, inner sep = 2pt] (c2) at (2.6,2) {c2};
			\node[circle, draw, inner sep = 2pt] (c3) at (4.4,2) {c3};
			\node[circle, draw, inner sep = 2pt] (c4) at (6,2) {c4};
			
			\draw[-] (n1) to (c1);\draw[-] (n1) to (c2);\draw[-] (n1) to (c3);
			\draw[-] (n2) to (c1);
			\draw[-] (n3) to (c1);\draw[-] (n3) to (c3);
			\draw[-] (n4) to (c2);\draw[-] (n4) to (c4);
			\draw[-] (n5) to (c3);\draw[-] (n5) to (c4);
			\draw[-] (n6) to (c2);\draw[-] (n6) to (c4);
			
		\end{tikzpicture}
	\end{minipage}
	\begin{minipage}[h]{50mm}
		\begin{tikzpicture}
			[scale=.8]
			\node[circle, draw, inner sep = 2pt] (n1) at (1,0) {1};
			\node[circle, draw, inner sep = 2pt] (n2) at (2,0) {2};
			\node[circle, draw, inner sep = 2pt] (n3) at (3,0) {3};
			\node[circle, draw, inner sep = 2pt] (n4) at (4,0) {4};
			\node[circle, draw, inner sep = 2pt] (n5) at (5,0) {5};
			\node[circle, draw, inner sep = 2pt] (n6) at (6,0) {6};
			
			\node[circle, draw, inner sep = 2pt] (c1) at (1,2) {c1};
			\node[circle, draw, inner sep = 2pt] (c2) at (2.6,2) {c2};
			\node[circle, draw, inner sep = 2pt] (c3) at (4.4,-2) {c3};
			\node[circle, draw, inner sep = 2pt] (c4) at (6,2) {c4};

			\node[circle, inner sep = 0pt] (h1) at (0,2.5) {};
			\node[circle, inner sep = 0pt] (h2) at (7,2.5) {};
			
			\draw[-] (n1) to (c1);\draw[-] (n1) to [bend left=50] (h1);\draw[-] (h1) to [bend left=50] (c2);\draw[-] (n1) to (c3);
			\draw[-] (n2) to (c1);
			\draw[-] (n3) to (c1);\draw[-] (n3) to (c3);
			\draw[-] (n4) to (c2);\draw[-] (n4) to (c4);
			\draw[-] (n5) to (c3);\draw[-] (n5) to (c4);
			\draw[-] (n6) to [bend right=50] (h2);\draw[-] (h2) to [bend right=50] (c2);\draw[-] (n6) to (c4);

			\node[rectangle, draw, minimum width = 180pt, minimum height = 90pt, rounded corners=2pt, dashed] (face1) at (3.5,2) {};
			\node[rectangle, minimum width = 50pt, minimum height = 20pt, rounded corners=2pt] (face1) at (0.5,3.5) {Face 1};
			\node[rectangle, minimum width = 50pt, minimum height = 20pt, rounded corners=2pt] (face2) at (0.5,-0.7) {Face 2};
		\end{tikzpicture}
	\end{minipage}
\label{twoDrawings}
\caption{Two different drawings of the incidence graph of the formula $\Phi = (x_1 \vee x_2 \vee x_3) \wedge (x_1 \vee \neg x_4 \vee x_6) \wedge  (\neg x_1 \vee \neg x_3 \vee x_5) \wedge (\neg x_4 \vee \neg x_5 \vee x_6)$. On the left in a typical bipartite form and on the right the $\textit{pn-planar}$ embedding (without the edges that connect the variable-nodes).}
\end{figure}
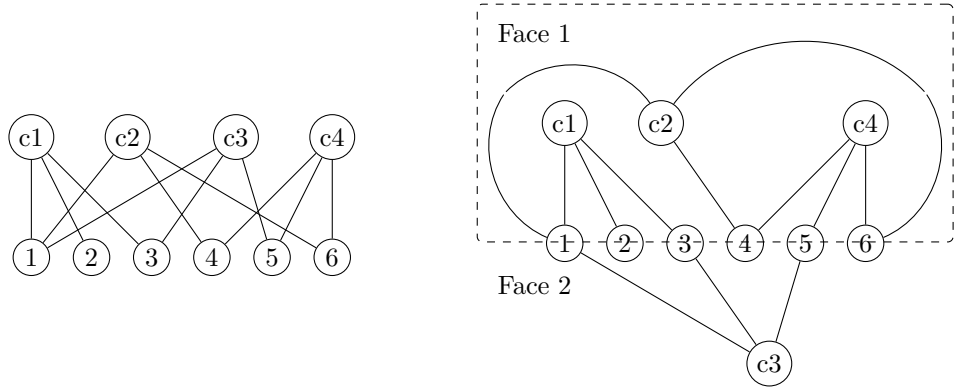

In Figure \ref{twoDrawings} an example of a formula with an $\textit{pn-planar}$ embedding is shown. The reason to focus on this kind of formulas is, that they can exactly be retraced with the graph gadgets from Valiant. E.g., the VARIABLE-gadget has two directions. One for the positive literals and one for the negative literals. By the usage of pn-planar formulas, it is possible to turn each gadget into the correct face direction to keep planarity. 

To make the entire graph planar, all gadgets have to be planar itself. However, the current new XOR-gadget $\mathfrak{G}_3$ is not planar. Even more, simple planarity is not sufficient. The gadgets must be \textbf{circular planar} with boundaries that consider their input and output nodes as well as the face on which a clause is located. We will describe these said requirements in the next sections in more detail.

\subsubsection{The RL-XOR-gadget} The first gadget we found can be seen in Figure \ref{RLXORgadget}. 

\begin{figure}[h!]
\centering
\begin{minipage}[h]{80mm}
	\begin{tikzpicture}
		[scale=.8]
		\tikzstyle{every loop} = [-latex]
		\node[circle, draw, inner sep = 36pt, dashed] (circ) at (3,2.1) {};

		\node[circle,  inner sep = 2pt, thick] (O1) at (-1,1) {};
		\node[circle,  inner sep = 2pt, thick] (O2) at (7,3) {};
		\node[circle,  inner sep = 2pt, thick] (I2) at (-1,3) {};
		\node[circle,  inner sep = 2pt, thick] (I1) at (7,1) {};
	
		\node[circle, fill, draw, inner sep = 2pt] (n1) at (5,1){}; \node[] at (5.2,0.8) {\tiny{$1$}};
		\node[circle, fill, draw, inner sep = 2pt] (n2) at (1,3){}; \node[] at (0.7,2.8) {\tiny{$2$}};
		\node[circle, fill, draw, inner sep = 2pt] (n3) at (1,1){}; \node[] at (0.7,0.8) {\tiny{$3$}};  
		\node[circle, fill, draw, inner sep = 2pt] (n4) at (5,3){}; \node[] at (5.2,3.3) {\tiny{$4$}};
		\node[circle, fill, draw, inner sep = 2pt] (n5) at (3,3.8){}; \node[] at (3,3.5) {\tiny{$5$}};
		\node[circle, fill, draw, inner sep = 2pt] (n6) at (2,2){}; \node[] at (2,2.3) {\tiny{$6$}};
		
		\draw[-latex] (n1) edge[bend left=0] (n3);
		\draw[-latex] (n1) edge[bend left=0] (n4);
		\draw[-latex] (n1) edge[bend left=5] (n5);

		\draw[-latex] (n2) edge[bend left=0] (n1);
		\draw[-latex] (n2) edge[bend left=0] (n3);
		\draw[-latex] (n2) edge[bend left=10] (n5);

		\draw[-latex] (n3) edge[bend left=0] (n6);

		\draw[-latex] (n4) edge[bend left=10] (n5);

		\draw[-latex] (n5) edge[bend left=5] (n1);
		\draw[-latex] (n5) edge[bend left=10] (n2);
		\draw[-latex] (n5) edge[bend left=10] (n4);

		\draw[-latex] (n1) to [in=240,out=300, loop]  node[auto] {} (n1);
		\draw[-latex] (n2) to [in=120, out=60, loop]  node[auto] {} (n2);						
		\draw[-latex] (n3) to [out=240,in=300, loop]  node[auto] {} (n3);			
		\draw[-latex] (n5) to [out=120, in=60, loop]  node[auto] {} (n5);	

		\draw[-latex] (n6) to [out=345, in=285, loop]  node[auto] {} (n6);

		\draw[-latex, color=red] (I1) edge[bend left=0] (n1);		
		\draw[-latex, color=red] (I2) edge[bend left=0] (n2);
		\draw[-latex, color=red] (n3) edge[bend left=0] (O1);
		\draw[-latex, color=red] (n4) edge[bend left=0] (O2);
		
	\end{tikzpicture}
\end{minipage}
\begin{minipage}[h]{50mm}
$\begin{pmatrix}[r]
 	1&0&1&1&1&0\\
	1&1&1&0&1&0\\
	0&0&1&0&0&1\\
	0&0&0&0&1&0\\
	1&1&0&1&1&0\\
	0&0&0&0&0&1
\end{pmatrix}$
\end{minipage}
\caption{$\mathfrak{G}_{\text{RL}}$: The new RIGHT-LEFT-XOR-gadget (RL-XOR-gadget). The red lines are not directly part of the gadget. As it can be seen, the gadget has circular planar embedding according to the nodes $1,2,3,4$.}
\label{RLXORgadget}
\end{figure}
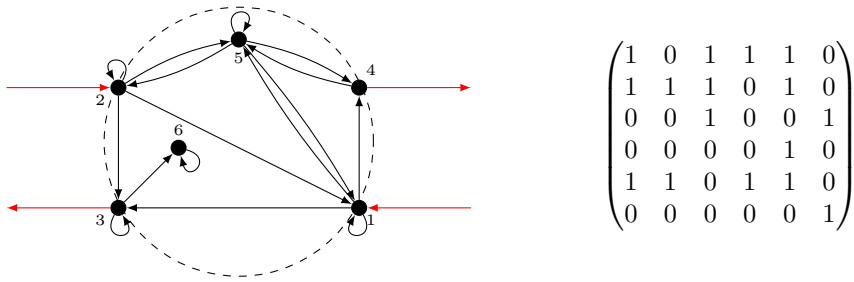
It has $1$, $2$ as input nodes and $3$, $4$ as its output nodes and all these nodes are boundary nodes. The \textit{correct} path is from $1$ to $3$ and $2$ to $4$. Its signature is
\begin{equation}\label{reqRLXORgadget}
 \mathcal{S}_{\gad_{\text{RL}}} = (3,3,2,4,3,3) = (0,0,2,1,0,0)_3
\end{equation}

\subsubsection{The LR-XOR-gadget} The second gadget we found can be seen in Figure \ref{LRXORgadget}. 

\begin{figure}[h!]
\centering
\begin{minipage}[h]{80mm}
	\centering
	\begin{tikzpicture}
		[scale=.8]
		\tikzstyle{every loop} = [-latex]
		\node[circle, draw, inner sep = 40pt, dashed] (circ) at (3,2.5) {};

		\node[circle,  inner sep = 2pt, thick] (O1) at (7,1) {};
		\node[circle,  inner sep = 2pt, thick] (O2) at (7,4) {};
		\node[circle,  inner sep = 2pt, thick] (I2) at (-1,4) {};
		\node[circle,  inner sep = 2pt, thick] (I1) at (-1,1) {};
	
		\node[circle, fill, draw, inner sep = 2pt] (n1) at (1,1){}; \node[] at (1,0.7) {\tiny{$1$}};
		\node[circle, fill, draw, inner sep = 2pt] (n2) at (1,4){}; \node[] at (1.2,3.8) {\tiny{$2$}};
		\node[circle, fill, draw, inner sep = 2pt] (n3) at (5,1){}; \node[] at (5.25,0.8) {\tiny{$3$}};  
		\node[circle, fill, draw, inner sep = 2pt] (n4) at (5,4){}; \node[] at (4.8,3.8) {\tiny{$4$}};
		\node[circle, fill, draw, inner sep = 2pt] (n5) at (3,2){}; \node[] at (3.2,2.2) {\tiny{$5$}};
		\node[circle, fill, draw, inner sep = 2pt] (n6) at (3,3.5){}; \node[] at (3,3.8) {\tiny{$6$}};
		
		\draw[-latex] (n1) edge[bend left=0] (n2);
		\draw[-latex] (n1) edge[bend left=7] (n3);				
		\draw[-latex] (n1) edge[bend left=0] (n5);		
		\draw[-latex] (n1) edge[bend left=7] (n6);		

		\draw[-latex] (n2) to [out=120, in=60, loop]  node[auto] {} (n2);						
		\draw[-latex] (n2) edge[bend left=0] (n4);

		\draw[-latex] (n3) edge[bend left=7] (n1);
		\draw[-latex] (n3) to [in=240,out=300, loop]  node[auto] {} (n3);							
		\draw[-latex] (n3) edge[bend left=0] (n5);
		\draw[-latex] (n3) edge[bend left=7] (n6);

		\draw[-latex] (n4) edge[bend left=0] (n3);		
		\draw[-latex] (n4) to [out=120, in=60, loop]  node[auto] {} (n4);
		\draw[-latex] (n4) edge[bend left=0] (n6);		

		\draw[-latex] (n5) to [in=240,out=300, loop]  node[auto] {} (n5);
		\draw[-latex] (n5) edge[bend left=7] (n6);	

		\draw[-latex] (n6) edge[bend left=7] (n1);	
		\draw[-latex] (n6) edge[bend left=7] (n3);
		\draw[-latex] (n6) edge[bend left=7] (n5);
	
		\draw[-latex, color=red] (I1) edge[bend left=0] (n1);		
		\draw[-latex, color=red] (I2) edge[bend left=0] (n2);
		\draw[-latex, color=red] (n3) edge[bend left=0] (O1);
		\draw[-latex, color=red] (n4) edge[bend left=0] (O2);
		
	\end{tikzpicture}
\end{minipage}
\begin{minipage}[h]{50mm}
	\centering
	$\begin{pmatrix}[r]
	0 & 1 & 1 & 0 & 1 & 1\\
	0 & 1 & 0 & 1 & 0 & 0\\
	1 & 0 & 1 & 0 & 1 & 1\\
	0 & 0 & 1 & 1 & 0 & 1\\
	0 & 0 & 0 & 0 & 1 & 1\\
	1 & 0 & 1 & 0 & 1 & 0
	\end{pmatrix}$
\end{minipage}
\caption{$\mathfrak{G}_{\text{LR}}$: The new LEFT-RIGHT-XOR-gadget (LR-XOR-gadget). The red lines are not directly part of the gadget. As it can be seen, the gadget has circular planar embedding according to the nodes $1,2,3,4$.}
\label{LRXORgadget}
\end{figure}
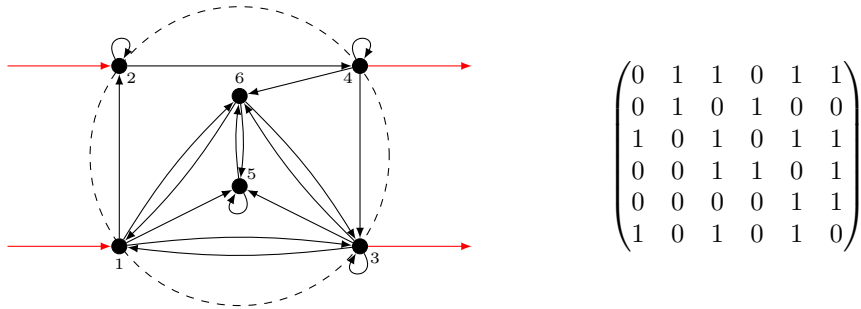
Equivalent to the RL-XOR-gadget, it has $1$,$2$ as input nodes and $3$,$4$ as its output nodes and all these nodes are boundary nodes. The \textit{correct} path is from $1$ to $3$ and $2$ to $4$. Its signature is
\begin{equation}\label{reqLRXORgadget}
\mathcal{S}_{\gad_{\text{LR}}} = (12,3,5,7,3,3) = (0,0,2,1,0,0)_3
\end{equation}

\begin{figure}
\centering
\begin{tikzpicture}
	[scale=.8]
	\node[rectangle, draw, minimum width = 130pt, minimum height = 40pt, rounded corners=2pt] (C1) at (0,5.5) {};
	\node[rectangle, minimum width = 50pt, minimum height = 20pt, rounded corners=2pt] (textC1) at (-0.5,6.7) {Clause 1: $x_1 \vee x_2 \vee x_3$};
	\node[rectangle, draw, fill=white, inner sep = 2pt] (CGO1) at (0,6) {$\mathfrak{G}_1$};
	\node[rectangle, draw, fill=white, inner sep = 2pt] (XGO101) at (-2,5) {$\mathfrak{G}_4$};
	\node[rectangle, draw, fill=white, inner sep = 2pt] (XGO102) at (0,5) {$\mathfrak{G}_4$};
	\node[rectangle, draw, fill=white, inner sep = 2pt] (XGO103) at (2,5) {$\mathfrak{G}_4$};

	\node[rectangle, draw, minimum width = 130pt, minimum height = 40pt, rounded corners=2pt] (C2) at (6,5.5) {};
	\node[rectangle, minimum width = 50pt, minimum height = 20pt, rounded corners=2pt] (textC2) at (5.5,6.7) {Clause 2: $x_1 \vee \neg x_4 \vee x_6$};
	\node[rectangle, draw, fill=white, inner sep = 2pt] (CGO2) at (6,6) {$\mathfrak{G}_1$};
	\node[rectangle, draw, fill=white, inner sep = 2pt] (XGO201) at (4,5) {$\mathfrak{G}_4$};
	\node[rectangle, draw, fill=white, inner sep = 2pt] (XGO202) at (6,5) {$\mathfrak{G}_4$};
	\node[rectangle, draw, fill=white, inner sep = 2pt] (XGO203) at (8,5) {$\mathfrak{G}_4$};

	\node[rectangle, draw, minimum width = 130pt, minimum height = 40pt, rounded corners=2pt] (C3) at (6,0.5) {};
	\node[rectangle, minimum width = 50pt, minimum height = 20pt, rounded corners=2pt] (textC3) at (5.5,-0.7) {Clause 3: $\neg x_1 \vee \neg x_3 \vee x_5$};
	\node[rectangle, draw, fill=white, inner sep = 2pt] (CGO3) at (6,0) {$\mathfrak{G}_1$};
	\node[rectangle, draw, fill=white, inner sep = 2pt] (XGO301) at (4,1) {$\mathfrak{G}_3$};
	\node[rectangle, draw, fill=white, inner sep = 2pt] (XGO302) at (6,1) {$\mathfrak{G}_3$};
	\node[rectangle, draw, fill=white, inner sep = 2pt] (XGO303) at (8,1) {$\mathfrak{G}_3$};

	\node[rectangle, draw, minimum width = 130pt, minimum height = 40pt, rounded corners=2pt] (C4) at (12,5.5) {};
	\node[rectangle, minimum width = 50pt, minimum height = 20pt, rounded corners=2pt] (textC4) at (11.5,6.7) {Clause 4: $\neg x_4 \vee \neg x_5 \vee x_6$};
	\node[rectangle, draw, fill=white, inner sep = 2pt] (CGO4) at (12,6) {$\mathfrak{G}_1$};
	\node[rectangle, draw, fill=white, inner sep = 2pt] (XGO401) at (10,5) {$\mathfrak{G}_4$};
	\node[rectangle, draw, fill=white, inner sep = 2pt] (XGO402) at (12,5) {$\mathfrak{G}_4$};
	\node[rectangle, draw, fill=white, inner sep = 2pt] (XGO403) at (14,5) {$\mathfrak{G}_4$};

	\node[rectangle, draw, fill=white, inner sep = 2pt] (VG1) at (-2,3) {$\;\mathfrak{G}_2:\;x_1$};
	\node[rectangle, draw, fill=white, inner sep = 2pt] (VG2) at (1,3) {$\;\mathfrak{G}_2:\;x_2$};
	\node[rectangle, draw, fill=white, inner sep = 2pt] (VG3) at (4,3) {$\;\mathfrak{G}_2:\;x_3$};
	\node[rectangle, draw, fill=white, inner sep = 2pt] (VG4) at (7,3) {$\;\mathfrak{G}_2:\;x_4$};
	\node[rectangle, draw, fill=white, inner sep = 2pt] (VG5) at (10,3) {$\;\mathfrak{G}_2:\;x_5$};
	\node[rectangle, draw, fill=white, inner sep = 2pt] (VG6) at (13,3) {$\;\mathfrak{G}_2:\;x_6$};
	
	\draw[-latex, color=red] (CGO1) to [out=170, in=170]  node[auto] {} (XGO101);
	\draw[-latex, color=red] (XGO101) to [out=10, in=190]  node[auto] {} (CGO1);
	\draw[-latex, color=red] (CGO1) to [out=192, in=170]  node[auto] {} (XGO102);
	\draw[-latex, color=red] (XGO102) to [out=10, in=340]  node[auto] {} (CGO1);
	\draw[-latex, color=red] (CGO1) to [out=342, in=170]  node[auto] {} (XGO103);
	\draw[-latex, color=red] (XGO103) to [out=10, in=2]  node[auto] {} (CGO1);

	\draw[-latex, color=red] (CGO2) to [out=170, in=170]  node[auto] {} (XGO201);
	\draw[-latex, color=red] (XGO201) to [out=10, in=190]  node[auto] {} (CGO2);
	\draw[-latex, color=red] (CGO2) to [out=192, in=170]  node[auto] {} (XGO202);
	\draw[-latex, color=red] (XGO202) to [out=10, in=340]  node[auto] {} (CGO2);
	\draw[-latex, color=red] (CGO2) to [out=342, in=170]  node[auto] {} (XGO203);
	\draw[-latex, color=red] (XGO203) to [out=10, in=2]  node[auto] {} (CGO2);

	\draw[-latex, color=red] (CGO3) to [out=342, in=342]  node[auto] {} (XGO303);
	\draw[-latex, color=red] (XGO303) to [out=192, in=2]  node[auto] {} (CGO3);
	\draw[-latex, color=red] (CGO3) to [out=4, in=342]  node[auto] {} (XGO302);
	\draw[-latex, color=red] (XGO302) to [out=192, in=170]  node[auto] {} (CGO3);
	\draw[-latex, color=red] (CGO3) to [out=172, in=342]  node[auto] {} (XGO301);
	\draw[-latex, color=red] (XGO301) to [out=192, in=190]  node[auto] {} (CGO3);

	\draw[-latex, color=red] (CGO4) to [out=170, in=170]  node[auto] {} (XGO401);
	\draw[-latex, color=red] (XGO401) to [out=10, in=190]  node[auto] {} (CGO4);
	\draw[-latex, color=red] (CGO4) to [out=192, in=170]  node[auto] {} (XGO402);
	\draw[-latex, color=red] (XGO402) to [out=10, in=340]  node[auto] {} (CGO4);
	\draw[-latex, color=red] (CGO4) to [out=342, in=170]  node[auto] {} (XGO403);
	\draw[-latex, color=red] (XGO403) to [out=10, in=2]  node[auto] {} (CGO4);

	\draw[-latex, color=red] (VG1) to [out=70, in=190]  node[auto] {} (XGO101);
	\draw[-latex, color=red] (XGO101) to [out=342, in=30]  node[auto] {} (VG1);
	\draw[-latex, color=red, dashed] (VG1) to [out=120, in=190]  node[auto] {} (XGO201);
	\draw[-latex, color=red, dashed] (XGO201) to [out=342, in=80]  node[auto] {} (VG1);
	\draw[-latex, color=red] (VG1) to [out=250, in=170]  node[auto] {} (XGO301);
	\draw[-latex, color=red] (XGO301) to [out=13, in=310]  node[auto] {} (VG1);	

	\draw[-latex, color=red] (VG2) to [out=120, in=190]  node[auto] {} (XGO102);
	\draw[-latex, color=red] (XGO102) to [out=342, in=80]  node[auto] {} (VG2);

	\draw[-latex, color=red] (VG3) to [out=120, in=190]  node[auto] {} (XGO103);
	\draw[-latex, color=red] (XGO103) to [out=342, in=80]  node[auto] {} (VG3);	
	\draw[-latex, color=red] (VG3) to [out=230, in=170]  node[auto] {} (XGO302);
	\draw[-latex, color=red] (XGO302) to [out=10, in=270]  node[auto] {} (VG3);

	\draw[-latex, color=red] (VG4) to [out=150, in=190]  node[auto] {} (XGO202);
	\draw[-latex, color=red] (XGO202) to [out=342, in=130]  node[auto] {} (VG4);
	\draw[-latex, color=red] (VG4) to [out=90, in=190]  node[auto] {} (XGO401);
	\draw[-latex, color=red] (XGO401) to [out=342, in=60]  node[auto] {} (VG4);

	\draw[-latex, color=red] (VG5) to [out=120, in=190]  node[auto] {} (XGO402);
	\draw[-latex, color=red] (XGO402) to [out=342, in=80]  node[auto] {} (VG5);
	\draw[-latex, color=red] (VG5) to [out=210, in=170]  node[auto] {} (XGO303);
	\draw[-latex, color=red] (XGO303) to [out=10, in=230]  node[auto] {} (VG5);

	\draw[-latex, color=red, dashed] (VG6) to [out=80, in=190]  node[auto] {} (XGO203);
	\draw[-latex, color=red, dashed] (XGO203) to [out=342, in=50]  node[auto] {} (VG6);
	\draw[-latex, color=red] (VG6) to [out=150, in=190]  node[auto] {} (XGO403);
	\draw[-latex, color=red] (XGO403) to [out=342, in=130]  node[auto] {} (VG6);
	
\end{tikzpicture} 
\caption{This is a pn-planar formula $\Phi = (x_1 \vee x_2 \vee x_3) \wedge (x_1 \vee \neg x_4 \vee x_6) \wedge  (\neg x_1 \vee \neg x_3 \vee x_5) \wedge (\neg x_4 \vee \neg x_5 \vee x_6)$ with its associated Valiant graph. The clauses 1,2 and 4 are on face 1 and the clause 3 is on face 2. The dashed connection are the two connection from Figure \ref{twoDrawings}, that actually circumvent the clause nodes 1 and 4.}
\label{valiantGraph}
\end{figure}
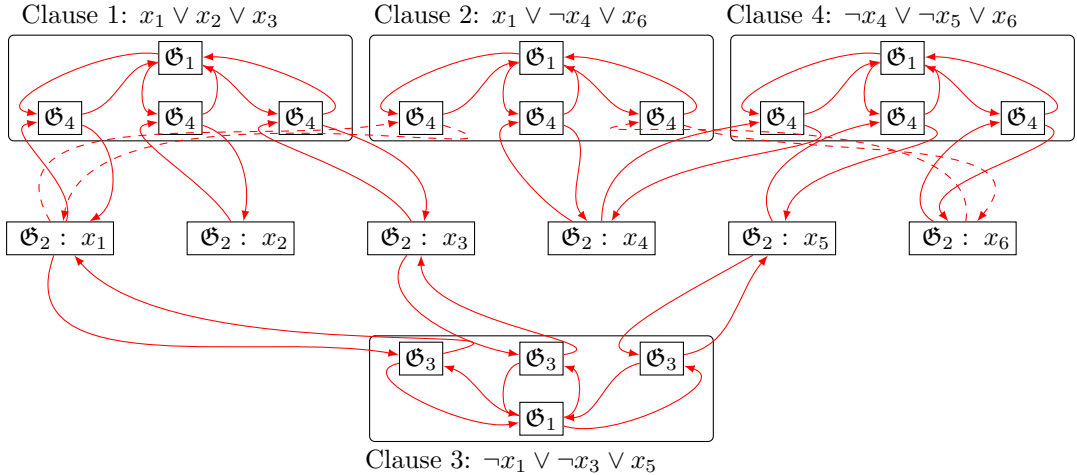

The graph shown in Figure \ref{valiantGraph} is the Valiant graph of the pn-planar formula from Figure \ref{twoDrawings}. It can be seen, that the graph is indeed planar. Note that on face 1 the gadget $\mathfrak{G}_3$ is used as the XOR-gadget and on face 2 it is $\mathfrak{G}_4$. This is important, since otherwise the input and output edges to the VARIABLE-gadget would cross. 

\begin{lemma}\label{lemma1}
 Using the gadgets $\mathfrak{G}_1,\mathfrak{G}_2,\mathfrak{G}_{\text{RL}}$ and $\mathfrak{G}_{\text{LR}}$ and the reduction from $\Phi$ to $\Phi'$, whereof $\Phi'$ is pn-planar, one could could make the graph of Theorem \ref{theorem2a}/\ref{theorem2b} planar.  
\end{lemma}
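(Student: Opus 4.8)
The plan is to build the planar drawing in two stages: first obtain a planar embedding of the reduced incidence graph, then substitute each node and each incidence edge by its gadget so that no crossings are introduced. I would begin by applying Lichtenstein's reduction to replace $\Phi$ by a pn-planar formula $\Phi'$, which preserves satisfiability (the property stated right after Definition \ref{defPNPLANAR3SAT}). By Definition \ref{defPNPLANAR3SAT} the graph $\mathcal{G}_{\Phi'}$ is planar and admits an embedding in which the variable nodes $v_1,\dots,v_n$ lie on the cycle $E_1$, which splits the plane into an inner and an outer face; moreover every pair of clauses sharing a variable in opposite polarities sits on opposite faces. Fixing such an embedding gives the skeleton into which all gadgets will be inserted.

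Next I would perform the substitution node by node and edge by edge. Each clause node $c_j$ is replaced by a copy of the CLAUSE-gadget $\mathfrak{G}_1$, each variable node $v_i$ by a copy of the VARIABLE-gadget $\mathfrak{G}_2$, and each incidence edge $(v_i,c_j)$ by one XOR-gadget whose two input and two output boundary nodes are wired to $\mathfrak{G}_2(v_i)$ and to $\mathfrak{G}_1(c_j)$ by the weight-$1$ connecting edges. The key bookkeeping is the VARIABLE-gadget: its two parallel rails carry the positive-literal occurrences on one side and the negative-literal occurrences on the other, so the attachments that must reach inner-face clauses all leave the gadget on one side and those reaching outer-face clauses leave on the other. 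This matches the pn-separation of polarities guaranteed in the previous step, so the connecting edges never have to cross the variable cycle.

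The planarity argument itself rests on circular planarity. Each of $\mathfrak{G}_1,\mathfrak{G}_2,\mathfrak{G}_{\text{RL}},\mathfrak{G}_{\text{LR}}$ can, by the embeddings shown in the figures (for the XOR-gadgets, Figures \ref{RLXORgadget} and \ref{LRXORgadget} exhibit an explicit circular planar embedding with boundary nodes $1,2,3,4$), be drawn inside a small disk with exactly its connection nodes on the bounding circle and all interior machinery strictly inside. I would then replace each embedded node and edge of $\mathcal{G}_{\Phi'}$ by the corresponding disk and route the connecting edges between boundary nodes along the original edges of the embedding. Such a replacement is crossing-free precisely when the clockwise order of boundary nodes on each disk agrees with the clockwise order in which the incident connecting edges leave that region of the embedding. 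The only place where the two faces demand opposite handedness is at the junction between an XOR-gadget and the VARIABLE-gadget: an XOR-gadget serving an inner-face clause must present its boundary nodes in the reverse cyclic order of one serving an outer-face clause. This is exactly why two XOR-gadgets are needed, $\mathfrak{G}_{\text{RL}}$ on one face and $\mathfrak{G}_{\text{LR}}$ on the other, and, as illustrated in Figure \ref{valiantGraph}, using the correct one per face keeps all edges to the VARIABLE-gadget from crossing.

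The step I expect to be the main obstacle is the cyclic-order compatibility check: one must verify that the boundary orderings realized by the concrete circular planar embeddings of $\mathfrak{G}_{\text{RL}}$ and $\mathfrak{G}_{\text{LR}}$ are indeed the mirror-image orders required by the inner and outer faces, and that at a single VARIABLE-gadget all of the several XOR-gadgets attaching on a given side can be nested in a consistent order without forcing a crossing. Once this combinatorial matching is established for every gadget boundary, gluing the disks produces a single planar graph, which is the graph of Theorem \ref{theorem2a}/\ref{theorem2b}; hence that graph can be made planar.
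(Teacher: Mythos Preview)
Your proposal is correct and follows the same approach as the paper: use the pn-planar embedding of $\mathcal{G}_{\Phi'}$ as a skeleton, replace variable/clause nodes by the circularly planar gadgets $\mathfrak{G}_1,\mathfrak{G}_2$, and insert $\mathfrak{G}_{\text{RL}}$ on one face and $\mathfrak{G}_{\text{LR}}$ on the other so the boundary orders at the VARIABLE-gadget do not force crossings. The paper's own proof is only a one-line reference to Figure~\ref{valiantGraph} (``it is not hard to see''), so your write-up is in fact a more complete version of the same argument rather than a different one.
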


\begin{proof}[proof of Lemma \ref{lemma1}]
 To proof this, one can take a look at the example drawing of the pn-planar formula Valiant graph in Figure \ref{valiantGraph}. It is not hard to see, how this graph could be redrawn using the gadgets $\mathfrak{G}_1,\mathfrak{G}_2,\mathfrak{G}_{\text{RL}},\mathfrak{G}_{\text{LR}}$. 
\end{proof}


\section{Circular planar bipartite double covers and a limiting conjecture}  We now turn towards generalization. The bipartite double cover operation turns out to play a central role and we are mainly interested in the way it changes the planarity of a graph. For our approach, a mandatory requirement is, that the used gadgets itself must turn into planar bipartite double covers in order to have chance to achieve planarity for the whole graph. And simple planarity alone is not enough. Similar to the RL and LR-XOR-gadgets, that achieve circular planarity for their input/output nodes, one now has to take care about the circular planarity of their bipartite double cover regarding the input/output nodes.

\begin{lemma}[Bipartite connectable graphs]\label{lConnectable}
 If a planar graph $\grG$ can be drawn as a connection of gadgets $\gad_i$, then $\hat{\grG}$ will be also planar if each gadget itself has a circular planar bipartite double cover $\hat{\gad}_i$, with the input and output nodes on the boundary in the same order as in $\gad$ itself.
\end{lemma}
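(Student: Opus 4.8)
The plan is to prove the lemma by a disc-substitution argument that converts a planar drawing of $\grG$ into one of $\hat{\grG}$, exploiting that the bipartite double cover $\hat{\cdot} = \cdot \times K_2$ is computed edge-by-edge and therefore commutes with the decomposition of $\grG$ into gadgets and connecting edges. Writing $w^+$ and $w^-$ for the row- and column-copies of a vertex $w$ in $\hat{\grG}$, a directed edge $u \to v$ of $\grG$ lifts to the single edge $u^+ - v^-$. Consequently, if the gadgets $\gad_i$ are vertex-disjoint and joined only through connecting edges between their input/output nodes, then the subgraph of $\hat{\grG}$ induced on $V(\gad_i) \times \{+,-\}$ is exactly $\hat{\gad}_i$, and every connecting edge $u \to v$ of $\grG$ lifts to one strand $u^+ - v^-$ joining the boundary of $\hat{\gad}_i$ to the boundary of $\hat{\gad}_j$. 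Thus $\hat{\grG}$ is literally $\grG$ with each gadget replaced by its double cover and each connector replaced by a single lifted strand, and it suffices to realize this replacement in the plane.

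First I would fix a planar embedding of $\grG$ in which each gadget $\gad_i$ is confined to a closed disc $D_i$ with its I/O nodes on $\partial D_i$ in their natural cyclic order, and in which every connecting edge runs through the complementary connection region; the hypothesis that $\grG$ is a planar connection of gadgets is exactly what guarantees such an embedding. I would then excise the drawing of $\gad_i$ from $D_i$ and insert the given circular planar embedding of $\hat{\gad}_i$. By hypothesis its boundary carries the copies $w^+, w^-$ of each I/O node $w$ in the same cyclic order in which the nodes $w$ sat on $\partial D_i$; placing the pair $\{w^+, w^-\}$ in the slot formerly occupied by $w$ keeps the coarse cyclic order of boundary ports on every disc identical to that of the embedding of $\grG$.

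It then remains to reconnect the discs. Each lifted strand $u^+ - v^-$ can be routed along the planar route of its preimage $u \to v$, attaching to $u^+$ on $\partial\hat{D}_i$ and to $v^-$ on $\partial\hat{D}_j$; since the I/O slots occur in the same cyclic order as before and each I/O node carries a single external connection, the strands meet the disc boundaries in the same rotation in which the connecting edges of $\grG$ met the vertices $w$. No two strands cross, because their preimages did not cross and the rotation at each disc is preserved. Formally this is an instance of the general fact that circular planar graphs may be glued along their boundaries whenever the shared ports appear in matching cyclic order, and I would state and invoke that gluing principle with the $\hat{\gad}_i$ as the circular planar pieces and the connection region as the gluing map.

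I expect the main obstacle to be verifying that the copy actually demanded by each external strand --- namely $w^+$ when $w$ is the tail and $w^-$ when $w$ is the head of a connecting edge --- is the copy exposed at $w$'s boundary slot of $\hat{\gad}_i$, and that this holds simultaneously and consistently at every port. This is precisely the force of the phrase ``in the same order as in $\gad$ itself'': the circular planar embedding of each gadget must expose, at each I/O slot and from the side facing the connection region, the correct one of $w^+, w^-$, so that the twisting introduced by the double cover is absorbed locally inside the gadgets rather than appearing between them. Establishing that such embeddings exist for the concrete gadgets is delegated to their explicit constructions (as for $\mathfrak{G}_{\text{RL}}$ and $\mathfrak{G}_{\text{LR}}$); the lemma itself only asserts the conditional statement, so once the gluing principle and the port-matching are in place, the union of the planar pieces is a planar embedding of $\hat{\grG}$, as required.
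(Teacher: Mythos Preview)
Your proposal is correct and follows essentially the same disc-substitution idea as the paper: replace each gadget $\gad_i$ in a planar drawing of $\grG$ by its circular planar double cover $\hat{\gad}_i$, and reroute each connecting edge along its preimage, which works because the boundary order is preserved. The paper's own proof is a terse proof-by-picture (one example plus a sentence), whereas you have supplied the details the paper omits---in particular the observation that a connecting edge $u\to v$ lifts to the single strand $u^+\!-\!v^-$ and the resulting need for the correct copy of each I/O node to be exposed on the boundary---so your write-up is a faithful fleshing-out of the intended argument.
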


\begin{proof} The lemma is immediately clear if one considers the example gadget in Figure \ref{example}. The graph on the left side is the original graph $\grG$ and on the right side its bipartite double cover. If one builds a graph $\grG$ using gadgets $\gad_i$ (like shown on the left side in the figure), one simply replaces each of occurrences of $\gad_i$ with $\hat{\gad}_i$ (the one on the right side in the figure). Since the input and output pairs of $\hat{\gad}_i$ are in the same order and also on the boundary, the original planar structure between the gadgets is overtaken and hence $\hat{\grG}$ will be planar if $\grG$ is.
\begin{figure}
	\centering 
	\begin{minipage}[h]{50mm}
		\centering 
		\begin{tikzpicture}
			[scale=.8]
			\node[circle, draw, inner sep = 34pt, dashed] (circ) at (1.5,1.5) {};
			\node[circle, fill, draw, inner sep = 2pt] (n1) at (0,3){}; \node[] at (0,3.3) {\tiny{$1$}};
				\node[rectangle, rotate=0, inner sep = 2pt] (h1) at (1.5,3.8) {...};
			\node[circle, fill, draw, inner sep = 2pt] (n2) at (1.5,3){}; \node[] at (1.5,3.3) {\tiny{$2$}};
			\node[circle, fill, draw, inner sep = 2pt] (n3) at (3,3){}; \node[] at (3,3.3) {\tiny{$3$}};
		
			\node[circle, fill, draw, inner sep = 2pt] (n4) at (0,0){}; \node[] at (0,-0.3) {\tiny{$4$}};
			\node[circle, fill, draw, inner sep = 2pt] (n5) at (1.5,0){}; \node[] at (1.5,-0.3) {\tiny{$5$}};
				\node[rectangle, rotate=0, inner sep = 2pt] (h2) at (1.5,-0.8) {...};
			\node[circle, fill, draw, inner sep = 2pt] (n6) at (3,0){}; \node[] at (3,-0.3) {\tiny{$6$}};

			\draw[-,color=red] (n1) edge[bend left=30] (h1); \draw[-latex,color=red] (h1) edge[bend left=30] (n3);
			\draw[-,color=red] (n6) edge[bend left=30] (h2); \draw[-latex,color=red] (h2) edge[bend left=30] (n4); 
		
			\draw[-latex] (n3) edge[bend right=0] (n6);  
			\draw[-latex] (n4) edge[bend right=0] (n1);
			
			\draw[-latex] (n1) edge[bend right=15] (n2);  
			\draw[-latex] (n2) edge[bend right=15] (n1);
		
			\draw[-latex] (n2) edge[bend right=15] (n3);  
			\draw[-latex] (n3) edge[bend right=15] (n2); 
		
			\draw[-latex] (n3) edge[bend left=40] (n1); 	
	
			\draw[-latex] (n4) edge[bend right=15] (n5);  
			\draw[-latex] (n5) edge[bend right=15] (n4);
		
			\draw[-latex] (n5) edge[bend right=15] (n6);  
			\draw[-latex] (n6) edge[bend right=15] (n5); 
		
			\draw[-latex] (n6) edge[bend right=40] (n4); 			
		\end{tikzpicture}
	\end{minipage}
	\begin{minipage}[h]{50mm}
		\centering 
		\begin{tikzpicture}
			[scale=.8]
			\node[circle, draw, inner sep = 26pt, dashed] (circ) at (1,1) {};
			\node[circle, inner sep = 2pt] (dots91) at (1,3) {\tiny{...}};
			\node[circle, inner sep = 2pt] (dots106) at (1,-1) {\tiny{...}};
	
			\node[circle, fill, draw, inner sep = 2pt] (n1) at (0,2.25){}; \node[] at (-0.3,2.25) {\tiny{$1$}};
			\node[circle, fill, draw, inner sep = 2pt] (n2) at (2,1.75){}; \node[] at (2.3,1.75) {\tiny{$2$}};
			\node[circle, fill, draw, inner sep = 2pt] (n3) at (0,1.25){}; \node[] at (-0.3,1.25) {\tiny{$3$}};
			\node[circle, fill, draw, inner sep = 2pt] (n4) at (2,0.75){}; \node[] at (2.3,0.75) {\tiny{$4$}};
			\node[circle, fill, draw, inner sep = 2pt] (n5) at (0,0.25){}; \node[] at (-0.3,0.25) {\tiny{$5$}};
			\node[circle, fill, draw, inner sep = 2pt] (n6) at (2,-0.25){}; \node[] at (2.3,-0.25) {\tiny{$6$}};
	
			\node[circle, fill, draw, inner sep = 2pt] (n7) at (2,1.25){}; \node[] at (2.3,1.25) {\tiny{$7$}};
			\node[circle, fill, draw, inner sep = 2pt] (n8) at (0,1.75){}; \node[] at (-0.3,1.75) {\tiny{$8$}};
			\node[circle, fill, draw, inner sep = 2pt] (n9) at (2,2.25){}; \node[] at (2.3,2.25) {\tiny{$9$}};
			\node[circle, fill, draw, inner sep = 2pt] (n10) at (0,-0.25){}; \node[] at (-0.3,-0.25) {\tiny{$10$}};
			\node[circle, fill, draw, inner sep = 2pt] (n11) at (2,0.25){}; \node[] at (2.3,0.25) {\tiny{$11$}};
			\node[circle, fill, draw, inner sep = 2pt] (n12) at (0,0.75){}; \node[] at (-0.3,0.75) {\tiny{$12$}};
	
			\draw[-] (n1) to (n8);
	
			\draw[-] (n2) to (n7);
			\draw[-] (n2) to (n9);
	
			\draw[-] (n3) to (n7); 
			\draw[-] (n3) to (n8); 
			\draw[-] (n3) to (n12); 
			
			\draw[-] (n4) to (n7);
			\draw[-] (n4) to (n11);
			
			\draw[-] (n5) to (n10);
			\draw[-] (n5) to (n12);
	
			\draw[-] (n6) to (n10); 
			\draw[-] (n6) to (n11); 
	
			\draw[-, color = red] (n1) edge[bend left=35] (dots91);\draw[-latex, color = red] (dots91) edge[bend left=35] (n9);
			\draw[-, color = red] (n6) edge[bend left=35] (dots106);\draw[-latex, color = red] (dots106) edge[bend left=35] (n10);
			
		\end{tikzpicture}
	\end{minipage}
\caption{On the left the original example graph, with input and output pairs $(3,1)$ and $(4,6)$. On the right the bipartite double cover with corresponding input and output pair $(9,1)$ and $(10,6)$. It can be connected in the same way as the original graph.}
\label{example}
\end{figure}
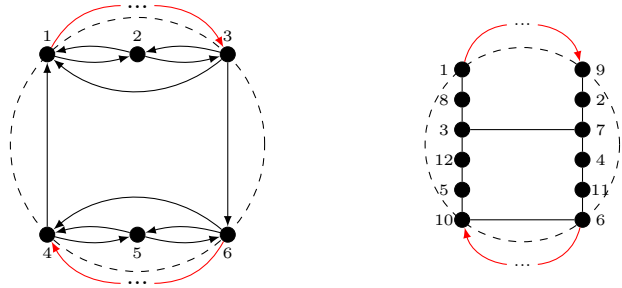
\end{proof}
Lemma \ref{lConnectable} states that one could draw the bipartite double cover of a graph by simply replacing each gadget by its BDC, if each gadget turns into suitable bipartite double cover as well. Knowing this and remembering that we already use planar formulas, in particular pn-planar formulas, as the input, actually we only need to find the correct gadgets to make the BDC planar. We do not have to care about the structure of the input formulas, but only that the gadgets itself have a appropriate bipartite double cover. And since gadgets could be rather small compared to the overall graph, this seems to be a more accessible approach. However, we will show in the remaining sections, that it is not that easy at all.

First, we take a look at the bipartite double cover of the CLAUSE-gadget, cf. Figure \ref{clauseGadgetBDC}.
\begin{figure}[h!]
\centering
\begin{minipage}[h]{50mm}
	\centering
	\begin{tikzpicture}
		[scale=.8]
		\node[circle, fill, draw, inner sep = 2pt] (n1) at (0,1){}; \node[] at (-0.2,0.8) {\tiny{$1$}};
			\node[rectangle, inner sep = 2pt] (xor12) at (1,0) {...};
		\node[circle, fill, draw, inner sep = 2pt] (n2) at (2,1){}; \node[] at (2.2,0.8) {\tiny{$2$}};
			\node[rectangle, rotate=-68, inner sep = 2pt] (xor23) at (2.5,2.4) {...};
		\node[circle, fill, draw, inner sep = 2pt] (n3) at (1,3){}; \node[] at (1,3.3) {\tiny{$3$}};
			\node[rectangle, rotate=68,  inner sep = 2pt] (xor31) at (-0.5,2.4) {...};
		\node[circle, fill, draw, inner sep = 2pt] (n4) at (1,1.8){}; \node[] at (1.2,2) {\tiny{$4$}};	
	
		\draw[-,color=red] (n3) edge[bend right=40] (xor31); \draw[-latex,color=red] (xor31) edge[bend right=40] (n1);
		\draw[-,color=red] (n2) edge[bend right=40] (xor23); \draw[-latex,color=red] (xor23) edge[bend right=40] (n3); 
		\draw[-,color=red] (n1) edge[bend right=40] (xor12); \draw[-latex,color=red] (xor12) edge[bend right=40] (n2);
	
		\draw[-latex, dashed] (n4) edge[bend right=15] (n3);  
		\draw[-latex, dashed] (n3) edge[bend right=15] (n4);
	
		\draw[-latex, dashed] (n4) edge[bend right=15] (n1);  
		\draw[-latex, dashed] (n1) edge[bend right=15] (n4); 
	
		\draw[-latex, dashed] (n2) edge[bend right=15] (n4); 
		\draw[-latex, dashed] (n4) edge[bend right=15] (n2);
	
		\draw[-latex, dashed] (n1) edge[bend right=15] (n2);
		\draw[-latex, dashed] (n2) edge[bend right=15] (n1);
	\end{tikzpicture}
\end{minipage}
\begin{minipage}[h]{50mm}
	\centering
	\begin{tikzpicture}
		[scale=.8]
		\node[circle, draw, inner sep = 25pt, dashed] (circ) at (0.5,2) {};

		\node[circle, inner sep = 2pt] (dotsA2) at (0.5,-0.5) {\tiny{...}};
		\node[circle, inner sep = 2pt, rotate=45] (dotsB3) at (-0.8,4) {\tiny{...}};
		\node[circle, inner sep = 2pt, rotate=-45] (dotsC1) at (1.8,4) {\tiny{...}};

		\node[circle, fill, draw, inner sep = 2pt] (n2) at (0,0.5){}; \node[] at (-0.2,0.3) {\tiny{$6$}};
		\node[circle, fill, draw, inner sep = 2pt] (nD) at (0,2){}; \node[] at (0.2,2.2) {\tiny{$4$}};
		\node[circle, fill, draw, inner sep = 2pt] (n3) at (-1,2){}; \node[] at (-1.3,2) {\tiny{$7$}};
		\node[circle, fill, draw, inner sep = 2pt] (n1) at (0,3.5){}; \node[] at (0,3.8) {\tiny{$5$}};
		\node[circle, fill, draw, inner sep = 2pt] (n4) at (1,2){}; \node[] at (1.2,2.2) {\tiny{$8$}};
		\node[circle, fill, draw, inner sep = 2pt] (nB) at (1,3.5){}; \node[] at (1,3.8) {\tiny{$2$}};
		\node[circle, fill, draw, inner sep = 2pt] (nC) at (2,2){}; \node[] at (2.3,2) {\tiny{$3$}};
		\node[circle, fill, draw, inner sep = 2pt] (nA) at (1,0.5){}; \node[] at (1.2,0.3) {\tiny{$1$}};
	
		\draw[-, color = red] (nA) edge[bend left=35] (dotsA2);\draw[-latex, color = red] (dotsA2) edge[bend left=35] (n2);
		\draw[-, color = red] (nB) edge[bend right=35] (dotsB3);\draw[-latex, color = red] (dotsB3) edge[bend right=35] (n3);
		\draw[-, color = red] (nC) edge[bend right=25] (dotsC1);\draw[-latex, color = red] (dotsC1) edge[bend right=25] (n1);

		\draw[-] (nA) to (n2);
		\draw[-] (nA) to (n4);
		\draw[-] (nB) to (n1);
		\draw[-] (nB) to (n4);
		\draw[-] (nC) to (n4);
		\draw[-] (nD) to (n1);
		\draw[-] (nD) to (n2);
		\draw[-] (nD) to (n3);

	\end{tikzpicture}
\end{minipage}
\caption{On the left the CLAUSE-gadget and on the right the bipartite double cover with the boundary nodes on the dashed disc. The connections (7,2) and (5,3) intersect, which destroys the overall planarity of the graph and makes the original CLAUSE-gadget not usable.}
\label{clauseGadgetBDC}
\end{figure}
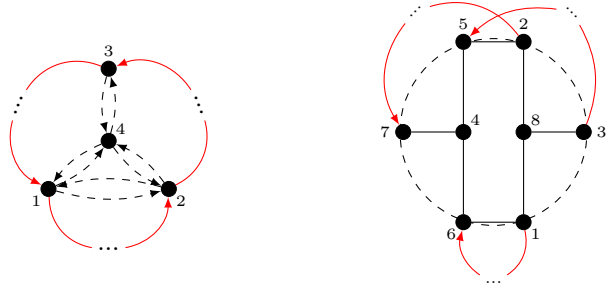

As it can be seen, the CLAUSE-gadget has a circular planar embedding for its bipartite double cover, but with a wrong order. The order must contain the nodes of the pairs (5,3), (6,1) and (7,2) as successive nodes. Every interleaving destroys the planarity of the graph. So Valiant's original CLAUSE-gadget does not have the correct order regarding its boundary nodes when building its bipartite double cover. 

\textbf{Remark:} If we talk about boundary nodes and their order, we always only consider the input and output nodes on the boundary like drawn in Figure \ref{clauseGadgetBDC}. One could easily also increase the boundary by the nodes $4$ and $8$, but as they are not used as input and output nodes, we keep them as interior nodes. 


\begin{defin}
 Let $\gad$ be a gadget with input/output nodes $(i_1,o_1)$ and $(i_2,o_2)$. 
 If the bipartite double cover is circular planar according to the nodes $i'_1,i'_2,o_1,o_2$, then it is called
 \begin{itemize}
  \item \texttt{connectable} if the order is $i'_1,o_1,i'_2,o_2$
  \item \texttt{extendable} if the order is $i'_1,i'_2,o_2,o_1$ $\blacktriangleleft$
 \end{itemize}
\end{defin}
Note that an order of $i'_1,i'_2,o_1,o_2$ can not exists if the gadget is planar. Regarding the two new XOR-gadgets it holds that, $\mathfrak{G}_{\text{RL}}$ is connectable and $\mathfrak{G}_{\text{LR}}$ is extendable. They both have a planar bipartite cover, but are not circular planar according to the input/output nodes. The only gadget that turns out to be usable, i.e., that also is circular planar with the correct boundary order, is the VARIABLE-gadget. It has the following bipartite double cover (see Figure \ref{bdcVARIABLEgadget})

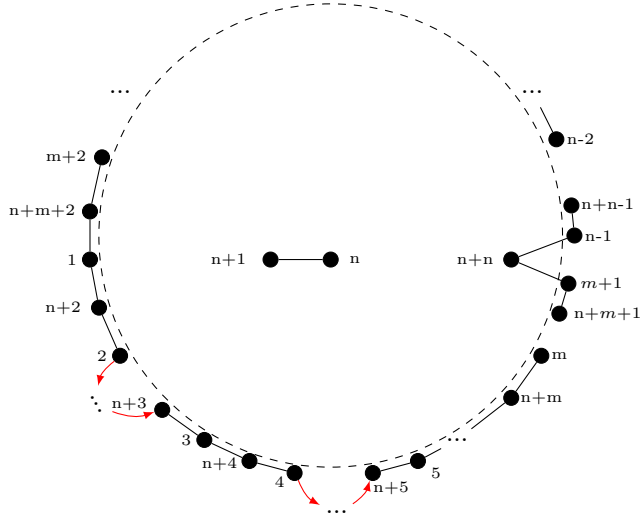
\begin{figure}
	\begin{tikzpicture}
		[scale=.8]
		\node[circle, draw, inner sep = 62pt, dashed] (circ) at (4,2.4) {};

		\node[circle, fill, draw, inner sep = 2pt] (Nn1) at (3,2){}; \node[] at (2.3,2) {\tiny{n+$1$}};
		\node[circle, fill, draw, inner sep = 2pt] (nn) at (4,2){}; \node[] at (4.4,2) {\tiny{n}};
		
		\node[circle, inner sep = 2pt] (dots2) at (7.35,4.8) {...};
		\node[circle, fill, draw, inner sep = 2pt] (nnM2) at (7.75,4){}; \node[] at (8.15,4) {\tiny{n-$2$}};
		\node[circle, fill, draw, inner sep = 2pt] (NnnM1) at (8,2.9){}; \node[] at (8.6,2.9) {\tiny{n+n-$1$}};
		\node[circle, fill, draw, inner sep = 2pt] (nnM1) at (8.05,2.4){}; \node[] at (8.45,2.4) {\tiny{n-1}};

		\node[circle, inner sep = 2pt] (dots3) at (0.5,4.8) {...};
		\node[circle, fill, draw, inner sep = 2pt] (nmP2) at (0.2,3.7){}; \node[] at (-0.4,3.7) {\tiny{m+$2$}};
		\node[circle, fill, draw, inner sep = 2pt] (NnmP2) at (0,2.8){}; \node[] at (-0.8,2.8) {\tiny{n+m+$2$}};

		\node[circle, fill, draw, inner sep = 2pt] (n1) at (0,2){}; \node[] at (-0.3,2) {\tiny{1}};

		\node[circle, fill, draw, inner sep = 2pt] (Nn2) at (0.15,1.2){}; \node[] at (-0.45,1.2) {\tiny{n+$2$}};
		\node[circle, fill, draw, inner sep = 2pt] (n2) at (0.5,0.4){}; \node[] at (0.2,0.4) {\tiny{2}};
			\node[circle, inner sep = 2pt, rotate=120] (dotsC1) at (0.1,-0.4) {...};
		\node[circle, fill, draw, inner sep = 2pt] (Nn3) at (1.2,-0.5){}; \node[] at (0.65,-0.4) {\tiny{n+$3$}};
		\node[circle, fill, draw, inner sep = 2pt] (n3) at (1.9,-1){}; \node[] at (1.6,-1) {\tiny{$3$}};
		\node[circle, fill, draw, inner sep = 2pt] (Nn4) at (2.65,-1.35){}; \node[] at (2.15,-1.4) {\tiny{n+$4$}};
		\node[circle, fill, draw, inner sep = 2pt] (n4) at (3.4,-1.55){}; \node[] at (3.15,-1.7) {\tiny{$4$}};
			\node[circle, inner sep = 2pt, rotate=0] (dotsC2) at (4.1,-2.2) {...};
		\node[circle, fill, draw, inner sep = 2pt] (Nn5) at (4.7,-1.55){}; \node[] at (5,-1.8) {\tiny{n+$5$}};
		\node[circle, fill, draw, inner sep = 2pt] (n5) at (5.45,-1.35){}; \node[] at (5.75,-1.55) {\tiny{$5$}};

		\node[circle, inner sep = 2pt] (dots) at (6.1,-1) {...};
		\node[circle, fill, draw, inner sep = 2pt] (Nnm) at (7,-0.3){}; \node[] at (7.5,-0.3) {\tiny{n+m}};
		\node[circle, fill, draw, inner sep = 2pt] (nm) at (7.5,0.4){}; \node[] at (7.8,0.4) {\tiny{m}};

		\node[circle, fill, draw, inner sep = 2pt] (NnmP1) at (7.8,1.1){}; \node[] at (8.6,1.1) {\tiny{n+$m$+$1$}};
		\node[circle, fill, draw, inner sep = 2pt] (nmP1) at (7.95,1.6){}; \node[] at (8.5,1.6) {\tiny{$m$+$1$}};
		
		\node[circle, fill, draw, inner sep = 2pt] (NnnPn) at (7,2){}; \node[] at (6.4,2) {\tiny{n+n}};
		
		\draw[-latex, color=red] (n2) edge[bend right=20] (dotsC1);\draw[-latex, color=red] (dotsC1) edge[bend right=20] (Nn3);
		\draw[-latex, color=red] (n4) edge[bend right=20] (dotsC2);\draw[-latex, color=red] (dotsC2) edge[bend right=20] (Nn5);

		\draw[-] (nnM2) to (dots2);
		\draw[-] (nnM1) to (NnnM1);
		\draw[-] (nnM1) to (NnnPn);
		\draw[-] (nn) to (Nn1);
		\draw[-] (NnmP2) to (nmP2);
		\draw[-] (n1) to (NnmP2);
		\draw[-] (n1) to (Nn2);
		\draw[-] (Nn2) to (n2);

		\draw[-] (Nn3) to (n3);
		\draw[-] (n3) to (Nn4);
		\draw[-] (Nn4) to (n4);

		\draw[-] (Nn5) to (n5);
		\draw[-] (n5) to (dots);
		\draw[-] (dots) to (Nnm);
		\draw[-] (nm) to (Nnm);

		\draw[-] (NnmP1) to (nmP1);
		\draw[-] (nmP1) to (NnnPn);

	\end{tikzpicture}
	\caption{The input/output pairs are $(n+3,2)$, $(n+5,4)$, ...,$(n+m+1,m)$ and $(n+m+3,m+2)$, $(n+m+5,m+4)$, ..., $(n+n-1,n-2)$. And they can be drawn at pairs on a circle.}
	\label{bdcVARIABLEgadget}
\end{figure}

The input/output pairs of this bipartite double cover are clockwise for the TRUE-nodes and anticlockwise for the FALSE-nodes. It can be swapped by swapping the circle horizontally.

The next two Lemmas are well known results regarding bipartite graphs.

\begin{lemma}\label{lBIP}
 If $\grG$ is a bipartite graph, then $\hat{\grG}$ is a disconnected graph that consists of two copies of $\grG$.
\end{lemma}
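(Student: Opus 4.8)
The plan is to work directly from the block definition $\mB_\grG = \begin{pmatrix} 0 & \mA_\grG \\ \mA_\grG^\tp & 0 \end{pmatrix}$, or equivalently from the vertex description of the tensor product $\hat\grG = \grG \times K_2$: its vertex set is $V(\grG) \times \{0,1\}$ (two labelled copies of each vertex of $\grG$), the vertex $(u,0)$ is adjacent to $(v,1)$ exactly when $uv \in E(\grG)$, and there are no edges inside a single copy. First I would fix a bipartition $V(\grG) = X \sqcup Y$ for which every edge of $\grG$ has one endpoint in $X$ and one in $Y$, and split the vertices of $\hat\grG$ into the two ``diagonal'' sets $A = (X\times\{0\})\cup(Y\times\{1\})$ and $B = (X\times\{1\})\cup(Y\times\{0\})$.

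The key step is to show that $\hat\grG$ has no edge joining $A$ to $B$. Take any edge of $\hat\grG$; it joins $(u,0)$ to $(v,1)$ for some $uv \in E(\grG)$, and bipartiteness forces $\{u,v\}$ to meet both $X$ and $Y$. If $u \in X$ and $v \in Y$ then both endpoints lie in $A$; if $u \in Y$ and $v \in X$ then both lie in $B$. Hence every edge is internal to $A$ or to $B$, so $\hat\grG$ is disconnected (both $A$ and $B$ are non-empty as soon as $\grG$ has a vertex), with no edge crossing the cut.

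Next I would exhibit the isomorphism $\grG \cong \hat\grG[A]$ via the map sending $w \in X$ to $(w,0)$ and $w \in Y$ to $(w,1)$; this is a bijection $V(\grG) \to A$, and checking both directions shows $uv \in E(\grG)$ if and only if its images are adjacent in $\hat\grG$, using that such an edge always pairs an element of $X\times\{0\}$ with one of $Y\times\{1\}$. The symmetric map, swapping the roles of the two copies, identifies $\grG$ with $\hat\grG[B]$, so that $\hat\grG \cong \grG \sqcup \grG$. In matrix language this is the statement that simultaneously permuting the rows and columns of $\mB_\grG$ so as to list $A$ first and $B$ second block-diagonalises it into two diagonal blocks, each permutation-similar to $\mA_\grG$; I would record this as the equivalent one-line reformulation.

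Since the argument is a direct verification, there is no deep obstacle; the single point demanding care is the no-crossing-edges claim, which is precisely where bipartiteness enters and where the analogous statement fails for non-bipartite $\grG$ (there the double cover can remain connected). I would also flag that the lemma as stated does not assume $\grG$ connected, so ``two copies of $\grG$'' must be read as the literal isomorphism $\hat\grG \cong \grG \sqcup \grG$ rather than as a statement about connected components; the construction above delivers exactly this.
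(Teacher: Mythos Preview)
Your argument is correct and is the standard proof of this well-known fact. Note, however, that the paper does not supply its own proof of this lemma: it is merely stated, together with the companion Lemma~\ref{ldis}, under the remark ``The next two Lemmas are well known results regarding bipartite graphs.'' There is therefore nothing in the paper to compare your approach against; you have simply filled in a proof the authors chose to omit, and the diagonal splitting $A=(X\times\{0\})\cup(Y\times\{1\})$, $B=(X\times\{1\})\cup(Y\times\{0\})$ you use is exactly the canonical one.
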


\begin{lemma}\label{ldis}
 A graph $\grG$ is bipartite if and only if $\hat{\grG}$ is disconnected.
\end{lemma}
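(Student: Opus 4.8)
The plan is to prove the two directions separately, invoking Lemma~\ref{lBIP} for one of them and a parity-of-walks argument for the other. Throughout I assume $\grG$ is connected; this is the implicit hypothesis the equivalence needs, since a disconnected but non-bipartite $\grG$ already has a disconnected $\hat{\grG}$ while failing to be bipartite. Recall from the definition of $\mB_\grG$ that $\hat{\grG}$ has vertices $(v,0)$ and $(v,1)$ for each $v \in V$, with $(u,0)$ adjacent to $(v,1)$ exactly when $uv$ is an edge of $\grG$; in particular every edge of $\hat{\grG}$ joins a ``layer-$0$'' vertex to a ``layer-$1$'' vertex.

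For the forward implication, if $\grG$ is bipartite then Lemma~\ref{lBIP} already tells us that $\hat{\grG}$ is two disjoint copies of $\grG$, hence disconnected; so nothing further is needed. The content of the lemma therefore lies in the converse, which I would prove in contrapositive form: \emph{if $\grG$ is connected and not bipartite, then $\hat{\grG}$ is connected.}

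The key device is a walk-lifting correspondence. A walk $v_0, v_1, \ldots, v_k$ in $\grG$ lifts to the walk $(v_0,0), (v_1,1), (v_2,0), \ldots, (v_k, k \bmod 2)$ in $\hat{\grG}$, where the second coordinate merely records the parity of the number of steps taken; conversely any walk in $\hat{\grG}$ projects to a walk of the same length in $\grG$. Consequently $(u,0)$ and $(v,b)$ lie in the same component of $\hat{\grG}$ if and only if $\grG$ contains a $u$-$v$ walk whose length has parity $b$. I would then invoke the hypothesis: a connected non-bipartite graph contains an odd closed walk, and splicing this odd closed walk (reached from $u$ by connectivity and returned along the same edges) into an arbitrary $u$-$v$ walk produces a second $u$-$v$ walk of the opposite parity. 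Hence for every $v$ both $(v,0)$ and $(v,1)$ are reachable from a fixed $(u,0)$, so all $2|V|$ vertices lie in one component and $\hat{\grG}$ is connected.

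The main obstacle is the parity bookkeeping, and precisely the step that consumes non-bipartiteness: verifying that a connected non-bipartite $\grG$ admits $u$-$v$ walks of \emph{both} parities for every ordered pair $(u,v)$. This is where the odd cycle is essential --- for a bipartite graph all $u$-$v$ walks share a single parity, which is exactly what keeps $(u,0)$ from reaching $(u,1)$ and forces disconnection. Once this fact is in hand, the lifting correspondence mechanically converts reachability-with-prescribed-parity in $\grG$ into ordinary reachability in $\hat{\grG}$, and the equivalence follows.
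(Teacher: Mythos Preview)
Your proof is correct and complete, including the important observation that the equivalence requires $\grG$ to be connected (a hypothesis the paper silently assumes). The paper itself does not give a proof of this lemma at all: it is introduced together with Lemma~\ref{lBIP} under the remark ``The next two Lemmas are well known results regarding bipartite graphs'' and is simply stated without argument. So there is no approach in the paper to compare against; you have supplied the standard walk-lifting / parity argument that justifies the claim.
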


There is a relationship between the signature of $\grG$ and the signature of $\hat{\grG}$.

\begin{lemma}\label{ryser}
 Let $\grG$ be a graph with adjacency matrix $\mA$ that has the signature
\begin{equation}
	\mathcal{S}_{\grG} = (r_1,r_2,r_3,r_4,r_5,r_6)
\end{equation}
Let $\mB$ be the matrix of the bipartite double cover $\hat{\grG}$, with $i'_1 = i_1 + \text{dim}(\mA)$ and $i'_2 = i_2 + \text{dim}(\mA)$. Then
\begin{equation}
	\mathcal{S}_{\hat{\grG}} = (r_1^2,r_1r_2,r_1r_3,r_1r_4,r_1r_5,r_1r_6)
\end{equation}
\end{lemma}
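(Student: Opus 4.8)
The plan is to exploit the anti-block-diagonal shape $\mB \define \begin{pmatrix} 0 & \mA \\ \mA^\tp & 0 \end{pmatrix}$, whose index set splits into the ``first copy'' $\{1,\dots,n\}$ and the ``second copy'' $\{n+1,\dots,2n\}$, with $n=\dim(\mA)$. The starting observation is the standard factorisation $\perm\begin{pmatrix} 0 & X \\ Y & 0 \end{pmatrix} = \perm(X)\,\perm(Y)$ for square blocks $X,Y$: in any nonzero term of the permanent every first-copy row must be matched to a second-copy column (the top-left block is zero) and every second-copy row to a first-copy column (the bottom-right block is zero), so the sum over permutations splits as a product over the two off-diagonal blocks. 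Applied to $\mB$ itself this already yields the first signature entry, $\perm(\mB)=\perm(\mA)\,\perm(\mA^\tp)=\perm(\mA)^2=r_1^2$, using $\perm(\mA^\tp)=\perm(\mA)$.

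First I would isolate the single structural fact from which all six components follow. The hypothesis $i'_k=i_k+n$ means that in passing to $\hat\grG$ we only ever delete \emph{second-copy columns} $i_k+n$ and \emph{first-copy rows} $o_k$. The claim is that deleting such a set of $k$ columns and $k$ rows preserves the decoupling: the remaining $n-k$ first-copy rows are still forced into the remaining $n-k$ second-copy columns, while the untouched $n$ second-copy rows map into the untouched $n$ first-copy columns, and a cardinality count confirms both are bijections. Hence the permanent of every such minor again factors into two permanents, where the second-copy-row / first-copy-column block is the full, undisturbed $\mA^\tp$-block and therefore always contributes exactly $\perm(\mA^\tp)=\perm(\mA)=r_1$.

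The complementary factor is the first-copy-row / second-copy-column block, and here I would track indices explicitly: the entry of $\mB$ in first-copy row $i$ and second-copy column $k+n$ is $a_{i,k}$, so after deleting columns $\{i_k+n\}$ and rows $\{o_k\}$ this block is precisely the submatrix of $\mA$ with columns $\{i_k\}$ and rows $\{o_k\}$ removed, whose permanent is $\perm(\mA^{i_1,\dots}_{o_1,\dots})$. Matching the six minors prescribed by the signature definition for the pairs $(i'_1,o_1),(i'_2,o_2)$ --- namely the empty removal, $\{i_1,i_2\}/\{o_1,o_2\}$, $i_1/o_1$, $i_2/o_2$, $i_1/o_2$, and $i_2/o_1$ --- this factor equals $r_1,r_2,r_3,r_4,r_5,r_6$ respectively. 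Multiplying by the uniform factor $r_1$ from the other block gives $\mathcal{S}_{\hat\grG}=(r_1^2,r_1r_2,r_1r_3,r_1r_4,r_1r_5,r_1r_6)$, as claimed.

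The hard part is the index bookkeeping rather than any deep computation: one must verify that precisely the minors requested for $\hat\grG$ (deleting the shifted inputs $i'_k=i_k+n$ together with the unshifted outputs $o_k$) are the ones whose deletions fall entirely on second-copy columns and first-copy rows, leaving the $\mA^\tp$-block intact; this is exactly what forces the uniform $r_1$ instead of some other minor of $\mA^\tp$. I expect confirming this decoupling together with the accompanying bijection count to be the main, though routine, obstacle, after which the transposition identity $\perm(\mA^\tp)=\perm(\mA)$ closes the argument.
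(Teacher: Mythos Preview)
Your argument is correct and in fact cleaner than the paper's own proof. You use the direct combinatorial definition of the permanent to observe that for an anti-block-diagonal matrix $\begin{pmatrix}0&X\\Y&0\end{pmatrix}$ with square off-diagonal blocks the contributing permutations split, yielding $\perm(X)\perm(Y)$; you then check that deleting only second-copy columns $i_k+n$ and first-copy rows $o_k$ preserves this structure with the $\mA^\tp$-block left intact, so every relevant minor of $\mB$ factors as $\perm(\mA)\cdot\perm(\mA^{I}_{O})$. The paper instead reaches the same factorisation by expanding $\perm(\mB^{I'}_O)$ via Ryser's inclusion--exclusion formula, decomposing each subset $S\subseteq\{1,\dots,2n-\tau\}$ as $S_1\cup S_2$ across the two copies, and then separating the double sum into a product of two Ryser sums. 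Both routes land on the identity $\perm(\mB^{I'}_O)=\perm(\mA)\cdot\perm(\mA^{I}_{O})$; your approach avoids the Ryser machinery and the somewhat delicate index bookkeeping the paper incurs (the $\tau/2$ splitting there is already awkward for odd $\tau$), at the modest cost of invoking the block factorisation as a ``standard'' fact---which your pigeonhole/cardinality paragraph in any case justifies.
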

\begin{proof}
 The proof can be deduced from Ryser permanent formula for a $n\times n$ matrix $\mB$
 \begin{equation}
  \perm(\mB) = (-1)^n \sum_{S \subseteq \{1,...,n\}} (-1)^{|S|}\prod^n_{i=1}\sum_{j \in S}a_{ij}
 \end{equation}
 The dimension of our matrix is even since it is a bipartite matrix, i.e, we write $|\mB| = 2n$ so
 $\mB$ is of the form
 \begin{equation}
  \mB = \begin{pmatrix}[c] 
			& 		 &	&	&		&	\\
			& \bold{0^{n,n}} & 	&	& \mA		&	\\
			&		 &	&	&		&	\\ 
			& \mA^\tp 	 & 	&	&\bold{0^{n,n}} & 	\\
			& \;		 &	&	&		&	
	\end{pmatrix} 
 \end{equation} 
 with $\mA$ being a $n\times n$ matrix. Further note that $\perm(\mA) = \perm(\mA^\tp)$. With the rows/columns at
 \begin{equation}
    \mB = 
	\overset{\;\;\;\;\;\;\;\;\;\;\;\;\;\;\;\;\;\;\;\;i'_1\downarrow\;\;\;i'_2\downarrow\;\;}{ 
	\begin{pmatrix}[c] 
			& 		 &	&	&		&	\\
			& \bold{0^{n,n}} & 	&	& \mA		&	\\
			&		 &	&	&		&	\\ 
			& \mA^\tp 	 & 	&	&\bold{0^{n,n}} & 	\\
			& \;		 &	&	&		&	
	\end{pmatrix} 
  	}
	\begin{matrix}[c]
		\leftarrow o_1\\
		\leftarrow o_2\\
		\;\\
		\;
	\end{matrix}
 \end{equation}
 \begin{align*}
  \perm(\mB) & = (-1)^{2n-\tau} \sum_{S \subseteq \{1,...,2n-\tau \}} (-1)^{|S|}\prod^{2n-\tau}_{i=1}\sum_{j \in S}a_{ij}\\
	     & = (-1)^{2n-\tau} \sum_{S \subseteq \{1,...,2n-\tau \}} (-1)^{|S|}\left(\prod^{n-\tau/2}_{i=1}\sum_{j \in S}a_{ij} \cdot \prod^{2n-\tau/2}_{i=n+1-\tau/2}\sum_{j \in S}a_{ij} \right)
 \end{align*}
 with $\tau \in \{0,1,2\}$ according to the cases if zero, one pair or both pairs of $(i'_1,o_1),(i'_2,o_2)$ have been removed.  One can write each subset $S \subseteq \{1,...,2n\}$ as two unique subsets $S_1$, $S_2$, with $S_1 \cup S_2 = S$ and $S_1 \subseteq \{1,...,n-\tau\}$ and $S_2 \subseteq \{n+1-\tau,...,2n-\tau\}$, so $S_1 \cap S_2 = \varnothing$. Since $a_{ij} = 0$ for $
1 \leq i,j \leq n$ and $n+1 \leq i,j \leq 2n$ we can write
\begin{align*}
  \perm(\mB) & = (-1)^{2n-\tau} \sum_{S \subseteq \{1,...,2n-\tau \}} (-1)^{|S|}\left(\prod^{n-\tau}_{i=1}\sum_{j \in S_2}a_{ij} \cdot \prod^{2n-\tau}_{i=n+1-\tau}\sum_{j \in S_1}a_{ij} \right) \\
	     & = (-1)^{2n-\tau} \sum_{ S_1 \subseteq \{1,...,n-\tau \}, S_2 \subseteq \{n+1-\tau,...,2n-\tau \}  } (-1)^{|S_1 \cup S_2|}\left(\prod^{n-\tau}_{i=1}\sum_{j \in S_2}a_{ij} \cdot \prod^{2n-\tau}_{i=n+1-\tau}\sum_{j \in S_1}a_{ij} \right) \\
	     & = (-1)^{2n-\tau} \sum_{ S_1 \subseteq \{1,...,n-\tau \}, S_2 \subseteq \{n+1-\tau,...,2n-\tau \}  } (-1)^{|S_2|}\prod^{n-\tau}_{i=1}\sum_{j \in S_2}a_{ij} \cdot (-1)^{|S_1|} \prod^{2n-\tau}_{i=n+1-\tau}\sum_{j \in S_1}a_{ij}\\
	     & = (-1)^{2n-\tau} \sum_{ S_2 \subseteq \{n+1-\tau,...,2n-\tau \}  } (-1)^{|S_2|}\prod^{n-\tau}_{i=1}\sum_{j \in S_2}a_{ij} \cdot \sum_{ S_1 \subseteq \{1,...,n-\tau \}} (-1)^{|S_1|} \prod^{2n-\tau}_{i=n+1-\tau}\sum_{j \in S_1}a_{ij}\\
	     & = (-1)^{n} \sum_{ S_2 \subseteq \{n+1-\tau,...,2n-\tau \}  } (-1)^{|S_2|}\prod^{n-\tau}_{i=1}\sum_{j \in S_2}a_{ij} \cdot (-1)^{n-\tau}\sum_{ S_1 \subseteq \{1,...,n-\tau \}} (-1)^{|S_1|} \prod^{2n-\tau}_{i=n+1-\tau}\sum_{j \in S_1}a_{ij}\\
             & = \perm(\mA) \cdot \perm(\mA^{I}_{O})
 \end{align*}
\end{proof}

\begin{corollary}\label{c1} The following two observations can be made:
 \begin{enumerate} 
  \item An XOR-gadget itself could not be bipartite
  \item An EQUALITY-gadget itself could be bipartite
 \end{enumerate}
\end{corollary}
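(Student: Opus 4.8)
The plan is to read off the two signatures and then exploit the block structure that bipartiteness forces on the adjacency matrix. An XOR-gadget has signature $\mathcal{S}=(0,0,c,c,0,0)_3$ with $c\not\equiv 0$, i.e.\ $\perm(\mA)\equiv 0$ while the two ``single wire'' minors $\perm(\mA^{i_1}_{o_1})$ and $\perm(\mA^{i_2}_{o_2})$ are nonzero; an EQUALITY-gadget has the complementary signature $(c,c',0,0,0,0)_3$ with $c,c'\not\equiv 0$ (the admissible configurations being ``no wire used'', giving $\perm(\mA)\not\equiv 0$, and ``both wires used'', giving $\perm(\mA^{i_1,i_2}_{o_1,o_2})\not\equiv 0$, while every single wire is forbidden). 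If $\grG$ is bipartite with colour classes $X,Y$, then after reordering its vertices $\mA=\begin{pmatrix}0&P\\ Q&0\end{pmatrix}$, where $P$ carries the $X\to Y$ edges and $Q$ the $Y\to X$ edges. The computational heart is the same Ryser bookkeeping already used in Lemma \ref{ryser}: every term of $\perm(\mA)$ must send each $X$-row into a $Y$-column and conversely, so $\perm(\mA)=\perm(P)\perm(Q)$ (and $\perm(\mA)=0$ automatically when $|X|\neq|Y|$), and every signature minor factors as a product of a minor-permanent of $P$ and one of $Q$, the exact factors being dictated by which class each terminal occupies.

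First I would treat the placements in which the two wires genuinely cross both blocks. In the representative case $i_1\in X,\,o_1\in Y,\,i_2\in Y,\,o_2\in X$ the factorisation gives $\perm(\mA^{i_1}_{o_1})\equiv\perm(P)\cdot\perm(Q^{i_1}_{o_1})$ and $\perm(\mA^{i_2}_{o_2})\equiv\perm(P^{i_2}_{o_2})\cdot\perm(Q)$, so $r_3\not\equiv 0$ forces $\perm(P)\not\equiv 0$ and $r_4\not\equiv 0$ forces $\perm(Q)\not\equiv 0$; but then $r_1=\perm(\mA)=\perm(P)\perm(Q)\not\equiv 0\pmod 3$, contradicting the XOR requirement $r_1\equiv 0$. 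The ``both inputs in $X$'' placement is dispatched the same way: there $\perm(\mA^{i_1}_{o_1})\equiv\perm(P_{\setminus o_1})\perm(Q_{\setminus i_1})$, $\perm(\mA^{i_2}_{o_2})\equiv\perm(P_{\setminus o_2})\perm(Q_{\setminus i_2})$ and $\perm(\mA^{i_1}_{o_2})\equiv\perm(P_{\setminus o_2})\perm(Q_{\setminus i_1})$ (the subscript $\setminus t$ deleting the row or column indexed by terminal $t$, whichever lives in that block), so $r_3,r_4\not\equiv 0$ make all four block factors nonzero and hence $r_5\not\equiv 0$, contradicting $r_5\equiv 0$. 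In every placement where the wires interact across both blocks the XOR constraints are thus self-contradictory, which is the content of statement (1).

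For statement (2) I would exhibit the consistency of exactly these equations for EQUALITY. Since EQUALITY demands $r_1\not\equiv 0$ rather than $r_1\equiv 0$, the identity $r_1=\perm(P)\perm(Q)$ is no longer an obstruction: one chooses $P,Q$ with $\perm(P),\perm(Q)\not\equiv 0$, which simultaneously yields $r_2\not\equiv 0$, while placing both terminals of each wire in one class makes every single-wire minor vanish for parity reasons (a path joining two vertices of the same class leaves an unbalanced remainder that admits no cyclic cover). Concretely I would take a known EQUALITY core $Q$ and form its bipartite doubling $\begin{pmatrix}0&I\\ Q&0\end{pmatrix}$; by the factorisation its whole signature is $\perm(I)$ times the signature of $Q$, so it is again EQUALITY, now realised by a genuine loop-free bipartite graph. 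This proves (2).

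The step I expect to be the real obstacle is the mirror-image placement for XOR, in which both inputs lie in one class and both outputs in the other. There every entry factors uniformly as $r_j\equiv\perm(P)\cdot(\text{the }j\text{-th entry of the signature of }Q)$, so the bipartite XOR condition does \emph{not} collapse numerically but instead recurses, merely requiring that the smaller block $Q$ itself carry an XOR signature. One cannot refute this case by the local factorisation alone, because the doubling construction used in (2) would otherwise manufacture a bipartite XOR as well. The clean way to close it is to iterate the reduction — each step strictly decreasing the matrix size and forcing an XOR-signatured block — until one reaches a non-bipartite core, and then to invoke the $\dji$-barrier of Rules \ref{reqXORgadget2} (over $\mathbb{Z}/2\mathbb{Z}$, where $\perm\equiv\detm$) to rule that core out, or alternatively to use the circular-planarity/connectability ordering of the boundary nodes to exclude this terminal arrangement outright. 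Making this last reduction airtight — separating ``$\grG$ is bipartite'' from ``$\grG$ merely hides a non-bipartite XOR core as a diagonal block'' — is the delicate point, and is where I would concentrate the effort.
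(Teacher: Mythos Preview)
Your route is far more elaborate than the paper's, and in the one case that matters it does not close. The paper's argument is a two--line application of Lemma~\ref{ryser}: it takes ``bipartite'' to mean the bipartite--double--cover form $\mathfrak{G}=\begin{pmatrix}0&\mA\\ \mA^{\tp}&0\end{pmatrix}$ with the terminal placement inherited from that lemma, so that the signature of $\mathfrak{G}$ is exactly $r_1\cdot(r_1,r_2,\dots,r_6)=(r_1^2,r_1r_2,\dots,r_1r_6)$, where $(r_1,\dots,r_6)$ is the signature of $\mA$. For XOR one needs the first entry $r_1^2\equiv 0\pmod p$, hence $r_1\equiv 0$, hence every entry $r_1r_j\equiv 0$ --- contradicting $r_1r_3\not\equiv 0$. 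For EQUALITY one needs $r_1^2\not\equiv 0$, which is consistent. That is the entire proof; no case split on terminal placements is performed, because the placement is already fixed by Lemma~\ref{ryser}.

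Your ``mirror--image'' case is precisely this situation, and the observation you are missing is that in the double--cover form the common factor and the first block--signature entry coincide: $\perm(Q)=\perm(\mA^{\tp})=\perm(\mA)=r_1$, so the vanishing of the full permanent $r_1^2$ kills the common factor automatically. There is no recursion to chase. Your proposed closures for that case both fail: the $\dji$--barrier is a statement over $\mathbb{Z}/2\mathbb{Z}$ and says nothing about signatures modulo an odd prime $p$; and iterating ``strip a bipartite layer, find an XOR--signatured core'' terminates at a non--bipartite XOR block, which certainly exists (e.g.\ $\mathfrak{G}_3$), so no contradiction arises. Indeed, if one allows a general bipartite directed graph $\begin{pmatrix}0&P\\ Q&0\end{pmatrix}$ with $Q\neq P^{\tp}$, your own doubling recipe manufactures a counterexample: take $Q=\mA_{\mathfrak{G}_3}$ and $P=I$, and the block graph carries an XOR signature over $\mathbb{Z}/3\mathbb{Z}$. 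So statement~(1) in the generality you are attempting is simply false; the corollary is meant, and proved, only for the symmetric block form that Lemma~\ref{ryser} treats.
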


\begin{proof}
 Let $\mathfrak{G}$ be a bipartite graph $\mathfrak{G} =  \begin{pmatrix} 0 & \mA_\grG\\ \mA^\tp_\grG & 0 \end{pmatrix}$.
 Let $S_\mA = (r_1,r_2,r_3,r_4,r_5,r_6)$ be the signature of $\mA$. To make $\mathfrak{G}$ an XOR-gadget, its signature must be of the from (see Lemma \ref{ryser})
 \begin{equation}
	S_\mathfrak{G} = (0,0,\neq 0, \neq 0, 0,0) = (r_1^2, r_1r_2, r_1r_3, r_1r_4, r_1r_5, r_1r_6) = r_1S_\mA
 \end{equation}
 It follows that $r_1=0$ must be equal to $0$ modulo $p$ and hence also $r_1r_3$ and $r_1r_4$ must vanish modulo $p$, which contradicts the requirements of an XOR-gadget signature. \\

 The EQUALITY-gadget has the signature 
\begin{equation}
	S_\mathfrak{G} = (\neq 0, \neq 0, 0, 0, 0, 0) = (r_1^2, r_1r_2, r_1r_3, r_1r_4, r_1r_5, r_1r_6) = r_1S_\mA
 \end{equation}
 which leads to the fact that $r_1$ and $r_2$ must be non equal to zero modulo $p$ and hence $r_3,r_4,r_5,r_6$ must be also non zero, which does not contradict the requirements on an EQUALITY-gadget signature. 
\end{proof}

As the next step, we present our main conjecture, that states that certain gadgets can not be found.
\begin{conjecture}\label{tMain}
 Let $\grG$ be a graph with adjacency matrix $\mA$ and the signature
 \begin{equation}
  S_\grG = (r_1, r_2, r_3, r_4, r_5, r_6)_p
\end{equation}
for some prime $p > 2$. The bipartite double cover graph $\hat{\grG}$ can only be circular planar according to the nodes $(\text{dim}+i_1,\text{dim}+i_2) = (i'_1,i'_2), (o_1,o_2)$ 
if the mapping 
\begin{equation}
 \mu(r_i) = 	\begin{cases} 0 & ,\text{if } r_i = 0 \\
			      1 & ,\text{if } r_i > 0 \\
		\end{cases}
\end{equation}
does lead to a valid congruence of the form
 \begin{equation}\label{djiEquation}
  \mu(r_1) \cdot \mu(r_2) \equiv \mu(r_3) \cdot \mu(r_4) - \mu(r_5) \cdot \mu(r_6) \pmod{2}
 \end{equation}
An exception is the case where all $r_i > 0$. In that case, a circular bipartite double cover exists, but that is only \textbf{connectable} according to the nodes $i'_1,i'_2,o_1,o_2$.
\end{conjecture}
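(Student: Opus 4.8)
The plan is to read the conjectured congruence as the mod-$2$ shadow of the Desnanot--Jacobi identity, and to obtain it as a necessary algebraic consequence of a purely topological non-crossing principle for the circular planar cover $\hat{\grG}$. First I would pass from $\grG$ to its bipartite double cover using Lemma~\ref{ryser}, so that the six quantities governing the signature become honest perfect-matching counts of $\hat{\grG}$ via Proposition~\ref{propPermPerf}: $\perm(\mB^{i'_1}_{o_1})$ counts the matchings routing the boundary pair $\{i'_1,o_1\}$, and similarly for the other minors. The four boundary nodes $i'_1,i'_2,o_1,o_2$ admit exactly three pairings, and in the fixed cyclic order prescribed in the statement precisely one of them is interleaving (``crossing'') while the other two are nested. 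The Jordan curve theorem then forbids a crossing connection system from being realised by node-disjoint routings inside the disc $D$, which is the topological input.

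Next I would convert this non-crossing obstruction into a statement about which minors may be simultaneously nonzero. Working over $\mathbb{Z}/2\mathbb{Z}$, where $\perm\equiv\detm$, the Desnanot--Jacobi identity applied to $\mB$ holds as an exact equality of integers and descends to
\begin{equation*}
 \perm(\mB)\,\perm(\mB^{i'_1,i'_2}_{o_1,o_2}) \equiv \perm(\mB^{i'_1}_{o_1})\,\perm(\mB^{i'_2}_{o_2}) - \perm(\mB^{i'_1}_{o_2})\,\perm(\mB^{i'_2}_{o_1}) \pmod{2}.
\end{equation*}
By Lemma~\ref{ryser} every BDC minor is $r_1$ times the corresponding original entry, so after cancelling the common factor this congruence rewrites, entry by entry, as the claimed relation among the $\mu(r_i)$ --- provided one may replace the \emph{parity} of each permanent by its \emph{non-vanishing modulo $p$}. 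Establishing that replacement under the circular-planarity hypothesis is the core of the argument: I would argue that planarity rigidifies the routing families so that the non-crossing counts cannot cancel modulo $p$, forcing ``$r_i\neq 0$'' and ``$r_i$ odd'' to coincide exactly on the minors that planarity permits.

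For the exceptional clause I would run the contrapositive. If all $r_i>0$ then $\mu$ is identically $1$ and the relation reads $1\equiv 1-1\equiv 0\pmod 2$, which is false; hence the implication just established rules out the usable (interleaving) boundary order, and one is left only with the nested order $i'_1,i'_2,o_1,o_2$, for which all three pairings are realizable simultaneously by a series-type planar embedding --- exactly the \texttt{connectable} situation. This matches the structure already observed for $\mathfrak{G}_{\text{RL}}$, $\mathfrak{G}_{\text{LR}}$ and the CLAUSE-gadget, whose bipartite double covers are circular planar but only in a non-usable order.

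The hard part, and the reason this remains a conjecture rather than a theorem, is precisely the bridge in the second paragraph: $\mu$ records non-vanishing modulo $p$, whereas Desnanot--Jacobi only controls parity. These two notions agree only if circular planarity genuinely prevents cancellation among competing routing systems, and proving that rigidity in full generality would let one certify non-satisfiability from a single planar permanent evaluation --- exactly the kind of collapse ($\classRP=\classNP$) the paper argues should not occur. I therefore expect the topological half (non-crossing) and the algebraic half (the mod-$2$ identity) to be provable in isolation, while the quantitative matching of ``nonzero mod $p$'' to ``odd'' is the genuine obstacle and cannot be closed without resolving the underlying complexity barrier.
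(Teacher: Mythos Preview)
The statement is labelled a \emph{conjecture} in the paper, and the paper does not prove it. Its only justification is the heuristic ``otherwise $\classRP=\classNP$'' together with computational evidence, and the concluding remarks explicitly list proving it as open future work. So there is no paper proof to compare your proposal against.

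Your outline is a sensible attempt, and your final paragraph correctly isolates the real obstruction: the Desnanot--Jacobi identity constrains \emph{parities} of the integer permanents, whereas $\mu$ records \emph{non-vanishing modulo $p$}, and these two notions need not agree (a permanent equal to $6$ is even yet nonzero modulo $5$). That gap is not a residual technicality but the entire content of the conjecture, which is precisely why the paper does not claim a proof.

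One technical point in your sketch deserves tightening. You apply $\dji$ to $\mB$ and then say ``after cancelling the common factor'' coming from Lemma~\ref{ryser}. But by that lemma every relevant minor of $\mB$ carries a factor of $r_1$, so the identity on $\mB$ reads $r_1^3 r_2 \equiv r_1^2(r_3 r_4 - r_5 r_6)\pmod 2$; when the integer $r_1$ is even (as it often is for XOR-type gadgets, e.g.\ $\gad_3$ has $r_1=6$) this collapses to $0\equiv 0$ and carries no information. Applying $\dji$ directly to $\mA$ avoids this, but then you obtain the parity relation $r_1 r_2 \equiv r_3 r_4 - r_5 r_6\pmod 2$ on the integer values, which is already true unconditionally and still says nothing about $\mu$. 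In other words, the ``topological half'' and the ``algebraic half'' you describe as separately provable do not combine to anything beyond the unconditional $\dji$; the conjecture lives entirely in the bridge you acknowledge cannot be closed.
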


One could ``prove'' this conjecture with the sentence ``Otherwise $\classRP = \classNP$''. To do so, one would pick a XOR-gadget that contradicts the conjecture and take a formula $\Phi$ from $\udreisat$ and convert it into a pn-planar formula. Then one would construct the Valiant graph and build, according to Lemma \ref{lConnectable}, the bipartite double cover by simply substituting each gadgets by its bipartite double cover. Hence, the total bipartite double cover has the same planar layout as the Valiant graph and, abstracted, as the formulas incidence graph. After applying Kasteleyn's algorithm, one could determine $\Phi$ satisfiability as it is done in the proof of Theorem \ref{theorem2a}.

Note, the conjecture does not exclude the possibility that the bipartite double cover of a gadget will be planar, but only that it can not be  \textit{circular planar} according to all input/output nodes. We found gadgets that have three of the four nodes on the boundary but not all four. In Figure \ref{nestedXORgadgetBDC}, one can see a XOR-gadget that has one pair on the boundary and one pair in the interior, but an XOR-gadget with all four nodes on the boundary can not be found.

\begin{proposition}\label{c2} 
Since the signature $\mathcal{S} = (0,0,\neq 0, \neq 0,0,0)_p$ does not fulfill the Equation \eqref{djiEquation}, there can not exists an XOR-gadget $\gad$, such that $\hat{\gad}$ is circular planar with $(i'_1,i'_2)$ and $(o_1,o_2)$ as boundary nodes.
\end{proposition}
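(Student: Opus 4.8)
The plan is to recognize Proposition \ref{c2} as nothing more than the specialization of Conjecture \ref{tMain} to the signature that an XOR-gadget is forced to carry, so that the entire argument collapses to checking a single parity congruence and verifying that the conjecture's exception clause does not apply.

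First I would recall that, by definition, any XOR-gadget $\gad$ must realize the signature $\mathcal{S}_\gad = (0,0,\neq 0,\neq 0,0,0)_p$: the forbidden traversals (entering both sides, entering neither, or exiting at the wrong node) each contribute a permanent $\equiv 0 \pmod{p}$, whereas the two correct ways $i_1 \to o_1$ and $i_2 \to o_2$ contribute values that are nonzero in $\mathbb{Z}/p\mathbb{Z}$. Thus $(r_1,r_2,r_3,r_4,r_5,r_6)$ satisfies $r_1 = r_2 = r_5 = r_6 = 0$ while $r_3,r_4 \neq 0$, exactly as recorded in Equation \eqref{reqnewXORgadget} for the concrete gadget $\gad_3$.

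Next I would feed this signature through the mapping $\mu$ of Conjecture \ref{tMain}, which sends $0 \mapsto 0$ and every nonzero residue to $1$, obtaining $(\mu(r_1),\dots,\mu(r_6)) = (0,0,1,1,0,0)$. Substituting into the congruence \eqref{djiEquation} yields
\begin{equation*}
 \mu(r_1)\mu(r_2) = 0\cdot 0 = 0, \qquad \mu(r_3)\mu(r_4) - \mu(r_5)\mu(r_6) = 1\cdot 1 - 0\cdot 0 = 1 ,
\end{equation*}
so the required identity would read $0 \equiv 1 \pmod{2}$, which is false. Because the residues $r_1,r_2,r_5,r_6$ all vanish, we are plainly not in the exceptional ``all $r_i > 0$'' case of the conjecture, so that escape hatch is closed. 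Conjecture \ref{tMain} then applies verbatim and forbids $\hat{\gad}$ from being circular planar with $(i'_1,i'_2)$ and $(o_1,o_2)$ simultaneously on the boundary, which is exactly the claim.

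The step I expect to be the ``obstacle'' is not a mathematical difficulty but an honest caveat: the conclusion is only as strong as Conjecture \ref{tMain}, which is asserted rather than proved in the paper. Accordingly the proposition should be read as spelling out what the conjecture immediately delivers for the specific XOR-signature, in direct parallel to how the Desnanot--Jacobi computation earlier ruled out an odd constant $c$ (and hence a parsimonious reduction) over $\mathbb{Z}/2\mathbb{Z}$. If one instead wished for an unconditional statement, the real work would lie in proving the conjecture itself, i.e.\ in establishing the claimed equivalence between the parity relation \eqref{djiEquation} on $\mu$ of the signature and the circular-planar realizability of $\hat{\grG}$ on the designated four boundary nodes.
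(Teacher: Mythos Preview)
Your proposal is correct and matches the paper's treatment exactly: the paper gives no separate proof for this proposition, since the statement itself already records the entire argument (the XOR signature $(0,0,\neq 0,\neq 0,0,0)_p$ maps under $\mu$ to $(0,0,1,1,0,0)$, which violates \eqref{djiEquation}, and the all-nonzero exception clearly does not apply). Your expansion---spelling out the $\mu$-values, the failed congruence $0\equiv 1\pmod 2$, and the explicit caveat that the conclusion is conditional on Conjecture~\ref{tMain}---is precisely the intended reading.
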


\begin{proposition}\label{c3}
 Corollary \ref{c1} together with Lemma \ref{lBIP} add up to the fact, that one can not find an \text{EQUALITY}-gadget $\gad$ that is itself bipartite (or any other bipartite gadget that could exists if $r_1$ is equal to one) and is circular planar, since its bipartite double cover would consists of two copies of $\gad$ and this would contradict Conjecture \ref{tMain}.
\end{proposition}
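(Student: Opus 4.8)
The plan is to argue by contradiction against Conjecture \ref{tMain}: assuming such a gadget exists, I would build from it a circular planar bipartite double cover whose signature violates Equation \eqref{djiEquation}. Suppose $\gad$ is an EQUALITY-gadget that is itself bipartite and circular planar according to its input/output nodes $i_1,i_2,o_1,o_2$. By part (2) of Corollary \ref{c1}, bipartiteness of an EQUALITY-gadget is consistent with its signature, so the only possible obstruction to the construction must come from circular planarity. Recall that the signature of an EQUALITY-gadget is $\mathcal{S}_\gad = (\neq 0, \neq 0, 0, 0, 0, 0)_p$; hence $\mu$ sends it to $(1,1,0,0,0,0)$, and the congruence of Conjecture \ref{tMain} would read $1 \cdot 1 \equiv 0 \cdot 0 - 0 \cdot 0 \pmod 2$, i.e. $1 \equiv 0$, which is false. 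The exceptional case (all $r_i > 0$) does not apply, since $r_3 = r_4 = r_5 = r_6 = 0$.

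The core of the argument is to show that $\hat\gad$ is nevertheless circular planar according to the nodes $(i'_1,i'_2)$ and $(o_1,o_2)$. By Lemma \ref{lBIP}, since $\gad$ is bipartite, $\hat\gad$ is disconnected and consists of two disjoint copies of $\gad$. Writing the bipartition of $\gad$ as $X \sqcup Y$, the two components are $X \cup Y'$ and $X' \cup Y$, each isomorphic to $\gad$ under the map that sends unprimed and primed vertices to their originals; in particular this isomorphism carries each of $i'_1,i'_2,o_1,o_2$ to one of the boundary nodes $i_1,i_2,o_1,o_2$ of $\gad$. Thus each of the four target nodes lies on the boundary of whichever component contains it. Embedding the two copies in two disjoint sub-discs of a single disc $D$, with the boundary nodes of each copy confined to a separate arc of $\partial D$, produces a circular planar embedding of $\hat\gad$ with all four nodes on the bounding circle, regardless of how $i_1,i_2,o_1,o_2$ distribute between $X$ and $Y$.

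This circular planar embedding of $\hat\gad$ directly contradicts Conjecture \ref{tMain}, and therefore no such $\gad$ can exist. I would close by noting that the argument uses only bipartiteness and the signature of $\gad$, so it extends verbatim to any bipartite gadget whose signature violates Equation \eqref{djiEquation} (the case $r_1 = 1$ flagged in the statement): its bipartite double cover would again split into two circular planar copies. The step I expect to require the most care is the final claim of the second paragraph, namely that two vertex-disjoint circular planar copies can always be placed so that all four designated nodes simultaneously lie on the boundary circle. This is geometrically routine precisely because the components share no vertices, but one should verify that the boundary-node images of each copy can indeed be restricted to disjoint arcs of $\partial D$ in every distribution of the four nodes across the two components.
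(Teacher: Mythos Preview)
Your proposal is correct and follows exactly the reasoning the paper indicates: the proposition in the paper carries no separate proof, the argument being embedded in the statement itself (bipartite $\Rightarrow$ $\hat\gad$ is two copies of $\gad$ by Lemma~\ref{lBIP}, hence circular planar if $\gad$ is, which contradicts Conjecture~\ref{tMain} since the EQUALITY signature violates Equation~\eqref{djiEquation}). You have simply supplied the details the paper leaves implicit, including the careful placement of the two disjoint copies on a common disc.
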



Let us turn to the one exception named in the Conjecture \ref{tMain}, that is the case when all $r_i > 0$. This exception is interesting, since it surprisingly allows the nodes  $(i'_1,i'_2)$ and $(o_1,o_2)$ to be on the boundary but puts restrictions to their order. On the first sight, one would argue, that a gadget with such a signature where no input variant vanishes modulo $p$ is not very useful, since it does not allow to exclude cyclic covers that would potential lead to non satisfying assignments. 

Before we proceed, we show the rules to concatenate two gadgets. We then show that if the exceptional gadgets from the conjecture would also be \textbf{extendable}, one could again build gadgets that contradict the conjecture itself. 


\begin{lemma}[Concatenating gadgets.] Given two gadgets $\gad_1$ and $\gad_2$ with the signatures $S_{\gad_1} = (r_1,r_2,r_3,r_4,r_5,r_6)$ and $S_{\gad_2} = (\rho_1,\rho_2,\rho_3,\rho_4,\rho_5,\rho_6)$
and the input and output nodes $(i_1,o_1),(i_2,o_2)$ and $(\iota_1,\omega_1),(\iota_2,\omega_2)$. If one concatenates them, by inserting the new edges $c_1 = (o_1,\iota_1)$ and $c_2 = (o_2,\iota_2)$, one gets a new gadget $\gad$ with the signature
\begin{align*}
 S_\gad = (r_1\rho_1, r_2\rho_2, r_3\rho_3+r_5\rho_6, r_4\rho_4+r_6\rho_5, r_3\rho_5+r_5\rho_4, r_4\rho_6+r_6\rho_3)
\end{align*}
and input/output pairs $(i_1,\omega_1),(i_2,\omega_2)$.
\end{lemma}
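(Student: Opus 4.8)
The plan is to work directly with the adjacency matrix of the concatenated gadget and to exploit the combinatorial reading of minor-permanents recalled in the Remark ``The permanent of minors''. Writing $\mA_1,\mA_2$ for the adjacency matrices of $\gad_1,\gad_2$ and $\mM$ for the matrix of $\gad$, the concatenation has the block form
\[
 \mM = \begin{pmatrix} \mA_1 & C \\ \bold{0} & \mA_2 \end{pmatrix},
\]
where $C$ holds exactly the two entries of the connecting edges, a $1$ in row $o_1$/column $\iota_1$ and a $1$ in row $o_2$/column $\iota_2$. The crucial structural fact is that the lower-left block vanishes: both connecting edges point from $\gad_1$ into $\gad_2$ and there is no edge back. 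Hence no directed cycle can ever traverse a connecting edge, so in any cyclic cover every cycle lies entirely inside $\gad_1$ or entirely inside $\gad_2$, and any $i_a$-to-$\omega_b$ path must cross $C$ exactly once. Throughout I use the convention $r_3=\perm(\mA_1{}^{i_1}_{o_1})$, $r_4=\perm(\mA_1{}^{i_2}_{o_2})$, $r_5=\perm(\mA_1{}^{i_1}_{o_2})$, $r_6=\perm(\mA_1{}^{i_2}_{o_1})$ (and analogously for $\rho_i$), reading each gadget-minor off its signature.

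First I would handle the two ``closed'' entries. For $r_1'=\perm(\mM)$ the diagonal blocks are square and the lower-left block is zero, so in any nonzero permanent term the $\gad_2$-rows are matched bijectively to the $\gad_2$-columns, forcing the $\gad_1$-rows onto the $\gad_1$-columns; thus $C$ is never used and $r_1'=\perm(\mA_1)\perm(\mA_2)=r_1\rho_1$. For $r_2'=\perm(\mM^{i_1,i_2}_{\omega_1,\omega_2})$ the same bookkeeping becomes decisive: deleting columns $i_1,i_2$ and rows $\omega_1,\omega_2$ leaves the $\gad_2$-block with two more columns than rows, so exactly two $\gad_1$-rows must be routed into $\gad_2$-columns through $C$. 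Since $C$ offers only the entries $(o_1,\iota_1)$ and $(o_2,\iota_2)$, both connecting edges are forced, and the permanent factors as $\perm(\mA_1{}^{i_1,i_2}_{o_1,o_2})\cdot\perm(\mA_2{}^{\iota_1,\iota_2}_{\omega_1,\omega_2})=r_2\rho_2$.

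For the four remaining entries I would use the path interpretation. Each is a single minor $\perm(\mM^{i_a}_{\omega_b})$, counting weighted paths from $i_a$ to $\omega_b$ with all other nodes cyclic-covered; by the structural fact such a path crosses $C$ once, so I split on which connecting edge is used. Using $c_1$ forces passage through $o_1\to\iota_1$ and factors the weight as a $\gad_1$-path $i_a\rightsquigarrow o_1$ times a $\gad_2$-path $\iota_1\rightsquigarrow\omega_b$ (off-path nodes cyclic-covered within their own gadget), and likewise $c_2$ forces $o_2\to\iota_2$. Reading the factors off the signatures gives $r_3'=r_3\rho_3+r_5\rho_6$, and the three analogous case splits yield $r_4'=r_4\rho_4+r_6\rho_5$, $r_5'=r_3\rho_5+r_5\rho_4$ and $r_6'=r_4\rho_6+r_6\rho_3$, which together with $r_1',r_2'$ is exactly the asserted signature; the input/output pairs $(i_1,\omega_1),(i_2,\omega_2)$ are immediate from the construction.

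The step that needs the most care is the factorization itself: unlike the determinant, the permanent of a block-triangular matrix does not factor in general, so triangularity cannot simply be invoked. The argument genuinely relies on $C$ having only two nonzero entries, located precisely in rows $o_1,o_2$ and columns $\iota_1,\iota_2$, together with the dimension bookkeeping that pins down how many of them any nonzero term must use. I would therefore state, minor by minor, the exact row/column deficit of the $\gad_2$-block and conclude that the set of used connecting edges is either forced (for $r_1',r_2'$) or ranges over the two singletons (for $r_3',\dots,r_6'$). Verifying that no other permutation can contribute a nonzero term, and noting that the connecting edges carry weight $1$ and so leave the products unchanged, is the only genuinely delicate bookkeeping in the proof.
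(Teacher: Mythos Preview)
Your proof is correct and follows essentially the same case analysis as the paper: both split on which subset of the connecting edges $c_1,c_2$ is used in a contributing term and then factor the resulting permanent into a $\gad_1$-minor times a $\gad_2$-minor. Your version is somewhat more explicit about the block structure of $\mM$ and the row/column deficit bookkeeping, while the paper phrases everything in terms of cyclic covers and ``independence of the two gadgets''; the six cases and the resulting products are identical.

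One small correction: your remark that ``unlike the determinant, the permanent of a block-triangular matrix does not factor in general'' is false. If the diagonal blocks are square and the lower-left block is zero, then for any nonzero permutation term the rows of the lower block must map bijectively onto its columns, which forces the upper block to do the same; hence $\perm\begin{pmatrix}\mA_1 & C\\ \bold{0} & \mA_2\end{pmatrix}=\perm(\mA_1)\,\perm(\mA_2)$ exactly as for the determinant. So for $r_1'$ you \emph{can} simply invoke block-triangularity; the extra care is only needed for the minors $r_2',\dots,r_6'$, where the diagonal blocks are no longer square and the deficit argument (or equivalently the path interpretation) is genuinely required. This does not affect the validity of your proof, since you carry out the bookkeeping anyway.
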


\begin{proof} The concatenation of two gadgets is shown in Figure \ref{connectingGadgets}.
 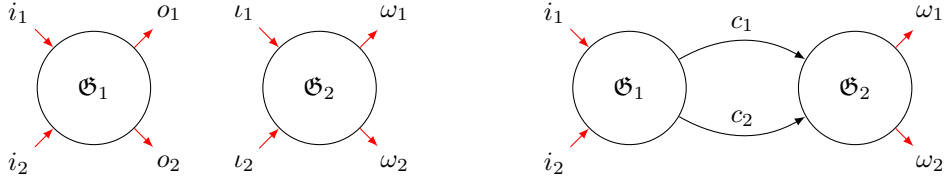
\begin{figure}[h!]
\centering
\begin{minipage}[h]{70mm}
	\centering
	\begin{tikzpicture}
		[scale=1]
		\tikzstyle{every loop} = [-latex]
		\node[circle, inner sep = 2pt] (OUT11) at (2,2){$o_1$};
		\node[circle, inner sep = 2pt] (OUT12) at (2,0){$o_2$};
		\node[circle, draw, inner sep = 10pt] (n1) at (1,1){$\gad_1$};
		\node[circle, inner sep = 2pt] (IN11) at (0,2){$i_1$};
		\node[circle, inner sep = 2pt] (IN12) at (0,0){$i_2$};
		
		\draw[-latex, color=red] (IN11) to (n1);
		\draw[-latex, color=red] (IN12) to (n1);
		\draw[-latex, color=red] (n1) to (OUT11);
		\draw[-latex, color=red] (n1) to (OUT12);

		\node[circle, inner sep = 2pt] (OUT21) at (5,2){$\omega_1$};
		\node[circle, inner sep = 2pt] (OUT22) at (5,0){$\omega_2$};
		\node[circle, draw, inner sep = 10pt] (n2) at (4,1){$\gad_2$};
		\node[circle, inner sep = 2pt] (IN21) at (3,2){$\iota_1$};
		\node[circle, inner sep = 2pt] (IN22) at (3,0){$\iota_2$};
		
		\draw[-latex, color=red] (IN21) to (n2);
		\draw[-latex, color=red] (IN22) to (n2);
		\draw[-latex, color=red] (n2) to (OUT21);
		\draw[-latex, color=red] (n2) to (OUT22);
		
	\end{tikzpicture}
\end{minipage}
\begin{minipage}[h]{70mm}
	\centering
	\begin{tikzpicture}
		[scale=1]
				\tikzstyle{every loop} = [-latex]
		\node[circle, draw, inner sep = 10pt] (n1) at (1,1){$\gad_1$};
		\node[circle, inner sep = 2pt] (IN11) at (0,2){$i_1$};
		\node[circle, inner sep = 2pt] (IN12) at (0,0){$i_2$};
		
		\draw[-latex, color=red] (IN11) to (n1);
		\draw[-latex, color=red] (IN12) to (n1);
		
		\node[circle, inner sep = 2pt] (OUT21) at (5,2){$\omega_1$};
		\node[circle, inner sep = 2pt] (OUT22) at (5,0){$\omega_2$};
		\node[circle, draw, inner sep = 10pt] (n2) at (4,1){$\gad_2$};
		
		\draw[-latex, color=red] (n2) to (OUT21);
		\draw[-latex, color=red] (n2) to (OUT22);
		
		\draw[-latex] (n1) to [out=30,in=150]  node[auto] {$c_1$} (n2);
		\draw[-latex] (n1) to [out=330,in=210]  node[auto] {$c_2$} (n2);

	\end{tikzpicture}
\end{minipage}
 \caption{This figure illustrates the concatenation principle of two extendable gadgets $\gad_1$ and $\gad_2$. The red edges are their external edges and their internal edges are not visible. The new gadget $\gad$ is constructed by inserting the edges $c_1$ and $c_2$, which are then internal edges of the new gadget $\gad$. The new input/output pairs are $(i_1,\omega_1),(i_2,\omega_2)$.}
\label{connectingGadgets}
\end{figure}
 A look at the figure immediatelly illustrates why only \textbf{extendable} gadgets can be used to create a new planar gadget. Assume that, e.g., at least one of those gadgets would be \textbf{connectable}. That means, in the case that it would be $\gad_1$, that the order of the boundary nodes would be $i_1,o_1,i_2,o_2$. The insertion of the new edges $(i_1,\omega_1),(i_2,\omega_2)$
would surround the input edge $i_2$, which would lead to an edge crossing and hence non-planarity.

 Concatenation of two gadgets is in principle the same as to create a small Valiant graph. Like Bon-Dor and Halevi argued in their proof \cite{Ben93}, we can argue with the independence of the two gadgets. Let the signature of $\gad$ be $S_\gad = (R_1,R_2,R_3,R_4,R_5,R_6)$, then 
 \begin{enumerate}
 	\item No external edge is taken and the permanent of $\gad$ is equal to the product of the permanent of $\gad_1$ and $\gad_2$. Hence $R_1 = r_1\rho_1$.
	\item All external edges are taken and thus also the two connecting edges $c_1$ and $c_2$ must be taken. It holds $\perm({(\mA_\gad)^{i_1,i_2}_{\omega_1,\omega_2}}) = \perm({(\mA_{\gad_1})^{i_1,i_2}_{o_1,o_2}})\perm({(\mA_{\gad_2})^{\iota_1,\iota_2}_{\omega_1,\omega_2}}) = r_2\rho_2 = R_2$.

	\item External edges $i_1$ and $\omega_1$ are taken. There are two possibilities to choose the connection edges: either $c_1$, therefore $o_1$ and $\iota_1$ are also taken, or $c_2$, thus also $o_2$ and $\iota_2$ are taken. Hence 
	\begin{align*}
		\perm({(\mA_\gad)^{i_1}_{\omega_1}}) 	& = \perm({(\mA_{\gad_1})^{i_1}_{o_1}})\perm({(\mA_{\gad_2})^{\iota_1}_{\omega_1}})+ \perm({(\mA_{\gad_1})^{i_1}_{o_2}})\perm({(\mA_{\gad_2})^{\iota_2}_{\omega_1}}) \\
							& = r_3\rho_3 + r_5\rho_6\\
							& = R_3
	\end{align*}
	\item External edges $i_2$ and $\omega_2$ are taken. There are two possibilities to choose the connection edges: either $c_2$, therefore $o_2$ and $\iota_2$ are also taken, or $c_1$, thus also $o_1$ and $\iota_1$ are taken. Hence 
	\begin{align*}
		\perm({(\mA_\gad)^{i_2}_{\omega_2}}) 	& = \perm({(\mA_{\gad_1})^{i_2}_{o_2}})\perm({(\mA_{\gad_2})^{\iota_2}_{\omega_2}})+ \perm({(\mA_{\gad_1})^{i_2}_{o_1}})\perm({(\mA_{\gad_2})^{\iota_1}_{\omega_2}}) \\
							& = r_4\rho_4 + r_6\rho_5\\
							& = R_4
	\end{align*}
	\item External edges $i_1$ and $\omega_2$ are taken. There are two possibilities to choose the connection edges: either $c_1$, therefore $o_1$ and $\iota_1$ are also taken, or $c_2$, thus also $o_2$ and $\iota_2$ are taken. Hence 
	\begin{align*}
		\perm({(\mA_\gad)^{i_1}_{\omega_2}}) 	& = \perm({(\mA_{\gad_1})^{i_1}_{o_1}})\perm({(\mA_{\gad_2})^{\iota_1}_{\omega_2}})+ \perm({(\mA_{\gad_1})^{i_1}_{o_2}})\perm({(\mA_{\gad_2})^{\iota_2}_{\omega_2}}) \\
							& = r_3\rho_5 + r_5\rho_4\\
							& = R_5
	\end{align*}
	\item External edges $i_2$ and $\omega_1$ are taken. There are two possibilities to choose the connection edges: either $c_2$, therefore $o_2$ and $\iota_2$ are also taken, or $c_1$, thus also $o_1$ and $\iota_1$ are taken. Hence 
	\begin{align*}
		\perm({(\mA_\gad)^{i_2}_{\omega_1}}) 	& = \perm({(\mA_{\gad_1})^{i_2}_{o_2}})\perm({(\mA_{\gad_2})^{\iota_2}_{\omega_1}})+ \perm({(\mA_{\gad_1})^{i_2}_{o_1}})\perm({(\mA_{\gad_2})^{\iota_1}_{\omega_1}}) \\
							& = r_4\rho_6 + r_6\rho_3\\
							& = R_6
	\end{align*}
 \end{enumerate}
\end{proof} 

The rules of concatenation explain, for example, what happens, if two gadgets, one with the signature $(1,2,2,2,2,2)_3$ and one with $(1,2,2,2,1,1)_3$, are concatenated. Note, that both gadgets are in principle possible due to the exception in Conjecture \ref{tMain}. The resulting gadget would have the signature $(1,1,0,0,0,0)_3$, which should not be possible, since it represents a EQUALITY-gadget. But in order to be able to extend one gadget with another to form a new gadget, that is still planar, the gadgets must have the property to be \textbf{extendable}. And Conjecture \ref{tMain} states, that all such exceptional gadgets, e.g. a gadget with a signature of $(1,2,2,2,2,2)_3$, are always only \textbf{connectable}, which makes the resulting new gadget non-planar.

\subsection{A solvable counting problem}
\textit{Question: Is there a counting problem, that can be solved with the gadgets that are realizable?}
 This question aims to an analogous case as to the $\#_7\mathsf{Pl}$-$\mathsf{RTW}$-$\mathsf{Mon}$-$\mathsf{3CNF}$ problem, that can ``accidentally'' be solved using holographic algorithms. 
 The answer is \textit{Yes, there is at least one}. One could count the solution of formulas from $\forestdreisat$, as defined in Definition \ref{defForSat}. Note, that this is not a new result, since in 2010 Samer and Szeider \cite{Sam10b} also showed how to count this efficiently, but using a totally different approach. The reason why this class is countable is, that one does not need circular planarity according to both input and output pairs but only for one of those pairs. The other pair must be located in a arbitrary inner face of the embedding. Since the incidence graph is a forest, all further gadgets could be stored \textit{within} the previous gadget, i.e., some kind of nesting has to be done. Since such gadgets do not contradict Conjecture \ref{tMain}, they can indeed be found. Consider the XOR-gadget in Figure \ref{nestedXORgadget} and its bipartite double cover in Figure \ref{nestedXORgadgetBDC}. Its bipartite double cover is connectable over the nodes $(8,3)$ and all further drawing is done within the nested area.

\begin{figure}[h!]
\centering
\begin{minipage}[h]{70mm}
	\centering
	\begin{tikzpicture}
		[scale=1]
		\tikzstyle{every loop} = [-latex]
		\node[circle, fill, draw, inner sep = 2pt] (n1) at (0,1.5){}; \node[] at (-0.3,1.5) {\tiny{$1$}};
		\node[circle, fill, draw, inner sep = 2pt] (n2) at (0.5,0){}; \node[] at (0.2,0) {\tiny{$2$}};
		\node[circle, fill, draw, inner sep = 2pt] (n3) at (0.5,3){}; \node[] at (0.2,3) {\tiny{$3$}};
		
		\node[circle, fill, draw, inner sep = 2pt] (n4) at (3,1.5){}; \node[] at (3.3,1.5) {\tiny{$4$}};
		\node[circle, fill, draw, inner sep = 2pt] (n5) at (2.5,0){}; \node[] at (2.8,0) {\tiny{$5$}};
		\node[circle, fill, draw, inner sep = 2pt] (n6) at (2.5,3){}; \node[] at (2.8,3) {\tiny{$6$}};

		\node[circle, fill, draw, inner sep = 2pt] (n7) at (1.5,1.5){}; \node[] at (1.3,1.7) {\tiny{$7$}};
		
		\draw[-latex] (n1) edge[bend left=10] (n2);
		\draw[-latex] (n1) edge[bend left=25] (n4);
		\draw[-latex] (n1) to (n7);
	
		\draw[-latex] (n2) edge[bend left=10] (n1);
		\draw[-latex] (n2) to (n3);
		\draw[-latex] (n2) edge[bend left=10] (n7);

		\draw[-latex] (n3) edge[bend left=10] (n6);

		\draw[-latex] (n4) to (n2);
		\draw[-latex] (n4) to (n3);
		\draw[-latex] (n4) edge[bend left=10] (n7);
	
		\draw[-latex] (n5) to (n1);
		\draw[-latex] (n5) edge[bend left=25] (n3);
		\draw[-latex] (n5) to [in=225,out=315, loop]  node[auto] {} (n5);
		\draw[-latex] (n5) edge[bend left=10] (n6);
		\draw[-latex] (n5) to (n7);

		\draw[-latex] (n6) edge[bend left=10] (n3);
		\draw[-latex] (n6) edge[bend left=10] (n5);
		
		\draw[-latex] (n7) edge[bend left=10] (n2);
		\draw[-latex] (n7) to (n3);
		\draw[-latex] (n7) edge[bend left=10] (n4);
		\draw[-latex] (n7) to [in=100,out=15, loop]  node[auto] {} (n7);
	\end{tikzpicture}
\end{minipage}
\begin{minipage}[h]{70mm}
	\centering
	$\begin{pmatrix}[r]
		0 & 1 & 0 & 1 & 0 & 0 & 1\\
		1 & 0 & 1 & 0 & 0 & 0 & 1\\
		0 & 0 & 0 & 0 & 0 & 1 & 0\\
		0 & 1 & 1 & 0 & 0 & 0 & 1\\
		1 & 0 & 1 & 0 & 1 & 1 & 1\\
		0 & 0 & 1 & 0 & 1 & 0 & 0\\
		0 & 1 & 1 & 1 & 0 & 0 & 1
	\end{pmatrix}$
\end{minipage}
\caption{$\mathfrak{G}^*: $The new XOR-gadget with (1,3) and (2,4).}
\label{nestedXORgadget}
\end{figure}
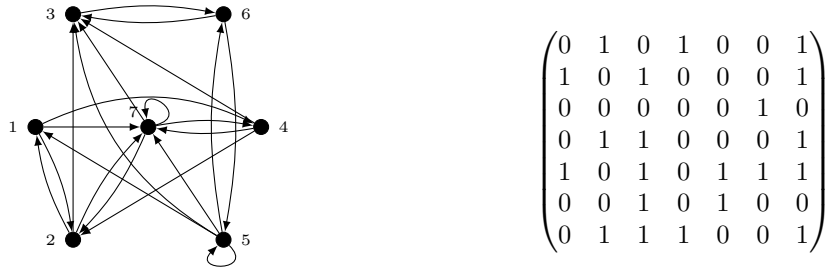

\begin{figure}[h!]
\centering
\begin{minipage}[h]{70mm}
	\centering
	\begin{tikzpicture}
		[scale=1]
		\tikzstyle{every loop} = [-latex]
		\node[circle, inner sep = 2pt, rotate=0] (help2) at (1,2.3) {\tiny{Nested Area}};
		\node[circle, inner sep = 2pt, rotate=90] (help) at (5.5,2.1) {...};
		\node[circle, inner sep = 2pt, rotate=90] (help2) at (2,2.3) {...};
	
		\node[circle, draw, inner sep = 57pt, dashed] (circ) at (2,2) {};
		\node[circle, fill, draw, inner sep = 2pt] (n1) at (2.8,0.8){}; \node[] at (3,1) {\tiny{$1$}};
		\node[circle, fill, draw, inner sep = 2pt] (n2) at (1,0){}; \node[] at (1,-0.3) {\tiny{$2$}};
		
		\node[circle, fill, draw, inner sep = 2pt] (n3) at (4,4){}; \node[] at (4,4.3) {\tiny{$3$}};
		\node[circle, fill, draw, inner sep = 2pt] (n4) at (2.5,3){}; \node[] at (2.3,3.2) {\tiny{$4$}};
		\node[circle, fill, draw, inner sep = 2pt] (n5) at (2.5,3.7){}; \node[] at (2.5,4) {\tiny{$5$}};
		\node[circle, fill, draw, inner sep = 2pt] (n6) at (1,4){}; \node[] at (1,4.3) {\tiny{$6$}};
		\node[circle, fill, draw, inner sep = 2pt] (n7) at (1.2,0.8){}; \node[] at (1.2,0.5) {\tiny{$7$}};
		\node[circle, fill, draw, inner sep = 2pt] (n8) at (4,0){}; \node[] at (4,-0.3) {\tiny{$8$}};
		\node[circle, fill, draw, inner sep = 2pt] (n9) at (2.5,1.6){}; \node[] at (2.8,1.6) {\tiny{$9$}};
		\node[circle, fill, draw, inner sep = 2pt] (n10) at (0,2.5){}; \node[] at (-0.2,2.3) {\tiny{$10$}};
		\node[circle, fill, draw, inner sep = 2pt] (n11) at (2,0.8){}; \node[] at (2,1.1) {\tiny{$11$}};
		\node[circle, fill, draw, inner sep = 2pt] (n12) at (2,4){}; \node[] at (2,4.3) {\tiny{$12$}};
		\node[circle, fill, draw, inner sep = 2pt] (n13) at (3,4){}; \node[] at (3,4.3) {\tiny{$13$}};
		\node[circle, fill, draw, inner sep = 2pt] (n14) at (3.5,0.4){}; \node[] at (3.8,0.4) {\tiny{$14$}};

		\draw[-latex, color=red] (n3) edge[bend left=40] (help);
		\draw[-latex, color=red] (help) edge[bend left=40] (n8);
		\draw[-latex, color=red] (n4) edge[bend right=40] (help2);
		\draw[-latex, color=red] (help2) edge[bend right=40] (n9);

		\draw[-] (n1) to (n9);\draw[-] (n1) to (n11);
		\draw[-] (n3) to (n13);
		\draw[-] (n13) to (n5);
		\draw[-] (n5) to (n12);\draw[-] (n5) edge[bend left=30] (n8);\draw[-] (n5) edge[bend right=20] (n10);
		\draw[-] (n12) to (n6);
		\draw[-] (n6) to (n10);
		\draw[-] (n9) to (n7);	
		\draw[-] (n7) to (n10); \draw[-] (n7) to (n14); 
		\draw[-] (n11) to (n7);		
		\draw[-] (n10) edge[bend right=30] (n2);
		\draw[-] (n2) to (n8);
		\draw[-] (n4) to (n9);\draw[-] (n4) edge[bend right=20] (n10);
		\draw[-] (n14) to (n5);\draw[-] (n14) to (n4);\draw[-] (n14) to (n2);\draw[-] (n14) to (n1);
	\end{tikzpicture}
\end{minipage}
\begin{minipage}[h]{80mm}
	\centering
	
	$\begin{pmatrix}[r r r r r r r r r r r r r r]
		0 & 0 & 0 & 0 & 0 & 0 & 0 & 0 & 1 & 0 & 1 & 0 & 0 & 1\\
		0 & 0 & 0 & 0 & 0 & 0 & 0 & 1 & 0 & 1 & 0 & 0 & 0 & 1\\
		0 & 0 & 0 & 0 & 0 & 0 & 0 & 0 & 0 & 0 & 0 & 0 & 1 & 0\\
		0 & 0 & 0 & 0 & 0 & 0 & 0 & 0 & 1 & 1 & 0 & 0 & 0 & 1\\
		0 & 0 & 0 & 0 & 0 & 0 & 0 & 1 & 0 & 1 & 0 & 1 & 1 & 1\\
		0 & 0 & 0 & 0 & 0 & 0 & 0 & 0 & 0 & 1 & 0 & 1 & 0 & 0\\
		0 & 0 & 0 & 0 & 0 & 0 & 0 & 0 & 1 & 1 & 1 & 0 & 0 & 1\\
		0 & 1 & 0 & 0 & 1 & 0 & 0 & 0 & 0 & 0 & 0 & 0 & 0 & 0\\
		1 & 0 & 0 & 1 & 0 & 0 & 1 & 0 & 0 & 0 & 0 & 0 & 0 & 0\\
		0 & 1 & 0 & 1 & 1 & 1 & 1 & 0 & 0 & 0 & 0 & 0 & 0 & 0\\
		1 & 0 & 0 & 0 & 0 & 0 & 1 & 0 & 0 & 0 & 0 & 0 & 0 & 0\\
		0 & 0 & 0 & 0 & 1 & 1 & 0 & 0 & 0 & 0 & 0 & 0 & 0 & 0\\
		0 & 0 & 1 & 0 & 1 & 0 & 0 & 0 & 0 & 0 & 0 & 0 & 0 & 0\\
		1 & 1 & 0 & 1 & 1 & 0 & 1 & 0 & 0 & 0 & 0 & 0 & 0 & 0
	\end{pmatrix}$
	
\end{minipage}
\caption{The bipartite double cover of $\gad^*$ with $(8,3)$ as boundary nodes and $(9,4)$ as internal input/output pair. Finding a gadgets with both pairs on the boundary to count arbitrary pn-planar formulas is not possible due to Conjecture \ref{tMain}. However, this gadget is sufficient to count solutions of formulas with incidence graphs that is a forest. }
\label{nestedXORgadgetBDC}
\end{figure}
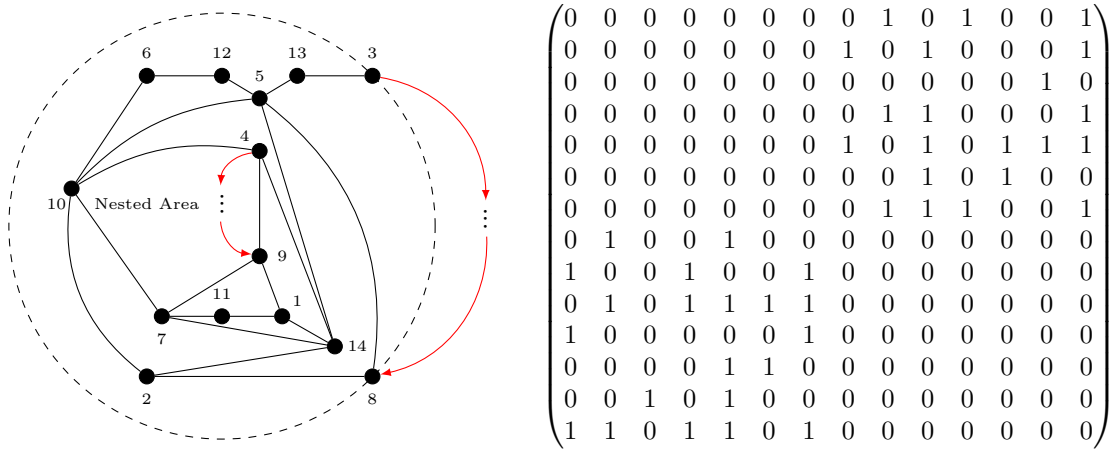
The signature of this XOR-gadget is
\begin{equation}
  S_{\gad^*} = (18, 3, 7, 8, 6, 3) = (0, 0, 1, 2, 0, 0)_3
\end{equation}

The CLAUSE-gadget does not have the same kind of signature like, e.g., the XOR-gadget. It has is a smaller kind of signature, that only care about the good cases, i.e., valid paths that can be taken by a cyclic cover. It must not care about what happens if the cyclic cover enters over input node $i_1$ but leaves about the non associated output node $o_3$. All this is handled by the XOR-gadget. For a CLAUSE-gadget with two outer edges, one could express this as the signature
\begin{equation}
 S = (\neq 0, 0, \neq 0, \neq 0, *, *)
\end{equation}
whereof $*$ stands for an arbitrary value. Similar arguments hold for the VARIABLE-gadget as well. 

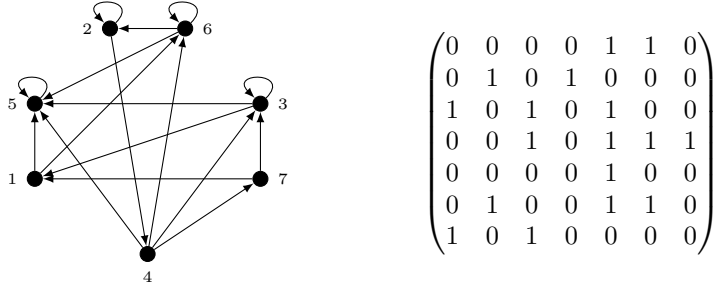
\begin{figure}[h!]
\centering
\begin{minipage}[h]{50mm}
	\begin{tikzpicture}
		[scale=1]
		\tikzstyle{every loop} = [-latex]
		\node[circle, fill, draw, inner sep = 2pt] (n1) at (0,1){}; \node[] at (-0.3,1) {\tiny{$1$}};
		\node[circle, fill, draw, inner sep = 2pt] (n5) at (0,2){}; \node[] at (-0.3,2) {\tiny{$5$}};
		\node[circle, fill, draw, inner sep = 2pt] (n6) at (2,3){}; \node[] at (2.3,3) {\tiny{$6$}};
		\node[circle, fill, draw, inner sep = 2pt] (n2) at (1,3){}; \node[] at (0.7,3) {\tiny{$2$}};
		\node[circle, fill, draw, inner sep = 2pt] (n4) at (1.5,0){}; \node[] at (1.5,-0.3) {\tiny{$4$}};
		\node[circle, fill, draw, inner sep = 2pt] (n7) at (3,1){}; \node[] at (3.3,1) {\tiny{$7$}};
		\node[circle, fill, draw, inner sep = 2pt] (n3) at (3,2){}; \node[] at (3.3,2) {\tiny{$3$}};
		\draw[-latex] (n1) to (n5);
		\draw[-latex] (n1) to (n6);
		
		\draw[-latex] (n2) to [in=135,out=45, loop]  node[auto] {} (n2);
		\draw[-latex] (n2) to (n4);
		
		\draw[-latex] (n3) to (n1);
		\draw[-latex] (n3) to [in=135,out=45, loop]  node[auto] {} (n3);
		\draw[-latex] (n3) to (n5);

		\draw[-latex] (n4) to (n3);
		\draw[-latex] (n4) to (n5);
		\draw[-latex] (n4) to (n6);
		\draw[-latex] (n4) to (n7);

		\draw[-latex] (n5) to [in=135,out=45, loop]  node[auto] {} (n5);

		\draw[-latex] (n6) to (n2);
		\draw[-latex] (n6) to (n5);
		\draw[-latex] (n6) to [in=135,out=45, loop]  node[auto] {} (n6);
		
		\draw[-latex] (n7) to (n1);
		\draw[-latex] (n7) to (n3);
		
	\end{tikzpicture}
\end{minipage}
\begin{minipage}[h]{50mm}
	\centering
	$\begin{pmatrix}[r]
	0 & 0 & 0 & 0 & 1 & 1 & 0\\
	0 & 1 & 0 & 1 & 0 & 0 & 0\\
	1 & 0 & 1 & 0 & 1 & 0 & 0\\
	0 & 0 & 1 & 0 & 1 & 1 & 1\\
	0 & 0 & 0 & 0 & 1 & 0 & 0\\
	0 & 1 & 0 & 0 & 1 & 1 & 0\\
	1 & 0 & 1 & 0 & 0 & 0 & 0
	\end{pmatrix}$
\end{minipage}
\caption{The new CLAUSE-gadget $\gad^{**}$ with its input and output pairs $(1,5), (2,6)$ and $(3,7)$.}
\label{newCLAUSEgadget}
\end{figure}

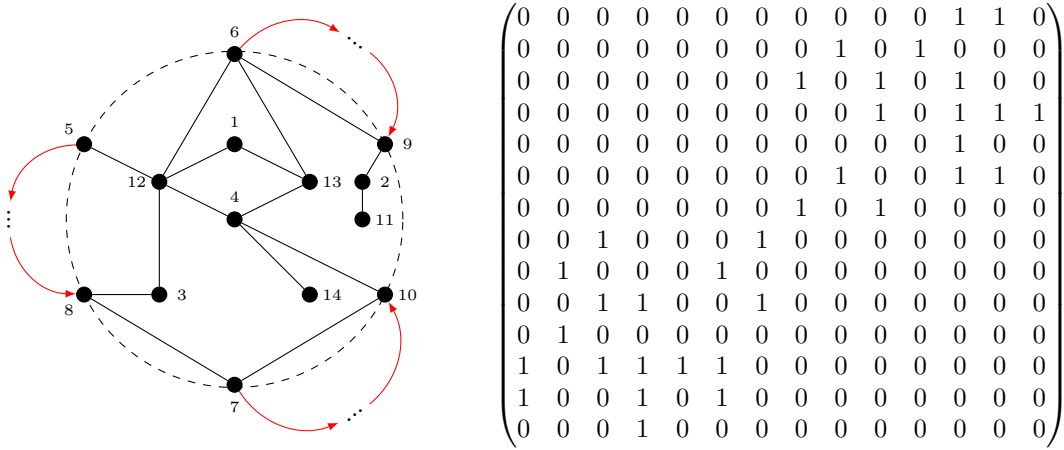
\begin{figure}[h!]
\centering
\begin{minipage}[h]{70mm}
	\centering
	\begin{tikzpicture}
		[scale=1]
		\tikzstyle{every loop} = [-latex]
		\node[circle, draw, inner sep = 45pt, dashed] (circ) at (2,2) {};
		\node[circle, fill, draw, inner sep = 2pt] (n4) at (0,3){}; \node[] at (-0.2,3.2) {\tiny{$5$}};
		\node[circle, fill, draw, inner sep = 2pt] (n5) at (2,4.2){}; \node[] at (2,4.5) {\tiny{$6$}};
		\node[circle, fill, draw, inner sep = 2pt] (n8) at (4,3){}; \node[] at (4.3,3) {\tiny{$9$}};
		
		\node[circle, fill, draw, inner sep = 2pt] (n7) at (0,1){}; \node[] at (-0.2,0.8) {\tiny{$8$}};
		\node[circle, fill, draw, inner sep = 2pt] (n6) at (2,-0.2){}; \node[] at (2,-0.5) {\tiny{$7$}};
		\node[circle, fill, draw, inner sep = 2pt] (n9) at (4,1){}; \node[] at (4.3,1) {\tiny{$10$}};

		\node[circle, fill, draw, inner sep = 2pt] (n0) at (2,3){}; \node[] at (2,3.3) {\tiny{$1$}};
		\node[circle, fill, draw, inner sep = 2pt] (n11) at (1,2.5){}; \node[] at (0.7,2.5) {\tiny{$12$}};
		\node[circle, fill, draw, inner sep = 2pt] (n12) at (3,2.5){}; \node[] at (3.3,2.5) {\tiny{$13$}};
		\node[circle, fill, draw, inner sep = 2pt] (n3) at (2,2){}; \node[] at (2,2.3) {\tiny{$4$}};

		\node[circle, fill, draw, inner sep = 2pt] (n2) at (1,1){}; \node[] at (1.3,1) {\tiny{$3$}};
		\node[circle, fill, draw, inner sep = 2pt] (n13) at (3,1){}; \node[] at (3.3,1) {\tiny{$14$}};

		\node[circle, fill, draw, inner sep = 2pt] (n1) at (3.7,2.5){}; \node[] at (4,2.5) {\tiny{$2$}};
		\node[circle, fill, draw, inner sep = 2pt] (n10) at (3.7,2){}; \node[] at (4,2) {\tiny{$11$}};

		\node[circle, inner sep = 2pt, rotate=90] (help1) at (-1,2) {...};
		\draw[-latex, color=red] (n4) edge[bend right=40] (help1);
		\draw[-latex, color=red] (help1) edge[bend right=40] (n7);

		\node[circle, inner sep = 2pt, rotate=-38] (help2) at (3.6,4.35) {...};
		\draw[-latex, color=red] (n5) edge[bend left=40] (help2);
		\draw[-latex, color=red] (help2) edge[bend left=40] (n8);

		\node[circle, inner sep = 2pt, rotate=38] (help2) at (3.6,-0.6) {...};
		\draw[-latex, color=red] (n6) edge[bend right=40] (help2);
		\draw[-latex, color=red] (help2) edge[bend right=40] (n9);

		\draw[-] (n1) to (n10);
		\draw[-] (n2) to (n7);
		\draw[-] (n3) to (n13);\draw[-] (n3) to (n9);
		\draw[-] (n4) to (n11);
		\draw[-] (n5) to (n11);\draw[-] (n5) to (n8);		
		\draw[-] (n5) to (n12);	
		\draw[-] (n6) to (n7);	
		\draw[-] (n6) to (n9);	
		\draw[-] (n8) to (n1);

		\draw[-] (n11) to (n0);\draw[-] (n11) to (n3);\draw[-] (n11) to (n2);		
		\draw[-] (n12) to (n0);\draw[-] (n12) to (n3);		
		
	\end{tikzpicture}
\end{minipage}
\begin{minipage}[h]{80mm}
	\centering
	
	$\begin{pmatrix}[r r r r r r r r r r r r r r]
		0 & 0 & 0 & 0 & 0 & 0 & 0 & 0 & 0 & 0 & 0 & 1 & 1 & 0\\
		0 & 0 & 0 & 0 & 0 & 0 & 0 & 0 & 1 & 0 & 1 & 0 & 0 & 0\\
		0 & 0 & 0 & 0 & 0 & 0 & 0 & 1 & 0 & 1 & 0 & 1 & 0 & 0\\
		0 & 0 & 0 & 0 & 0 & 0 & 0 & 0 & 0 & 1 & 0 & 1 & 1 & 1\\
		0 & 0 & 0 & 0 & 0 & 0 & 0 & 0 & 0 & 0 & 0 & 1 & 0 & 0\\
		0 & 0 & 0 & 0 & 0 & 0 & 0 & 0 & 1 & 0 & 0 & 1 & 1 & 0\\
		0 & 0 & 0 & 0 & 0 & 0 & 0 & 1 & 0 & 1 & 0 & 0 & 0 & 0\\
		0 & 0 &	1 & 0 &	0 & 0 & 1 & 0 & 0 & 0 & 0 & 0 & 0 & 0\\
		0 & 1 &	0 & 0 &	0 & 1 & 0 & 0 & 0 & 0 & 0 & 0 & 0 & 0\\
		0 & 0 &	1 & 1 & 0 & 0 & 1 & 0 & 0 & 0 & 0 & 0 & 0 & 0\\
		0 & 1 &	0 & 0 &	0 & 0 & 0 & 0 & 0 & 0 & 0 & 0 & 0 & 0\\
		1 & 0 &	1 & 1 &	1 & 1 & 0 & 0 & 0 & 0 & 0 & 0 & 0 & 0\\
		1 & 0 &	0 & 1 &	0 & 1 & 0 & 0 & 0 & 0 & 0 & 0 & 0 & 0\\
		0 & 0 &	0 & 1 &	0 & 0 & 0 & 0 & 0 & 0 & 0 & 0 & 0 & 0
	\end{pmatrix}$
	
\end{minipage}
\caption{The bipartite double cover of the new CLAUSE-gadget with the input output pairs $(8,5),(9,6)$ and $(10,7)$.}
\label{newCLAUSEgadgetBDC}
\end{figure}

\begin{theorem}\label{theorem2c}
 Let $\Phi$ be random boolean formula from $\#\forestdreisat$ with $n$ variables and $m$ clauses and let $\grG$ be the graph that is constructed from $\Phi$ according to Valiant's proof, using the gadgets $\mathfrak{G}^{**}$ (see Figure \ref{clauseGadget}), $\mathfrak{G}_2$ (see Figure \ref{variableGadget}), $\mathfrak{G}^*$ (see Figure \ref{newXORgadget}). Then there exists an algorithm $\mathcal{A}$ that decides $\Phi$'s satisfiability in randomized polynomial time $\mathcal{O}(km^3)$ with an a success probability of 
\begin{equation*}
	\mathsf{Pr}[\mathsf{s} \leftarrow \mathcal{A}(\Phi,k) | \mathsf{1}_{\Phi\;\mathrm{is\;sat}} = \mathsf{s} ] = 1 - \mathsf{s}\frac{1}{3^k} 
\end{equation*}
\end{theorem}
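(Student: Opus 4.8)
The plan is to mirror the proof of Theorem~\ref{theorem2b} almost verbatim for the weight-counting and the randomization, while supplying the one ingredient that Theorems~\ref{theorem2a} and~\ref{theorem2b} had to assume, namely the planarity of $\hat{\grG}$. The algorithm $\mathcal{A}$ is unchanged: build the Valiant graph from $\Phi$ using $\gad^{**}$ (Figure~\ref{newCLAUSEgadget}), $\mathfrak{G}_2$ (Figure~\ref{variableGadget}) and $\gad^{*}$ (Figure~\ref{nestedXORgadget}), form the bipartite double cover $\hat{\grG}$, and return $\mathsf{1}_{\mathsf{s}>0}$ where $\mathsf{s} = \mathsf{PerfMatch}(\hat{\grG}) \bmod 3$. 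The substantive new work is to show that for $\forestdreisat$ inputs the FKT step is always applicable.

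First I would establish planarity of $\hat{\grG}$ unconditionally. By Lemma~\ref{lConnectable} it suffices that each gadget has a circular planar bipartite double cover with its connection nodes on the boundary in the correct order. Proposition~\ref{c2} forbids an XOR-gadget whose BDC is circular planar on \emph{all four} nodes, which is exactly why arbitrary pn-planar inputs cannot be handled. For a forest, however, only one boundary pair per gadget is ever needed: the incidence graph is acyclic, so after rooting each tree one can process the gadgets from the root outward, connecting each gadget to its parent through a single boundary input/output pair and drawing the entire subtree hanging off the second pair inside the gadget's nested region. The nested XOR-gadget $\gad^{*}$ supplies precisely this behaviour --- its BDC (Figure~\ref{nestedXORgadgetBDC}) is connectable over $(8,3)$ with the second pair $(9,4)$ sitting in the nested area --- and the new CLAUSE-gadget $\gad^{**}$ (Figures~\ref{newCLAUSEgadget} and~\ref{newCLAUSEgadgetBDC}) has the analogous nesting property. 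An induction on the forest then produces a planar embedding of $\hat{\grG}$, so Proposition~\ref{propCons} and Theorem~\ref{theoremKastelyen} apply.

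Next I would redo the mod-$3$ weight bookkeeping of Theorem~\ref{theorem2b} with the new signatures $\mathcal{S}_{\gad^{*}} = (0,0,1,2,0,0)_3$ and the CLAUSE-gadget signature $(\neq 0, 0, \neq 0, \neq 0, *, *)$. As in Valiant's argument, any cyclic cover realising a partial assignment that enters an XOR-gadget correctly but leaves it wrongly, or that takes both or neither side, contributes a weight $\equiv 0 \pmod 3$, so every non-satisfying assignment cancels. Every satisfying assignment instead contributes a product over the gadgets of the nonzero residues $\{1,2\}$, which is a unit modulo the prime $3$ and hence nonzero; note that, unlike in Theorem~\ref{theorem2a}, this per-assignment weight is no longer a fixed constant, but it remains invertible mod $3$, which is all the decision procedure needs. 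By Propositions~\ref{prop1} and~\ref{propPermPerf}, $\perm(\mA_\grG) \equiv \mathsf{PerfMatch}(\hat{\grG}) \pmod 3$.

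Finally I would import the randomization of Theorem~\ref{theorem2b} unchanged. Because several satisfying assignments can sum to $0 \pmod 3$, a single evaluation may return $0$ on a satisfiable instance; the restriction algorithm $\mathcal{B}$ that tests $\Phi_{x_i=1}$ repairs this, and for a random formula each trial independently misses satisfiability with probability $1/3$, giving the stated success probability $1 - \mathsf{s}\,3^{-k}$ after $k$ trials. The node count is $O(m)$ (constant-size gadgets, with $n \le 3m$ and $\sum_i n_i \le 3m$), so $\hat{\grG}$ has $O(m)$ nodes and Kasteleyn's cubic algorithm runs in $O(m^3)$ per trial, hence $O(km^3)$ total. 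The main obstacle is the planarity-via-nesting step: one must argue rigorously that the acyclicity of the forest guarantees that no gadget ever needs both connection pairs on the boundary simultaneously, so that $\gad^{*}$'s weaker \emph{connectable} BDC --- which by Conjecture~\ref{tMain} is the best an XOR-gadget can achieve --- already suffices, and that the recursive placement of subtrees inside nested regions never forces a crossing.
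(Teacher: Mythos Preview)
Your proposal is correct and follows essentially the same route as the paper: establish planarity of $\hat{\grG}$ by exploiting the forest structure to nest each subtree inside the previous XOR-gadget's interior region (so only one input/output pair ever needs to be on the boundary, which $\gad^{*}$ provides), then invoke Kasteleyn and import the randomization of Theorem~\ref{theorem2b} together with the $O(m)$ node-count estimate. You are in fact more careful than the paper on one point: with $\mathcal{S}_{\gad^{*}}=(0,0,1,2,0,0)_3$ the per-assignment weight is only a unit mod~$3$ rather than a fixed constant, which the paper's proof does not spell out but which, as you note, is all the decision procedure requires.
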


\begin{proof}
 Note that a formula $\Phi$ from $\forestdreisat$ is always planar. Figure \ref{forestGraph} shows the idea of how to construct such a graph.  W.l.o.g. we assume that the incidence graph of $\Phi$ 
is connected and we start at an arbitrary clause node. Further note that the orientation of the external edges of each gadget (i.e. clockwise or anticlockwise) does not matter, since we can turn each connecting gadget into the correct orientation by flipping it around. This is due to the fact that the formula is from $\forestdreisat$. Otherwise, we would have to find several gadgets for each type (like the gadget $\gad_{\text{RL}}$ and $\gad_{\text{LR}}$) to take care of the orientations.

We now begin to redraw the incidence graph of the formula $\Phi$ using our new graph gadgets. For the current clause node we insert the new CLAUSE-gadget $\gad^{**}$. This CLAUSE-gadget has thee direction and in each direction there is a tree. W.l.o.g. we follow one arbitrary direction. In the example graph, e.g., assume we take the direction from $c_1$ to the variable node $1$. Next, we pick the XOR-gadget $\mathfrak{G}^*$, turn it into the right direction and connect it to the CLAUSE-gadget. For the interior input/output pair of the XOR-gadget, we pick the VARIABLE-gadget $\mathfrak{G}_2$, turn into the right orientation and connect it to represent the variable $1$ in the graph. Following the whole formula $\Phi$ that way and connecting the gadgets as described, allows to obtain a bipartite double cover of $\grG$ and hence to apply Kasteleyn's algorithm. Regarding this gadgets the total graph will be of

\begin{align*}
	m|\mathfrak{G}^{**}| + \sum^n_{i=1}|\mathfrak{G}_2(x_i)| +  3m|\mathfrak{G}^*| 	& = m\cdot 7 + \sum^n_{i=1}(2+2n_i) + 3m\cdot 7 \\
											& = 2n+2\sum^n_{i=1}n_i + 28m \\
											& < 6m + 6m + 28m \\
											& < 40m 
\end{align*}
nodes. The bipartite double cover operations doubles the number of nodes, so we can bound the number of nodes from above via $|\hat{\grG}| \leq 80m$. Finally, Kasteleyn's algorithms runs in time 
that is cubic in the number of nodes, which yields a total running time of our algorithm of $\mathcal{O}(80^3m^3)$, or $\mathcal{O}(m^3)$ for large $m$, for a single evaluation. Applying algorithm $\mathcal{B}$ from the proof of Theorem \ref{theorem2b} yields a total runtime of $\mathcal{O}(km^3)$.

\begin{figure}[h!]
\begin{minipage}[h]{50mm}
	\begin{tikzpicture}
		\centering
		[scale=1]
		\tikzstyle{every loop} = [-latex]
		\node[circle, draw, inner sep = 4pt, thick] (c1) at (0,2){\tiny{$c_1$}}; 
		\node[circle, draw, inner sep = 2pt] (v1) at (0,1){\tiny{$1$}}; 
		\node[circle, draw, inner sep = 4pt, thick] (c2) at (1,1){\tiny{$c_2$}};
		\node[circle, draw, inner sep = 2pt] (v2) at (2,1){\tiny{$2$}};
		\node[circle, draw, inner sep = 2pt] (v3) at (1,0){\tiny{$3$}};
		\node[circle, draw, inner sep = 4pt, thick] (c3) at (1,-1){\tiny{$c_3$}};	
		\draw[-] (c1) to (v1);
		\draw[-] (v1) to (c2);
		\draw[-] (c2) to (v2);
		\draw[-] (c2) to (v3);
		\draw[-] (v3) to (c3);
	\end{tikzpicture}
\end{minipage}
\begin{minipage}[h]{50mm}
	\begin{tikzpicture}
		\centering
		[scale=1]
		\tikzstyle{every loop} = [-latex]
		\node[circle, draw, inner sep = 4pt, thick] (c1) at (0,2){\tiny{$c_1$}}; 
			\node[rectangle, draw, minimum width = 90pt, minimum height = 95pt, rounded corners=2pt] (R1) at (1.2,-0.2) {};
		\node[circle, draw, inner sep = 2pt] (v1) at (0,1){\tiny{$1$}}; 
			\node[rectangle, draw, minimum width = 65pt, minimum height = 90pt, rounded corners=2pt] (R2) at (1.6,-0.2) {};
		\node[circle, draw, inner sep = 4pt, thick] (c2) at (1,1){\tiny{$c_2$}};
			\node[rectangle, draw, minimum width = 22pt, minimum height = 18pt, rounded corners=2pt] (R3) at (2,1) {};
		\node[circle, draw, inner sep = 2pt] (v2) at (2,1){\tiny{$2$}};
			\node[rectangle, draw, minimum width = 27pt, minimum height = 60pt, rounded corners=2pt] (R4) at (1,-0.65) {};
		\node[circle, draw, inner sep = 2pt] (v3) at (1,0){\tiny{$3$}};
			\node[rectangle, draw, minimum width = 23pt, minimum height = 30pt, rounded corners=2pt] (R5) at (1,-1) {};
		\node[circle, draw, inner sep = 4pt, thick] (c3) at (1,-1){\tiny{$c_3$}};	
		\draw[-] (c1) to (v1);
		\draw[-] (v1) to (c2);
		\draw[-] (c2) to (v2);
		\draw[-] (c2) to (v3);
		\draw[-] (v3) to (c3);
	\end{tikzpicture}
\end{minipage}
\caption{On the left is the incidence graph of a small formula from $\forestdreisat$. On the right its drawing using our proof idea. The rectangles illustrates the XOR-gadget that is between each clause and variable gadget regarding the edge it crosses. Because of the forest property, all further gadgets can be nested into the latest XOR-gadget.}
\label{forestGraph}
\end{figure}
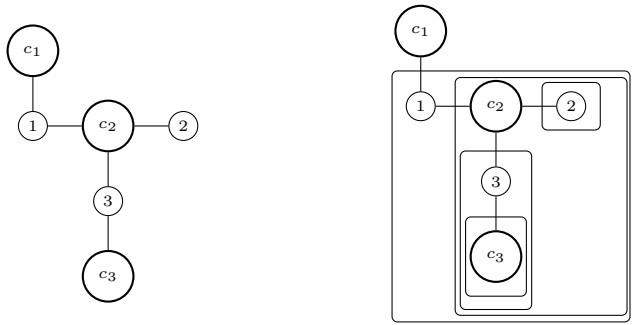

\end{proof}

\section{Concluding remarks} The paper shows up a general approach to count the solutions of a boolean planar formula if one finds appropriate gadgets, i.e. that do not contradict Conjecture \ref{tMain}, to build the Valiant graph. Currently, we only showed that $\#\forestdreisat$ could be solved with this approach, but we only restrict our attention to a graph that is build in the same way as Valiant build its graph in his proof. Are there more special formula classes that could be addressed using another type of Valiant graph? 

The conjecture itself is based on many computations and practical verifications and not at least on the very well believed assumption that $\classRP \neq \classNP$. Proving this conjecture is another open task for the future work in this direction.

\section{Acknowledgements}

Thanks goes to all my colleagues for fruitful discussion about the topic and to Peter Shor, who pointed out to me the existence of the Desnanot-Jacobi identity by answering one of my questions on
\textit{cstheory.stackexchange.com}.

\bibliography{citations}

\begin{thebibliography}{10}

\bibitem{Aar11}
Scott Aaronson.
\newblock {A Linear-Optical Proof that the Permanent is \#P-Hard}.
\newblock {\em To appear in the Proceedings of the Royal Society}, 2011.

\bibitem{San09}
Sanjeev Arora and Boaz Barak.
\newblock {\em Computational Complexity: A Modern Approach}.
\newblock Cambridge University Press, New York, NY, USA, 1st edition, 2009.

\bibitem{Ben93}
Amir Ben-Dor and Shai Halevi.
\newblock Zero-one permanent is \#p-complete, a simpler proof.
\newblock In {\em 2$^{nd}$ Israel Symposium on Theory of Computing Systems},
  pages 108--117, 1993.
\newblock Natanya, Israel.

\bibitem{Bla07}
Markus Blaeser and Holger Dell.
\newblock Complexity of the cover polynomial.
\newblock In {\em 34$^{th}$ International Colloquium on Automata, Languages and
  Programming,(ICALP)}, volume 4596 of {\em Lecture Notes in Computer Science},
  pages 801--812, 2007.
\newblock Wroclaw, Poland.

\bibitem{Cai07}
Jin-Yi Cai and Pinyan Lu.
\newblock {Holographic Algorithms: From Art to Science}.
\newblock In {\em {Proceedings of the 39$^{th}$ Annual ACM Symposium on Theory
  of Computing}}, STOC '07, pages 401--410. ACM, 2007.
\newblock San Diego, California, USA.

\bibitem{Dog66}
Rev. C.~L. Dodgson.
\newblock Condensation of determinants, being a brief method for computing
  their arithmetical values.
\newblock {\em Proceedings of the Royal Society}, 15:150--155, 1966.

\bibitem{Kas67}
P.W. Kasteleyn.
\newblock Graph theory and crystal physics.
\newblock {\em Graph Theory and Theoretical Physics}, pages 43--110, 1967.

\bibitem{Knu90}
Donald Knuth.
\newblock {Nested satisfiability}.
\newblock In {\em {Acta Informatica}}, volume~28, pages 1--6. Springer, 1990.

\bibitem{Kra93}
Jan Kratochvıl and Mirko Krivanek.
\newblock {Satisfiability of co-nested formulas}.
\newblock In {\em {Acta Informatica}}, volume~30, pages 397--403. Springer,
  1993.

\bibitem{Lic82}
D.~Lichtenstein.
\newblock Planar formulae and their uses.
\newblock {\em SIAM Journal of Computing}, 11:329--343, 1982.

\bibitem{Rot96}
Dan Roth.
\newblock {On the Hardness of Approximate Reasoning}.
\newblock In {\em {Artificial Intelligence}}, volume~82, pages 273--302.
  Elsevier, 1996.

\bibitem{Sam10b}
Marko Samer and Stefan Szeider.
\newblock Algorithms for propositional model counting.
\newblock {\em J. Discrete Algorithms}, 8(1):50--64, 2010.

\bibitem{Val79a}
Leslie~G. Valiant.
\newblock The complexity of computing the permanent.
\newblock {\em Theoretical Computer Science}, 8:189--201, 1979.

\bibitem{Val79b}
Leslie~G. Valiant.
\newblock The complexity of enumeration and reliability problems.
\newblock {\em SIAM Journal of Computation}, 8(3):410--421, 1979.

\bibitem{Val06}
Leslie~G. Valiant.
\newblock {Accidental Algorthims}.
\newblock In {\em {47$^{th}$ Annual IEEE Symposium on Foundations of Computer
  Science}}, FOCS '06, pages 509--517, 2006.

\bibitem{Val08}
Leslie~G. Valiant.
\newblock Holographic algorithms.
\newblock {\em SIAM Journal of Computation}, 37(5):1565--1594, 2008.

\bibitem{Val85}
Leslie~G. Valiant and Vijay~V. Vazirani.
\newblock Np is as easy as detecting unique solutions.
\newblock In {\em Proceedings of the 17$^{th}$ Annual ACM Symposium on Theory
  of Computing,}, STOC '85, pages 458--463. ACM, 1985.
\newblock Providence, Rhode Island, USA.

\bibitem{Cai10}
Jin yi~Cai, Pinyan Lu, and Mingji Xia.
\newblock {Holographic Algorithms with Matchgates Capture Precisely Tractable
  Planar \#CSP}.
\newblock In {\em {51$^{th}$ Annual IEEE Symposium on Foundations of Computer
  Science, FOCS 2010}}, FOCS '10, pages 427--436, 2010.
\newblock Las Vegas, Nevada, USA.

\end{thebibliography}
\bibliographystyle{plain}

\end{document}